\newlist{inlinelist}{enumerate*}{1}
\setlist*[inlinelist,1]{%
  label=(\roman*),
}
\newcommand{\cmark}{\ding{51}}%
\newcommand{\xmark}{\ding{55}}%
\definecolor{Black}{HTML}{000000}
\definecolor{Gray}{HTML}{808080}
\definecolor{Magenta}{HTML}{FF00FF}
\definecolor{RubineRed}{HTML}{ED017D}
\definecolor{ForestGreen}{HTML}{028A0F}
\definecolor{MidnightBlue}{HTML}{006795}
\definecolor{Plum}{HTML}{92268F}
\definecolor{listingBG}{HTML}{FFFFCB}%
\definecolor{listingFrame}{HTML}{BBBB98}%
\definecolor{listingLineno}{rgb}{0.5,0.5,1.0}%
\definecolor{LightGrey}{rgb}{0.975,0.975,0.975}
\lstdefinelanguage{txscript}{
	commentstyle=\color{Gray},
	morecomment=[l]{//},
	morecomment=[s]{/*}{*/},
	classoffset=0,
        escapechar=\$,
	morekeywords={if,then,else,contract,skip,require,fun,return,pays,sig,transfer},
	keywordstyle=\color{Plum}\bfseries,
	classoffset=1,
	morekeywords={},
	keywordstyle=\color{MidnightBlue}\bfseries,
}
\newcommand{\ifempty}[3]{%
  \ifthenelse{\isempty{#1}}{#2}{#3}%
}
\newcommand{\ifdots}[3]{%
  \ifthenelse{\equal{#1}{...}}{#2}{#3}%
}
\newcommand{\hidden}[1]{}
\newcommand{\keyterm}[1]{\textbf{\emph{#1}}}%
\newcommand*{\itemequation}[3][]{%
  \item
  \begingroup
    \refstepcounter{equation}%
    \ifx\\#1\\%
    \else  
      \label{#1}%
    \fi
    \sbox0{#2}%
    \sbox2{$\displaystyle#3\m@th$}%
    \sbox4{\@eqnnum}%
    \dimen@=.5\dimexpr\linewidth-\wd2\relax
    % Warning for overlapping
    \ifcase
        \ifdim\wd0>\dimen@
          \z@
        \else
          \ifdim\wd4>\dimen@
            \z@
          \else 
            \@ne
          \fi 
        \fi
      \@latex@warning{Equation is too large}%
    \fi
    \noindent   
    \rlap{\copy0}%
    \rlap{\hbox to \linewidth{\hfill\copy2\hfill}}%
    \hbox to \linewidth{\hfill\copy4}%
    \hspace{0pt}% allow linebreak
  \endgroup
  \ignorespaces 
}
\newcommand{\Real}[1]{\mathrm{Real}}
\newcommand{\codefont}{\fontsize{9}{9}\selectfont}
\newcommand{\code}[1]{{\tt\codefont{#1}}}
\newcommand{\contract}[1]{{\tt\codefont{\txColor{#1}}}}
\newcommand{\txcode}[1]{{\text{\tt\codefont{\txColor{#1}}}}}
\newcommand{\txscriptcode}[1]{\text{\tt\codefont{\color{Plum}{#1}}}}
\newcommand{\braceleft}{\code{\{}}
\newcommand{\braceright}{\code{\}}}
\newcommand{\Eg}{E.g.\@\xspace}
\newcommand{\eg}{e.g.\@\xspace}
\newcommand{\ie}{i.e.\@\xspace}
\newcommand{\wrt}{w.r.t.\@\xspace}
\newcommand{\wwlog}{w.l.o.g.\@\xspace}
\newcommand{\emptyseq}{\varepsilon}
\def\negcaptionspace{\vspace{-10pt}}
\newenvironment{proofof}[2][]{%
  \ifempty{#1}
  {\subsection*{\normalsize Proof of~\Cref{#2}}}
  {\subsection*{\normalsize Proof of~\Cref{#2} ({#1})}}
  \label{#2-proof}
  % Also re-print the theorem statement
  % FIXME: disabled due to apparent bug in thmools + hyperref (Ubuntu 12.04)
  % FIXME: it works without the final star, but references get messed up
  %\csname #1\endcsname
  % \begin{proof}
  }%
\newcommand{\procF}{\txcode{f}}
\newcommand{\op}[3]{\ensuremath{#2~\code{#1}~#3}}
\newcommand{\ifE}[3]{\code{if}~{#1}~\code{then}~{#2}~\code{else}~{#3}}
\newcommand{\nullSem}{\mathit{null}}
\newcommand{\nullE}{\code{null}}
\newcommand{\skipE}{\code{skip}}
\newcommand{\reqE}{\code{require}~}
\newcommand{\hashE}[1]{\code{H}(#1)}
\newcommand{\hashSem}[1]{\ifempty{#1}{H}{H(#1)}}
\newcommand{\secE}[1]{\code{sec}({#1})}
\newcommand{\nonceR}[1][]{r_{#1}}
\newcommand{\rk}[2]{\mathit{rev}\ifempty{#1}{}{({#1},{#2})}}
\def\pmvColor{\color{ForestGreen}}
\newcommand{\pmvFmt}[1]{{\pmvColor{\tt #1}}}
\newcommand{\pmv}[2][]{\pmvFmt{#2}_{\pmvColor{#1}}\xspace}
\newcommand{\pmva}[1][]{{\pmvColor{a_{#1}}}} % user variable
\newcommand{\pmvA}[1][]{\pmv[{#1}]{A}} % user constant
\newcommand{\pmvB}[1][]{\pmv[{#1}]{B}}
\newcommand{\pmvO}[1][]{\pmv[{#1}]{O}} % oracle
\newcommand{\pmvM}{\pmv{M}} % adversary
\newcommand{\PmvA}[1][]{\pmvFmt{\mathcal{A}}_{\pmvColor{#1}}} % set of users
\newcommand{\PmvAfin}{\pmvFmt{\mathcal{A}}_{\pmvColor{0}}} % finite set of users
\newcommand{\PmvAi}[1][]{\pmvFmt{\mathcal{A}'_{\pmvColor{\rm #1}}}} % set of users
\newcommand{\PmvB}[1][]{\pmvFmt{\mathcal{B}}_{\pmvColor{#1}}} % set of users
\newcommand{\PmvBi}[1][]{\pmvFmt{\mathcal{B}'_{\pmvColor{#1}}}} % set of users
\newcommand{\PmvC}[1][]{\pmvFmt{\mathcal{C}}_{\pmvColor{#1}}} % set of users
\newcommand{\PmvU}[1][]{\pmvFmt{\mathbb{A}}_{#1}} % universe of users
\newcommand{\txscript}{\textsc{TxScript}\xspace}
\def\txColor{\color{MidnightBlue}}
\newcommand{\txFmt}[1]{{\txColor{\sf #1}}}
\newcommand{\tx}[2][]{\txFmt{#2}_{\txColor{#1}}}
\newcommand{\txT}[1][]{\tx[#1]{X}} % transaction
\newcommand{\txY}[1][]{\tx[#1]{Y}} % transaction
\newcommand{\txTi}[1][]{\txFmt{X'_{\txColor{{\it #1}}}}}
\newcommand{\TxT}[1][]{\tx[#1]{\mathcal{X}}} % set of transactions
\newcommand{\TxTfin}{\tx[0]{\mathcal{X}}} % finite set of transactions
\newcommand{\TxTi}[1][]{\tx[#1]{\mathcal{X}'}} % set of transactions
\newcommand{\TxY}[1][]{\tx[#1]{\mathcal{Y}}} % set of transactions
\newcommand{\TxZ}[1][]{\tx[#1]{\mathcal{Z}}} % set of transactions
\newcommand{\TxTS}[1][]{\vec{\tx[#1]{\mathcal{X}}}} % sequence of transactions
\newcommand{\TxTiS}[1][]{\vec{\tx[#1]{\mathcal{X'}}}} % sequence of transactions
\newcommand{\TxYS}[1][]{\vec{\tx[#1]{\mathcal{Y}}}} % sequence of transactions
\newcommand{\TxU}[1][]{\tx[#1]{\mathbb{X}}} % universe of transactions
\newcommand{\balance}[1]{\texttt{balance}\ifempty{#1}{}{({#1})}}
\newcommand{\txsig}[1]{{#1}\;\texttt{sig}}
\newcommand{\txin}[3]{{#1}\;\texttt{pays}\;{#2\!:\!#3}}
\newcommand{\txout}[3]{\texttt{transfer}({#1},{#2\!:\!#3})}
\DeclareMathAlphabet{\mathbfsf}{\encodingdefault}{\sfdefault}{bx}{n}
\newcommand{\irule}[2]{\dfrac{#1}{#2}}
\newcommand\altxrightarrow[2][0pt]{\mathrel{\ensurestackMath{\stackengine%
  {\dimexpr#1-4.5pt}{\xrightarrow{\phantom{#2}}}{\scriptstyle\!#2\,}%
  {O}{c}{F}{F}{S}}}}
\newcommand{\sem}[2][]{\mbox{\ensuremath{\llbracket{#2}\rrbracket_{#1}}}}
\newcommand{\semcmd}[3][]{\mbox{\ensuremath{\langle{#2,#3}\rangle_{#1}}}}
\newcommand{\semexp}[2][]{\sem[#1]{#2}}
\newcommand{\Nat}{\mathbb{N}}
\newcommand{\powset}[1]{2^{#1}}
\newcommand{\bind}[2]{\nicefrac{#2}{#1}}
\newcommand{\setenum}[1]{\{#1\}}
\newcommand{\setcomp}[2]{\left\{{#1} \,\middle|\, {#2}\right\}}
\newcommand{\subseteqfin}{\subseteq_{\it fin}}
\newcommand{\idxfun}[1]{\mathbf{1}_{#1}}
\newcommand{\wmvA}[1][]{w_{#1}}
\newcommand{\WmvA}[1][]{W_{#1}}
\newcommand{\WmvAi}[1][]{W'_{#1}}
\newcommand{\WmvAii}[1][]{W''_{#1}}
\newcommand{\wal}[2]{{\mathit{\omega}_{#1}\ifempty{#2}{}{({#2})}}}
\newcommand{\WalU}[1][]{\mathbb{W}_{#1}} % universe of wallet states
\newcommand{\finwalU}{\mbox{$\Nat^{(\TokU)}$}} % finite-support 
\newcommand{\waltok}[2]{#1\!:\!#2}
\newcommand{\walenum}[1]{[#1]}
\newcommand{\walpmv}[2]{{#1}\walenum{#2}}
\newcommand{\walu}[3]{\walpmv{#1}{\waltok{#2}{#3}}}
\newcommand{\waldistrarrow}[1]{\approx_{\$}}
\newcommand{\waldistr}[4]{{#2} {\;\ifempty{#1#3}{}{{#1}} \approx_{\$} {#3}} {#4}\;}
\newcommand{\wealth}[2]{\$\ifempty{#2}{}{\mathit{\omega}_{#1}({#2})}}
\newcommand{\gain}[3]{\mathit{\gamma}_{#1}\ifempty{#2}{}{({#2},{#3})}}
\newcommand{\mall}[2]{{\kappa_{#1}\ifempty{#2}{}{({#2})}}}
\newcommand{\mev}[3]{{\mathrm{MEV\!}_{#1}\ifempty{#1#2#3}{}{\ifempty{#2#3}{}{({#2\ifempty{#3}{}{,#3}})}}}}
\newcommand{\badmev}[3]{{\mathrm{MEV^{\rm bad}\!}_{#1}\ifempty{#1#2#3}{}{\ifempty{#2#3}{}{({#2\ifempty{#3}{}{,#3}})}}}}
\newcommand{\auth}[1]{{\mathit{\alpha}\ifempty{#1}{}{({#1})}}}
\newcommand{\SysU}[1][]{\mathbb{S}_{#1}} % universe of blockchain states
\newcommand{\ContrS}[1][]{C_{#1}} % contract state
\newcommand{\ContrU}[1][]{\mathbb{C}_{#1}} % universe of contract states
\newcommand{\qedex}{\ensuremath{\diamond}}
\definecolor{LightGrey}{rgb}{0.95,0.95,0.95}
\definecolor{keyword}{HTML}{7F0055}
\def\tokColor{\color{Magenta}}
\newcommand{\tokFmt}[1]{{\tokColor{\tt #1}}}
\newcommand{\tok}[2][]{\tokFmt{#2}_{\tokColor{#1}}\xspace}
\newcommand{\tokt}[1][]{{\tokColor{t_{#1}}}}    % token variable
\newcommand{\tokT}[1][]{\tok[{#1}]{T}}    % token constant
\newcommand{\tokTi}[1][]{\tok[{#1}]{T'}}
\newcommand{\TokU}{\tokFmt{\mathbb{T}}} % universe of all tokens
\newlength\replength
\newcommand\repfrac{.1}
\newcommand\rulewidth{.6pt}
\newcommand\tdashfill[1][\repfrac]{\cleaders\hbox to \replength{%
  \smash{\rule[\arraystretch\ht\strutbox]{\repfrac\replength}{\rulewidth}}}\hfill}
\newcommand\tdotfill[1][\repfrac]{\cleaders\hbox to \replength{%
  \smash{\raisebox{\arraystretch\dimexpr\ht\strutbox-.1ex\relax}{.}}}\hfill}
\newcommand{\VarU}{{\contrColor{\mathbf{Var}}}} % universe of all variables
\newcommand{\VarUP}{{\contrColor{\mathbf{PVar}}}} % universe of arguments
\newcommand{\VarUS}{{\contrColor{\mathbf{SVar}}}} % universe of state variables
\newcommand{\ValU}{{\contrColor{\mathbf{Val}}}} % universe of all values
\newcommand{\ValUB}{{\contrColor{\mathbf{BVal}}}} % universe of base values
\newcommand{\true}{\mathit{true}}
\newcommand{\false}{\mathit{false}}
\def\contrColor{\color{MidnightBlue}}
\newcommand{\contrFmt}[1]{{\contrColor{\code{#1}}}}
\newcommand{\contrC}[1][]{\mathord{\contrFmt{C}_{\contrColor{#1}}}}
\newcommand{\contrCi}[1][]{\mathord{\contrC[#1]\contrColor{'}}}
\newcommand{\contrAdvC}[2]{\mathcal{C}} % computational contract advertisement
\def\sysColor{\color{Black}}
\newcommand{\sysFmt}[1]{{\sysColor{#1}}}
\newcommand{\sysS}[1][]{\mathord{\sysFmt{S}_{\sysColor{#1}}}}
\newcommand{\sysR}[1][]{\mathord{\sysFmt{R}_{\sysColor{#1}}}}
\newcommand{\sysSi}[1][]{\mathord{\sysColor{\sysS'_{#1}}}}
\newcommand{\sysRi}[1][]{\mathord{\sysColor{\sysR'_{#1}}}}
\newcommand{\sysSii}[1][]{\mathord{\sysColor{\sysS''_{#1}}}}
\newtheorem{thm}{Theorem}
\newtheorem{lem}{Lemma}
\newtheorem{prop}{Proposition}
\newtheorem{cor}{Corollary}
\newtheorem{defn}{Definition}
\newcommand{\qedhere}{\qed}
\renewcommand\paragraph{\@startsection{paragraph}{4}{\z@}%
  {2.25ex \@plus 1ex \@minus .2ex}%
  {-0.75em}%
  {\normalfont\normalsize\bfseries}}
\begin{document}

\title{A theoretical basis for MEV}

\iftoggle{anonymous}{}{
\author{Massimo Bartoletti\inst{1},
Roberto Zunino\inst{2}}
\institute{
Universit\`a degli Studi di Cagliari, Cagliari, Italy
\and
Università di Trento, Trento, Italy
}
}
% \authorrunning{Bartoletti et al.}

% \keywords{Blockchain, smart contracts, transaction-order dependency, MEV, Maximal Extractable Value}

\maketitle

\begin{abstract}
  Maximal Extractable Value (MEV) refers to a wide class of economic attacks
  to public blockchains,
  where adversaries with the power to reorder, drop or insert transactions in a block
  can ``extract'' value from smart contracts.
  Empirical research has shown that mainstream DeFi protocols
  are massively targeted by these attacks,
  with detrimental effects on their users and on the blockchain network.
  Despite the increasing real-world impact of these attacks, their theoretical foundations remain insufficiently established.
  % Despite the growing impact of these attacks in the real world,
  % theoretical foundations are still not entirely solid.
  We propose a formal theory of MEV,
  based on a general, abstract model of blockchains and smart contracts.
  Our theory is the basis for proofs of security
  against MEV attacks.
\end{abstract}
\section{Introduction}

% Public blockchains are decentralized protocols to create and distribute crypto-assets, possibly through programmable rules dubbed ``smart'' contracts.
% Blockchain protocols are executed by a P2P networks of nodes,
% which collect transactions sent by users in a \emph{mempool},
% sort them into blocks, and sequentially process blocks to update
% the contracts state and the wallets of crypto-assets owned by users.
%
Most blockchain protocols delegate the construction of blocks to
consensus nodes that can freely pick users' transactions
from the \emph{mempool}, possibly add their own,
and propose blocks containing these transactions in a chosen order.
This arbitrariness in block construction can turn consensus nodes into adversaries, which exploit their transaction-ordering powers to maximize their gain at the expense of users.
In the crypto jargon these attacks are referred to as ``extracting'' value,
and the adversaries' gain is called \emph{Maximal Extractable Value}, or MEV.

This issue is not purely theoretical:
indeed, mainstream DeFi protocols like Automated Market Makers
and Lending Pools are common targets of MEV attacks,
which overall have led to attacks worth more than 1.2 billion dollars
so far~\cite{mevexplore}.
Notably, the profits derived from MEV attacks largely exceed
those given by block rewards and transaction fees~\cite{Daian20flash}.

MEV attacks are so profitable that currently 
most Ethereum blocks proposals
are due to centralized private relay networks that outsource
the identification of MEV opportunities to anyone,
and use their large networks of validators to include the MEV-extracting transactions in blocks~\cite{Wahrstatter23arxiv,Oz24aft}.
While this systematic MEV extraction has some benefits
(\eg, it has decreased transaction fees for users
at the expense of MEV seekers~\cite{Weintraub22flashbot}),
it is detrimental to blockchain decentralization,
transparency, and network congestion~\cite{Qin21quantifying}.
% and disproportion of profit between miners and non-miners~\cite{Weintraub22flashbot}.

Given the practical relevance of MEV, various research efforts have focused on improving its understanding.
% Given the practical relevance of MEV, several research
% efforts have been made to improve its understanding.
Most approaches are preeminently empirical,
and focus on heuristics to extract value
from certain types of contracts~\cite{Eskandari19sok,Zhou21high,Kulkarni22arxiv,BCL22fc,BabelJ0KKJ23ccs},
on the quantification of their impact in the wild
\cite{Qin21fc,Torres21frontrunner,Werner22aft,Zhou21discovery,Torres24ccs},
or on techniques to mitigate MEV attacks
\cite{breidenbach2017hydra,Baum21iacr,Heimbach22aft,baump2dex,Ciampi22cscml,Canidio24mansci,Babel24arxiv}.
All these works, however, do not answer one fundamental question:
\emph{what is MEV, exactly?}
This contrasts with fundamental principles of modern cryptography,
where formal definitions of security properties and of
adversaries' powers are essential to the study of cryptographic schemes.
In the absence of a rigorous definition, it is impossible to prove
that a contract is \emph{MEV-free}, \ie, secure \wrt MEV attacks.
Formalizing MEV is challenging, 
as it requires a complex characterization of the adversary:
\begin{inlinelist}
\item as an entity who can control the construction of blocks, 
  where they can craft and insert their own transactions;
\item whose actual identity and current wealth are immaterial \wrt MEV extraction.
\end{inlinelist}
This complexity requires a comprehensive formalization
of the adversary powers, knowledge, and their MEV attacks.
Existing MEV definitions~\cite{Babel23clockwork,Salles21formalization,Mazorra22price}
are partial (\eg, they do not formalize adversarial knowledge),
and they wrongly classify
some types of contracts (see~\Cref{sec:related}).

%% A general, formal definition of MEV
%% and the construction of a rich theoretical apparatus supporting the definition
%% are essential prerequisites to the design of \emph{MEV-free} contracts
%% that are provably secure against MEV attacks.

\paragraph*{Contributions}

We summarise our contributions as follows:
\begin{itemize}
  
\item An abstract model of contracts, 
  equipped with key economic notions like wealth and gain.
 % (\Cref{sec:contracts}).
  We keep our model general in order to make it applicable to different  blockchains and contract languages.
  % , allowing for a notion of MEV that is applicable to different blockchain platforms;
  % We concretise our abstract model as a bare-bone
  % procedural language inspired by Solidity,
  % the leading language to write contracts;
  
\item An adversary model that includes a formalization of the adversaries' knowledge,
  \ie, the transactions that they
  can deduce by combining their private knowledge % (\eg, secret keys)
  with that of the mempool.
  This improves over~\cite{Babel23clockwork,Salles21formalization,Mazorra22price},
  where adversaries can craft blocks either
  by using their own knowledge or
  playing verbatim the transactions in the mempool,
  but cannot combine these two sources of information,
  so losing some potential attacks.

\item A formal definition of the MEV extractable by a \emph{given} set of users.
  We give this definition in two variants, depending on whether extracting MEV requires or not to exploit transactions in the mempool. 
  This allows us to distinguish between the ``legit'' MEV that is intended in normal contract interactions, and the ``bad'' MEV that is not.
  To the best of our knowledge, this is the first work that attempts to separate ``legit'' MEV from ``bad'' MEV.
  We show that our definitions capture new attacks, not covered by~\cite{Babel23clockwork,Salles21formalization,Mazorra22price}.
  % (\Cref{sec:tx-inference});

\item A formal definition of \emph{universal} MEV,
  \ie, the maximal gain that can be achieved
  by \emph{any} adversary, regardless of their actual identity
  and current wealth.
  Our MEV definition has a game-theoretic flavour:
  honest players try to minimize the damage,
  while adversaries try to maximize their gain.
  We support our notion through a theoretical study of its main
  properties, like monotonicity and finiteness;
  % (\Cref{sec:adv-mev});
  % regardless of their actual identity, as it happens for consensus nodes.

\item Proofs for MEV-freedom:
  we assess our MEV theory
  on real-world contracts such as 
  crowdfunding, bounties, AMMs, and Lending Pools.
  % In particular, we establish the MEV-freedom of
  % Bank, Crowdfunding, and Bounty contracts.
  % (\Cref{sec:examples});

%% \item a detailed discussion and comparison of our notion \wrt other
%%   formalization attempts~\cite{Babel23clockwork,Salles21formalization,Mazorra22price}
%%   (\Cref{sec:related}).
  
\end{itemize}

Overall, our formalization is a necessary first step towards the
construction of analysis tools for the MEV-freedom of contracts. 
\iftoggle{arxiv}
{We provide our benchmark of use cases, additional results and the proofs of our statements in the appendix.}
{Because of space constraints, we provide our benchmark of use cases, additional results and the proofs of all our statements in 
a separated technical report~\cite{BZ25arxiv}.}
\section{Blockchain model}
\label{sec:contracts}

We introduce below a formal model for reasoning about MEV. Aiming at generality and agnosticism of actual blockchains, rather than providing a concrete contract language we abstractly model contracts as state transition systems.
% All our definitions and results apply to this general model.
% To write down examples throughout the paper,
% in~\Cref{sec:txscript} we will instantiate our general model
% into a contract language inspired by Solidity,
% the main language used in Ethereum and other platforms.

% \subsection{Basic definitions}

We assume a countably infinite set $\PmvU$ of \emph{actors}
($\pmvA, \pmvB, \ldots$),
a set $\TxU$ of \emph{transactions} ($\txT, \txTi, \ldots$),
and
a set $\TokU$ of \emph{token types} ($\tokT, \tokTi, \ldots$).
Actors are active entities, such as honest contract users or adversaries trying to extract MEV.
% exchange their tokens by sending transactions to the contract.
% Token types can be both native cryptocurrencies and custom tokens.
We assume that tokens are \emph{fungible},
\ie units of the same type are interchangeable;
NFTs are a special case of fungible tokens
that are minted in a single indivisible unit.
% namely, amounts of tokens of the same type
% can be split into smaller parts,
% and two amounts of tokens of the same type can be joined.%

% We adopt a uniform notation for sets and sequences of actors and transactions:
We use calligraphic uppercase letters for sets
(\eg, a set of actors $\PmvA$, a set of transactions $\TxT$),
and bold calligraphic uppercase for finite sequences
(\eg, a sequence of transactions $\TxTS$).
% Throughout the paper,
% We assume that sequences are always finite.
Pointwise sum of functions is denoted by $+$ and~$\sum$.
% we write $\waltok{n}{\tokT}$ to denote $n$ units of a token type $\tokT$. 

We model the token holdings of a set of actors
as a \emph{wallet} \mbox{$\wmvA : \TokU \rightarrow \Nat$},
\ie, a map from token types to non-negative integers.
% We use $n, n'$ to range over natural numbers.
A \emph{wallet state}
\mbox{$\WmvA : \PmvU \rightarrow (\TokU \rightarrow \Nat)$}
maps each actor to a wallet.
We write $\WmvA(\PmvA)$ for $\sum_{\pmvA \in \PmvA} \WmvA(\pmvA)$,
\ie, the pointwise addition of the wallets of the actors in $\PmvA$.
% We denote by $\WalU$ the universe of wallet states
% satisfying~\eqref{eq:finite-tokens}.
A \emph{blockchain state} consists of a wallet state and a contract state.
For the general results in this paper, the actual structure of contract states is immaterial, and therefore we do not specify it in~\Cref{def:contract}.
When reasoning about specific contracts, we will make this structure explicit, including data and tokens.  
% To be general, we do not further structure contract states: 
% implicitly assuming they comprise contract data and balance.
We model contracts as transition systems between blockchain states, with transitions triggered by transactions.
% The data handled by contract (possibly including its wallet) are completely determined by this transition system.

\begin{defn}[Contract]
  \label{def:contract}
  A contract is a triple made of:
  \begin{itemize}

  \item $\SysU = \ContrU \times \WalU$, a set of \emph{blockchain states},
    where $\ContrU$ is a set of \emph{contract states}, and $\WalU$ is a set of wallet states;
  \item  $\xmapsto{\phantom{\txT}} \, : (\SysU \times \TxU) \to \SysU$,
    a partial transition function that, given a blockchain state and a transaction, gives the next blockchain state;
    
  \item $\SysU[0] \subseteq \SysU$, a set of initial blockchain states.
 
  \end{itemize}

  We denote by $\wal{}{\sysS}$ the wallet state of a blockchain state $\sysS$,
  and with $\wal{\pmvA}{\sysS}$ the wallet of an actor $\pmvA$ in $\sysS$,
  \ie, $\wal{\pmvA}{\sysS} = \wal{}{\sysS}(\pmvA)$.
\end{defn}

A transaction $\txT$ is \emph{valid} in a state $\sysS$ if $\sysS$
has an outgoing transition $\xmapsto{}$ labelled $\txT$.
In real-world contract platforms like Ethereum, invalid transactions can be included in blocks, but have no effect on the state of the contract.
%\footnote{In some blockchain platforms (\eg, Ethereum)
%  invalid transactions have a side effect,
%  in that they still pay the miners' fees.
%  Our model abstracts from the fees, so we just make invalid transactions transparent to state transitions.}
To model this behaviour, we transform $\xmapsto{\;\;}$ 
into a (deterministic and) total relation
\mbox{$\xrightarrow{} \; : (\SysU \times \TxU^{*}) \rightarrow \SysU$}
between blockchain states;
$\xrightarrow{}$ is labelled with \emph{sequences} of transactions.
For an empty sequence $\emptyseq$, we 
let $\sysS \xrightarrow{\emptyseq} \sysS$,
\ie, doing nothing has no effect.
For a non-empty sequence $\TxYS \txT$, we let
% $\sysS \xrightarrow{\TxYS \txT} \sysSi$
$\sysS \altxrightarrow[-4pt]{\TxYS \txT} \sysSi$ when either:
\[
\sysS \altxrightarrow[-3pt]{\TxYS} \sysSii 
\text{ for some $\sysSii$, and $\sysSii \xmapsto{\txT} \sysSi$}
\quad \text{or} \quad
\sysS \altxrightarrow[-3pt]{\TxYS} \sysSi 
\text{ and $\txT$ is not valid in $\sysSi$}
\]

% We write $\sysS \xrightarrow{}^* \sysSi$ when
% $\sysS \xrightarrow{\TxTS} \sysSi$, for some $\TxTS$.
A state $\sysS$ is \emph{reachable} if 
$\sysS[0] \altxrightarrow[-4pt]{\TxTS} \sysS$ for some initial state $\sysS[0]$ and sequence $\TxTS$.
We implicitly assume that all the states mentioned in our results are reachable.

Our model does not allow actors to freely exchange tokens, but this is not a limitation: 
if desired, this can be encoded in the contract transition function.
This is coherent with how Ethereum handles \eg, ERC20 tokens,
whose transfer capabilities are programmed in the contract.

% \subsection{Basic properties of wallets}

Our model is quite general, and also includes some behaviors that are not meaningful in practice: \eg, there may be states where the
total amount of tokens in actors' wallets is \emph{infinite}.
% We want to constrain contracts to rule out this case.
To rule out these cases, we require the following \emph{finite tokens axiom}, ensuring that the overall amount of tokens in wallets is finite:%
\footnote{The finite tokens axiom applies only to the wallet
state, while the tokens stored within the contract are unconstrained.
Our theory works fine under this milder hypothesis, 
since to reason about MEV we do not need to count the tokens
within the contract, but only the gain of actors.}
\begin{equation}
  \label{eq:finite-tokens}
  % \forall \WmvA \in \WalU . \
  \textstyle
  \sum_{\tokT} \WmvA(\PmvU) (\tokT) \in \Nat
\end{equation}

As a consequence of the axiom, $\wal{\PmvA}{\sysS}$ has finite support for all $\PmvA$ and $\sysS$.
Hereafter, we denote by $\finwalU$ the set of finite-support functions from $\TokU$ to $\Nat$.

% \subsection{Wealth and Gain}

Measuring the effect of an attack to a contract
requires to estimate the \emph{wealth} of the adversary
before and after the attack.
%Defining the wealth of $\pmvA$ in $\sysS$ as the sum of all the tokens in $\wal{\pmvA}{\sysS}$
% would be inaccurate,
To account for the fact that different token types can have different prices, we assume an additive function $\wealth{}{}$ that, given a wallet $\wmvA$, determines its \emph{wealth} $\wealth{}{}\wmvA$.%
\footnote{Note that $\pmvA$'s wealth only depends on $\pmvA$'s wallet, neglecting other parts of the state.
  As a result, actors with the same tokens have the same wealth,
  and wealth is insensitive to price fluctuations.  
  This is because our notion of MEV is designed
  to capture attacks occurring in a \emph{single} block.
  We discuss long-range attacks in~\Cref{sec:limitations}.}

\begin{defn}[Wealth]
  \label{def:wealth}
  We say that $\wealth{}{} : \finwalU \rightarrow \Nat$
  is a \emph{wealth function} if
  $\wealth{}{}(\wmvA[0] + \wmvA[1]) = \wealth{}{}\wmvA[0] + \wealth{}{}\wmvA[1]$  
  holds for all $\wmvA[0], \wmvA[1] \in \finwalU$.
\end{defn}

%% \begin{prop}
%%   \label{prop:wealth:PmvA}
%%   For all (possibly infinite) $\PmvA$ and $\sysS$,
%%   $\wealth{\PmvA}{\sysS}$ is defined and has a finite value.
%% \end{prop}

Let $\idxfun{\tokT}$ be the wallet that contains exactly one token of type $\tokT$.
Then, we can write any wallet $\wmvA$ as the (potentially infinite) sum:
\begin{equation}
  \label{eq:wealth:token-prices:wal}
  \textstyle
  \wmvA \; = \; \sum_{\tokT} \wmvA(\tokT) \cdot \idxfun{\tokT}
\end{equation}

% (\ie, $\idxfun{\tokT}(\tokTi) = 1$ if $\tokTi = \tokT$
% and $\idxfun{\tokT}(\tokTi) = 0$ otherwise).
If $\wmvA$ has finite support,
then the sum in~\eqref{eq:wealth:token-prices:wal}
has only a finite number of non-zero terms.
From additivity of $\wealth{}{}$, it follows that the wealth of a (finite-support) $\wmvA$ is the sum of the amount of each token
$\tokT$ in $\wmvA$, times the \emph{price} of $\tokT$, \ie, $\wealth{}{}\idxfun{\tokT}$:
\begin{equation}
  \label{eq:wealth:token-prices:wealth}
  \wealth{}{} \wmvA
  \; = \;
  \textstyle \sum_{\tokT} \wealth{}{} \big( \wmvA(\tokT) \cdot \idxfun{\tokT} \big)
  \; = \;
  \textstyle \sum_{\tokT} \wmvA(\tokT) \cdot \wealth{}{}\idxfun{\tokT}
\end{equation}

% We assume that token prices are strictly positive.

% The wealth $\wealth{\pmvA}{\sysS}$ represents the value of tokens
% immediately availabile in $\pmvA$'s wallet in $\sysS$,
% not taking into account
% the tokens that $\pmvA$ can redeem from the contract.

We measure the success of a MEV attack in terms of \emph{gain},
\ie, the difference of the attackers' wealth before and after the attack.

\begin{defn}[Gain]
  \label{def:gain}
  We define the gain of $\PmvA$ upon performing
  a sequence $\TxTS$ of transactions from state $\sysS$ as
  \(
  \gain{\PmvA}{\sysS}{\TxTS}
  =
  \wealth{\PmvA}{\sysSi} - \wealth{\PmvA}{\sysS}
  \)
  if
  $\sysS \xrightarrow{\TxTS} \sysSi$.
\end{defn}

The following proposition establishes some basic properties of gain.
In particular, the maximal gain can always be extracted by a \emph{finite} set of actors. 

\begin{prop}
  \label{prop:gain}
  % For all (possibly infinite) $\PmvA$
  % for all $\sysS$ and $\TxTS$,
  $\gain{\PmvA}{\sysS}{\TxTS}$ is always defined and has a finite integer value, given by
  \(
  \gain{\PmvA}{\sysS}{\TxTS} 
  = 
  \textstyle \sum_{\pmvA \in \PmvA} \gain{\pmvA}{\sysS}{\TxTS}
  \).
  Furthermore, there exists 
  $\PmvAfin \subseteqfin \PmvA$ (finite subset of $\PmvA$) such that,
  for all $\PmvB$, if $\PmvA[0] \subseteq \PmvB \subseteq \PmvA$ then 
  $\gain{\PmvAfin}{\sysS}{\TxTS} = \gain{\PmvB}{\sysS}{\TxTS}$.
 \end{prop}

\section{Maximal Extractable Value}
\label{sec:mev}

\iftoggle{arxiv}{%
In this~\namecref{sec:mev} we introduce our definition of MEV.%
\footnote{In the original terminology introduced by~\cite{Daian20flash} the ``M'' of MEV stands for ``miner''. 
However, following Ethereum's transition from proof-of-work to proof-of-stake in September 2022 --- and the corresponding renaming of miners into validators --- most sources now interpret ``M'' as ``maximal''.}
}{}
Formalizing MEV is challenging, because
it requires a twofold characterization of the adversary as
a set of actors with the ability to reorder,
drop and insert transactions, and
whose actual identity and wealth are immaterial to MEV extraction.
We then find it convenient to divide our formalization into three steps:
\begin{enumerate}

\item We define the set of transactions $\mall{\PmvA}{\TxT}$
  that actors $\PmvA$ can \emph{deduce}
  by combining their private knowledge with that of the mempool~$\TxT$ (Def.~\ref{def:mall}).
  
\item We define the MEV 
  of a \emph{given} set of actors $\PmvA$
  in a state $\sysS$ and mempool $\TxT$
  as the \emph{maximal gain} that $\PmvA$ can achieve from $\sysS$
  by firing a sequence of transactions in their deducible knowledge
  $\mall{\PmvA}{\TxT}$ (Def.~\ref{def:mev}).
  We also provide a variant, dubbed ``bad'' MEV, which models the case where the attack is only possible by exploiting transactions in the mempool.
  
\item We define the \emph{universal} MEV
  in a state $\sysS$ and mempool $\TxT$ as the MEV that an \emph{arbitrary} set of actors
  can achieve regardless of identity,
  only assuming they have (or can buy) the tokens needed to carry out the attack (Def.~\ref{def:adv-mev}).

% \item finally, we evaluate our theory by analysing the MEV-freedom of a benchmark of relevant use cases.

\end{enumerate}

% Just to express them, we will use
% a simple contract language inspired by Solidity.
% We stress that all our results apply to the general,
% abstract model, and not only to this simplfied language.

\subsection{Adversary model}
\label{sec:tx-inference}

% Extracting MEV requires the adversary to fire a sequence of transactions
% to increase her gain.

% We formalise below the set of transactions that a set of actors
% can deduce using their private knowledge and that of $\TxT$.

% Based upon this notion, we define
% in \S\ref{sec:mall:adv-runs}
% the \emph{adversarial runs}, \ie, the sequences of blocks that
% an adversary can construct by applying deducibility to
% the transactions sent by honest actors.
% We then define the \emph{authoriser} of $\TxT$ as
% the \emph{least} set of actors who can deduce $\TxT$ using only their private knowledge.
%% REMOVED
%% Transaction deducibility allows us to formalise two key economic notions,
%% which are instrumental to define MEV:
%% the \emph{unrealized gain}
%% measures the maximal gain that a set of actors can
%% obtain by only using their private knowledge,
%% assuming no interference with other actors;
%% the \emph{net gain} of a sequence of transactions 
%% is the gain obtained by firing it, net the unrealized gain.

% In concrete contract platforms like Ethereum, 
Given a mempool $\TxT$, adversaries $\PmvA$ can craft new transactions by mauling the transactions data in~$\TxT$ 
(\eg, method arguments in Ethereum transactions).
To this goal, $\PmvA$ can reuse any piece of data in $\TxT$,
with the constraints that they cannot forge signatures,
and cannot deduce any value that is not efficiently
computable from the previous ones (\eg, inverting a hash).

In our abstract model, we generalize this inference with an axiomatization of the set $\mall{\PmvA}{\TxT}$ of transactions deducible by the adversary $\PmvA$ from a given mempool $\TxT$.
We start by requiring \keyterm{extensivity}, \keyterm{idempotence} and \keyterm{monotonicity} on $\TxT$,  so making $\mall{\PmvA}{\cdot}$ an upper closure operator for any $\PmvA$.
In particular, these axiom imply that $\mall{\PmvA}{\TxT}$ include all the transactions in the mempool $\TxT$, and that larger mempools lead to larger adversarial inferences.
The \keyterm{continuity} axiom is a standard structural requirement: in our theory, it is pivotal to prove that MEV can always be extracted from a finite mempool.
The \keyterm{finite causes} axiom ensures that any \emph{finite} set of transactions can be deduced by a \emph{finite} set of actors that only use their private knowledge. Abstractly, the private knowledge is the set of transactions deducible from an empty mempool $\TxT$ (in practice, this corresponds to the set of transactions that $\PmvA$ can craft by using their private keys).
The \keyterm{private knowledge} axiom states that
a larger private knowledge requires a larger set of actors:
hence, if two sets of actors can deduce exactly the same transactions, then they must be equal.
Finally, the \keyterm{no shared secrets} axiom formalises a separation between the private knowledge of different actors: namely, it implies that if a transaction can be deduced by two disjoint sets of actors, then it can be deduced by anyone.

\begin{defn}[Transaction deducibility]
  \label{def:mall}
  We say that 
  $\mall{}{} : \powset{\PmvU} \times \powset{\TxU} \rightarrow \powset{\TxU}$
  is a transaction deducibility function if it satisfies the following axioms:
  % Transaction inference axioms
  \begin{description}

  \item[Extensivity] 
    $\TxT \subseteq \mall{\PmvA}{\TxT}$

  \item[Idempotence] 
    $\mall{\PmvA}{\mall{\PmvA}{\TxT}} = \mall{\PmvA}{\TxT}$

  \item [Monotonicity] 
    if $\PmvA \subseteq \PmvAi$, $\TxT \subseteq \TxTi$, 
    then $\mall{\PmvA}{\TxT} \subseteq \mall{\PmvAi}{\TxTi}$

  \item [Continuity]
  for all chains $\TxT[0] \subseteq \TxT[1] \subseteq \TxT[2] \subseteq \cdots $,    
    $\mall{\PmvA}{\bigcup_{i\in\Nat} \!\TxT[i]} = \bigcup_{i\in\Nat} \!\mall{\PmvA}{\TxT[i]}$
  
    % for all chains $\setenum{\TxT[i]}_{i \in \Nat}$, 
    % $\mall{\PmvA}{\bigcup_i \!\TxT[i]} = \bigcup_i \!\mall{\PmvA}{\TxT[i]}$
    
  \item [Finite causes] 
    $\forall \TxTfin \subseteqfin \TxU . \ \exists \PmvAfin \subseteqfin \PmvU . \ \TxTfin \subseteq \mall{\PmvAfin}{\emptyset}$

  \item [Private knowledge] if $\mall{\PmvA}{\emptyset} \subseteq \mall{\PmvAi}{\emptyset}$, then $\PmvA \subseteq \PmvAi$
    
  \item [No shared secrets] 
    $\mall{\PmvA}{\TxT} \cap \mall{\PmvB}{\TxT} \subseteq \mall{\PmvA \cap \PmvB}{\TxT}$

  \end{description}
\end{defn}

We illustrate $\mall{\PmvA}{\TxT}$ through an example.
For simplicity, in the contract language used in our examples we drop all the features of Solidity that are inessential to the understanding of MEV;
still, the language is expressive enough to express real-world use cases like those found in DeFi 
\iftoggle{arxiv}{(see~\Cref{sec:examples}).}{see~\cite{BZ25arxiv}.}
Our contract language has a formal semantics.
While this semantics is crucial to prove the presence or absence of MEV in our benchmark of use cases, for now it will be sufficient to rely on its intuitive understanding.

\begin{figure}[t]
  \begin{lstlisting}[language=txscript,morekeywords={BadHTLC,commit,reveal,timeout},classoffset=3,morekeywords={a,b,A,Oracle,verifier},keywordstyle=\pmvColor,classoffset=4,morekeywords={t,T},keywordstyle=\tokColor,frame=single]
contract BadHTLC {
  commit(a pays 1:T,b,c) { // a must send 1:T to the contract
    require balance(T)==1; commitment=c; // prevents multiple commits
  }
  reveal(a sig,y) { // a must sign the transaction and reveal the secret y
    require balance(T)>0 && H(y)==commitment; // y must be a preimage
    transfer(a,balance(T):T); // send all T balance to a
  }
  timeout(a sig,Oracle sig) { // a and Oracle must sign the transaction
    require balance(T)>0; // the contract must have some tokens T
    transfer(a,balance(T):T); // send all T balance to a
  }
}
  \end{lstlisting}
  \negcaptionspace  
  \caption{A Hash Time Locked Contract.}
  \label{fig:badhtlc}
\end{figure}

\begin{example}
  \label{ex:badhtlc}  
  \label{ex:txscript:badhtlc:mall}
  The \txcode{BadHTLC} contract in~\Cref{fig:badhtlc}
  implements a Hash-Time Locked Contract, where
  a committer promises that she will either reveal a secret
  within a certain deadline, or pay a penalty of \mbox{$\waltok{1}{\tokT}$} to anyone after the deadline.
  Let $\sysS$ be a state where the secret has been committed to $H(s)$ but not revealed yet,
  and assume the mempool $\TxT$ contains a transaction
  \mbox{$\txT[\pmvA] = \txcode{reveal}(\txsig{\pmvA},s)$}
  sent by $\pmvA$ to redeem the deposit.
  Since the secret $s$ is public in the mempool,
  any adversary $\pmvM$ can craft a transaction
  $\txT[\pmvM] = \txcode{reveal}(\txsig{\pmvM},s)$
  by combining their own knowledge (to provide $\pmvM$'s signature)
  with that of $\TxT$ (to provide~$s$).
  Hence, $\txT[\pmvM] \in \mall{\setenum{\pmvM}}{\TxT}$, and
  so $\pmvM$ can extract MEV
  by front-running $\txT[\pmvA]$ with $\txT[\pmvM]$.
  Note instead that $\pmvM$ \emph{alone} cannot deduce the transaction that would allow her to trigger the timeout
  \iftoggle{arxiv}
  { (see~\Cref{ex:htlc} for details).}
  { (see~\cite{BZ25arxiv} for details).}
  We remark that this attack relies on combining private and mempool knowledge, which does not seem to be properly accounted for in
  current MEV formalizations~\cite{Babel23clockwork,Salles21formalization,Mazorra22price}.
  \hfill\qedex  
\end{example}

\Cref{lem:mall:basic} establishes some key properties of $\mall{}{}$, which will be instrumental to prove more complex properties about MEV.
\Cref{lem:mall:inj} states that different sets of actors have a different private knowledge.
\Cref{lem:mall:cap} implies that transactions that can be deduced by disjoint sets of actors can be deduced by anyone. 
\Cref{lem:mall:cup} states that two groups of actors joining forces could infer more transactions than they could independently infer.
\Cref{lem:mall:subseteq-mall-mall} states that a group $\PmvB$ that can exploit both a mempool $\TxT$ and the inference of a larger group $\PmvA$ on a smaller mempool $\TxY$ cannot infer more transactions than $\PmvA$  infer from $\TxT$.
Remarkably, \Cref{lem:mall:finite-tx} rules out the case where, to deduce a transaction $\txT$,
a set of actors needs to combine knowledge from an \emph{infinite} mempool $\TxT$ (the proof exploits the continuity of $\mall{}{}$).

\begin{prop}
  \label{lem:mall:basic}
  For all $\PmvA$, $\PmvB$, $\TxT$ and $\TxY$, we have that:
  \begin{enumerate}[\rm(1)]

  \item \label{lem:mall:inj}
    if $\PmvA \neq \PmvB$, then $\mall{\PmvA}{\emptyset} \neq \mall{\PmvB}{\emptyset}$

  % \item if $\TxT \subseteq \TxTi$, then $\mall{\PmvA}{\TxT} \subseteq \mall{\PmvA}{\TxTi}$

  \item \label{lem:mall:cap}
  $\mall{\PmvA}{\TxT} \cap \mall{\PmvB}{\TxT} = \mall{\PmvA \cap \PmvB}{\TxT}$

  \item \label{lem:mall:cup} \label{lem:mall:mall}
    $\mall{\PmvA}{\TxT} \cup \mall{\PmvB}{\TxT} \subseteq \mall{\PmvA}{\mall{\PmvB}{\TxT}} \subseteq \mall{\PmvA \cup \PmvB}{\TxT}$

  %% \item \label{lem:mall:subseteq-mall-mall}
  %%   if $\PmvB \subseteq \PmvA$, then $\mall{\PmvA}{\TxT} = \mall{\PmvB}{\mall{\PmvA}{\TxT}}$

  \item \label{lem:mall:subseteq-mall-mall}
    if $\PmvB \subseteq \PmvA$ and $\TxY \subseteq \TxT$, then
    $\mall{\PmvB}{\mall{\PmvA}{\TxY} \cup \TxT} \subseteq \mall{\PmvA}{\TxT}$

  % \item $\mall{\emptyset}{\mall{\PmvA}{\emptyset} \cup \TxT} \subseteq \mall{\PmvA}{\TxT}$

  \item \label{lem:mall:finite-tx}
    $\forall \txT \in \mall{\PmvA}{\TxT} .\ \exists \TxTfin \subseteqfin \TxT .\ \txT \in \mall{\PmvA}{\TxTfin}$
    
  \end{enumerate}
\end{prop}

\subsection{MEV extractable by a given set of actors}

%% The maximum gain
%% $\max \gain{\PmvA}{\sysS}{\mall{\PmvA}{\TxT}^*}$
%% that can be obtained in this way is an \emph{over}-approximation of the MEV,
%% since it also counts the unrealized gain,
%% , the gain that can be obtained \emph{without} exploiting the mempool.
%% Therefore, to obtain the net MEV we remove the gain that $\PmvA$ 
%% can achieve by using only their private knowledge, by subtracting
%% $\max \gain{\PmvA}{\sysS}{\mall{\PmvA}{\emptyset}^*}$.

The axiomatization of adversarial knowledge is the core of our MEV definition.
% and it explains at the very basis why the MEV properties stated below hold.
Namely, the MEV of a \emph{given} set of actors $\PmvA$ is the maximal gain
that $\PmvA$ can achieve by firing a sequence of transactions
\emph{deducible} by $\PmvA$ using their private knowledge and that of the mempool.

\begin{defn}[MEV]
  \label{def:mev}
  The MEV extractable \emph{by a set of actors $\PmvA$}
  from a mempool $\TxT$ in a state $\sysS$ is given by:
  \begin{align}
    \label{eq:mev}
    \mev{\PmvA}{\sysS}{\TxT}
    & =
    \max \setcomp{\gain{\PmvA}{\sysS}{\TxYS}}{\TxYS \in \mall{\PmvA}{\TxT}^*}   
    % \max \setcomp{\netgain{\PmvA}{\sysS}{\TxYS}}{\TxYS \in \mall{\PmvA}{\TxT}^*}
  \end{align}
  % when such maximum exists.
  % We say that $\TxYS$ is a \emph{MEV-attack} of $\PmvA$ on $(\sysS,\TxT)$ iff 
  % $\TxYS \in \mall{\PmvA}{\TxT}^*$ and $\netgain{\PmvA}{\sysS}{\TxYS} > 0$.
\end{defn}

By allowing $\PmvA$ to fire arbitrary bundles in $\mall{\PmvA}{\TxT}^*$ 
(the set of finite sequences of transactions in $\mall{\PmvA}{\TxT}$), 
we are actually empowering $\PmvA$ with the ability to reorder,
drop and insert transactions.
This is coherent with the practice,
where miners/validators are commonly in charge for assembling blocks.%
\footnote{Some blockchain networks instead do not allow the current leader node to propose a block, but use special protocols that ensure a fair ordering of transactions~\cite{Kelkar20crypto,LiPournaras24arxiv,Raikwar23brains}.}
% \eg, the Hashgraph protocol~\cite{Baird20coins} used by Hedera
% ensures a fair ordering of transactions by making nodes reach
% a consensus on the whole the network communication history.}

Note that the $\max$ in~\eqref{eq:mev} may not exist:
for instance, consider a contract where each transaction
(fireable by anyone)
increases by $\waltok{1}{\tokT}$ the tokens in $\pmvA$'s wallet.
% and a function $\wealth{}{}$ that 
% just sums the number of tokens in a wallet.
The wallet states reachable in this contract satisfy
the finite tokens axiom~\eqref{eq:finite-tokens},
but the MEV of $\pmvA$ is unbounded, because for each fixed $n$,
there exists a reachable state where $\pmvA$'s gain is greater than $n$.
A sufficient condition for the existence of the $\max$ in~\eqref{eq:mev}
is that, in any reachable state, the wealth of all actors
is bounded by a constant:
% which depends only on the initial state:
\begin{align}
  & \label{eq:bounded-wealth-axiom}
  \forall \sysS[0] \in \SysU[0] . \;
  \exists n . \;
  \forall \sysS . \;\;
  \sysS[0] \xrightarrow{} \cdots \xrightarrow{} \sysS
  \implies
  \wealth{\PmvU}{\sysS} < n
\end{align}
Hereafter, we assume that contracts satisfy~\eqref{eq:bounded-wealth-axiom},
namely they are \emph{$\wealth{}{}$-bounded}.
We now establish some key properties of MEV.
First, MEV is always defined for $\wealth{}{}$-bounded contracts.
\begin{prop}
  \label{prop:mev:defined}
  % For all $\PmvA$, $\TxT$ and $\sysS$,
  $\mev{\PmvA}{\sysS}{\TxT}$ is defined and has a non-negative value.
\end{prop}

The $\mev{}{}{}$ is preserved by removing from $\TxT$
all the transactions that the actors $\PmvA$ can generate by themselves:
\begin{prop}
  \label{prop:mev:cut}
  \(
  \mev{\PmvA}{\sysS}{\TxT} = \mev{\PmvA}{\sysS}{\TxT \setminus \mall{\PmvA}{\emptyset}}
  \)
\end{prop}

% \subsection{Monotonicity properties}

$\mev{}{}{}$ is monotonic \wrt the mempool $\TxT$.
% from which we are extracting value.
This follows from the monotonicity of transactions deducibility $\mall{}{}$,
since a wider knowledge gives more opportunities to increase one's gain.

\begin{prop} % [Monotonicity]
  \label{prop:mev:monotonic-on-tx}
  If $\TxT \subseteq \TxTi$, then \
  \(
  \mev{\PmvA}{\sysS}{\TxT} \leq \mev{\PmvA}{\sysS}{\TxTi}
  \).
\end{prop}

Perhaps surprisingly, MEV is \emph{not} monotonic
\wrt the set $\PmvA$ of actors who are extracting it.
For instance, if $\pmvA$ has a positive gain by firing a transaction
that yields a negative opposite gain for $\pmvB$
(and $\pmvB$ has no other ways to have a positive gain),
then the MEV of $\setenum{\pmvA}$ is positive,
while that of $\setenum{\pmvA,\pmvB}$ is zero.

%% Indeed, when taking a superset $\PmvAi \supseteq \PmvA$,
%% the extra actors in $\PmvAi \setminus \PmvA$ could increase the knowledge
%% $\mall{\PmvA}{\emptyset}$, 
%% possibly increasing the unrealized gain of $\PmvA$.
%% The following \namecref{ex:mev:non-monotonic-on-PmvA} illustrates this behaviour.

In general, MEV is \emph{not} even monotonic \wrt the amount of tokens in wallets,
\ie, being richer does not always increase one's ability of extracting MEV.
In particular, $\pmvA$ might be able to extract MEV in a state
\mbox{$\walpmv{\pmvA}{\wmvA} \mid \sysS$},
but not in a state \mbox{$\walpmv{\pmvA}{\wmvA+\wmvA[\Delta]} \mid \sysS$}.
For instance, this may happen when the contract enables
the MEV-extracting transaction
only in states containing an exact number of tokens in users' wallets.
In fact, in most real-world contracts the effect of transactions never depends
on tokens which are not controlled by the contract.
% \txscript contracts are wallet-monotonic by construction,
% since they cannot inspect users' wallets.
We formalise this property of contracts by requiring that each transaction
enabled in a certain wallet state $\WmvA$ produces the same effect
in a ``richer'' wallet state $\WmvA+\WmvA[\Delta]$.

\begin{defn}[Wallet-monotonic contract]
  A contract is \emph{wallet-monotonic} if
  for all $\WmvA,\WmvAi,\WmvA[\Delta]$, $\contrC$, $\contrCi$ and $\txT$:
  \[
  (\WmvA,\contrC) \xmapsto{\txT} (\WmvAi,\contrCi)
  \implies
  (\WmvA+\WmvA[\Delta],\contrC) \xmapsto{\txT} (\WmvAi+\WmvA[\Delta],\contrCi)  
  \]
\end{defn}

This naturally extends to sequences of valid transactions.
For this class of contracts, MEV is monotonic \wrt wallets.

\begin{prop} % [Monotonicity on wallets]
  \label{prop:mev:monotonic-on-wal}  
  Let $\sysS = (\WmvA,\contrC)$ and let
  $\sysS[\Delta] = (\WmvA+\WmvA[\Delta],\contrC)$.
  If the contract is wallet-monotonic, then \
  \(
  \mev{\PmvA}{\sysS}{\TxT} \leq \mev{\PmvA}{\sysS[\Delta]}{\TxT}
  \).
\end{prop}

% \subsection{Finiteness properties}

In general, a single actor could not be able to extract MEV,
since the contract could require the interaction
between multiple actors in order to trigger a payment. 
\Cref{th:mev:finite-part} shows that
the (maximal!) MEV can always be obtained by a \emph{finite} set of actors.
% To prove this, we exploit an analogous result for the gain,
% established in \Cref{prop:gain}\ref{prop:gain:fin}.

\begin{prop} % [Actors finiteness]
  \label{th:mev:finite-part}
  For all $\PmvA$, $\TxT$ and $\sysS$,
  there exists $\PmvAfin \subseteqfin \PmvA$
  such that, for all $\PmvB$, if $\PmvAfin \subseteq \PmvB \subseteq \PmvA$
  then: \
  \(
  \mev{\PmvAfin}{\sysS}{\TxT} = \mev{\PmvB}{\sysS}{\TxT}
  \).
\end{prop}

We also show that MEV can always be extracted from
a \emph{finite} mempool.
This follows from the continuity of $\mall{}{}$,
and in particular from its consequence 
\Cref{lem:mall:basic}\ref{lem:mall:finite-tx},
which ensures that each transaction in the sequence used to obtain
the $\max$ gain can be deduced from a finite subset of the mempool.

\begin{prop}[Mempool finiteness]
  \label{th:mev:finite-tx}
  For all $\PmvA$, $\TxT$ and $\sysS$, there exists some $\TxTfin \subseteqfin \TxT$ such that
  $\mev{\PmvA}{\sysS}{\TxTfin} = \mev{\PmvA}{\sysS}{\TxT}$.
\end{prop}

Not all MEV is always considered an attack:  
\eg, the MEV that derives from arbitrage on AMMs and liquidations on Lending Pools is rather considered an incentive for users to keep the contract aligned with its ideal functionality.
In these cases, the MEV is extracted without using the mempool.
To isolate the part of MEV that is agreeably considered an attack, we remove from the overall $\mev{\PmvA}{\sysS}{\TxT}$
the part that can be extracted without knowledge of the mempool, \ie, $\mev{\PmvA}{\sysS}{\emptyset}$.
We dub this new notion as ``bad MEV''.

% from the MEV that requires  exploting the mempool, we 
% Note that in a MEV attack,
% the adversary can exploit both weaknesses in the contract
% and, \emph{possibly}, also the information in the mempool.
% The MEV definition in~\eqref{eq:mev} summarizes in a single value $\mev{\PmvA}{\sysS}{\TxT}$ both exploits.

% To separate the MEV derived by contract 
% To obtain a more precise view of MEV opportunities,
% we can isolate the MEV that can only be extracted by exploiting the knowledge of the mempool,
% \ie $\mev{\PmvA}{\sysS}{\TxT} - \mev{\PmvA}{\sysS}{\emptyset}$.

% we can keep the two components separated:
% $\mev{}{\sysS}{\emptyset}$ would quantify the MEV obtainable
% through displacement attacks,
% while $\mev{}{\sysS}{\TxT} - \mev{}{\sysS}{\emptyset}$ would quantify
% the value extractable only by exploiting the mempool.
% We dub this new notion as ``bad MEV''.

\begin{defn}[Bad MEV]
\label{def:badmev}
  The ``bad MEV'' extractable \emph{by a set of actors $\PmvA$}
  from a mempool $\TxT$ in a state $\sysS$ is given by:
  \begin{align}
    \label{eq:badmev}
    \badmev{\PmvA}{\sysS}{\TxT}
    & =
    \mev{\PmvA}{\sysS}{\TxT}
    - 
    \mev{\PmvA}{\sysS}{\emptyset}
  \end{align}
\end{defn}

\begin{prop}
\label{prop:badlem}
All the previous results about $\mev{}{}{}$, except~\Cref{prop:mev:monotonic-on-wal}, also hold for $\badmev{}{}{}$.
Furthermore, $\badmev{\PmvA}{\sysS}{\TxT} \leq \mev{\PmvA}{\sysS}{\TxT}$.
\end{prop}

% cex a wallet monotonicity
% puoi sbloccare MEV in 2 modi: o con una firma o con un token magico
% nel mempool dove c'è la firma, hai MEV indipendentemente dal wallet
% nel mempool dove non c'è la firma, hai MEV sse c'è il token magico
% se aggiungi il token magico al wallet, il MEV^bad diminuisce
% a occhio il contratto sembra wallet-monotonic
\subsection{Universal MEV}
\label{sec:adv-mev}
\label{sec:mev-freedom}

\Cref{def:mev} parameterises MEV over a set of actors $\PmvA$.
In this way, the same state $(\sysS,\TxT)$ could admit
different MEVs for different sets of actors.
This dependency on the set of actors contrasts with the practice, 
where the actual identity of miners or validators
is unrelated to their ability to extract MEV.

For instance, consider the \txcode{Whitelist} contract
in~\Cref{fig:whitelist}.
In any state $\sysS$ where $\pmvM$ has at least $\waltok{1}{\tokT}$
and the contract has $\waltok{n}{\tokT}$ with $n>0$,
% and mempool $\TxT = \setenum{\txcode{pay}(\txin{\pmvA}{n}{\tokT})}$,
any set of actors $\PmvB$ including $\pmvA$ has MEV.
% but not $\pmvA$
More precisely, $\mev{\PmvB}{\sysS}{\TxT} = n \cdot \wealth{}{}\idxfun{\tokT}$.
However, this way of extracting MEV is \emph{not} considered
an attack in practice,
since the recipient of the tokens is not arbitrary,
but an actor ($\pmvA$) who is \emph{hard-coded} in the contract.
By contrast, the contract \txcode{Blacklist} is attackable,
provided that the adversary $\PmvB$ includes some $\pmvM \neq \pmvA$
who has at least $\waltok{1}{\tokT}$.
The fact that the hard-coded actor $\pmvA$ cannot extract MEV is irrelevant,
since the adversary can easily create a pseudonym which is different
from the blacklisted ones.

%% a toy contract with a single procedure:
%% \begin{lstlisting}[language=txscript,morekeywords={Whitelist,pay},classoffset=3,morekeywords={a,b,o,A,B,Oracle},keywordstyle=\pmvColor,classoffset=4,morekeywords={t,T},keywordstyle=\tokColor]
%%   pay(a?x:T) { B!x:T }
%% \end{lstlisting}

\begin{figure}[t]
  \begin{lstlisting}[language=txscript,morekeywords={Whitelist,Blacklist,pay,Bank,deposit,xfer,wdraw},classoffset=3,morekeywords={a,a0,a1,b,o,A,B},keywordstyle=\pmvColor,classoffset=4,morekeywords={t,T},keywordstyle=\tokColor,frame=single]
contract Whitelist {
  pay(a pays 1:T) { require a==A; transfer(a,balance(T):T); }
}

contract Blacklist {
  pay(a pays 1:T) { require a!=A; transfer(a,balance(T):T); }
}  

contract Bank {
  deposit(a pays amt:T) { // a sends amt:T to the contract
    if acct[a]==null then acct[a]=amt else acct[a]=acct[a]+amt
  }
  xfer(a sig,amt,b) { // a transfers amt:T to b
    require acct[a]>=amt && acct[b]!=null && amt>0;
    acct[a]=acct[a]-amt; acct[b]=acct[b]+amt
  }
  wdraw(a sig,amt) { // a withdraws amt:T
    require acct[a]>=amt && amt>0;
    acct[a]=acct[a]-amt; transfer(a,amt:T);
  }
}
  \end{lstlisting}
  \negcaptionspace
  \caption{\txcode{Whitelist}, \txcode{Blacklist} and \txcode{Bank} contracts.}
  \label{fig:whitelist}
  \label{fig:blacklist}
  \label{fig:bank}
\end{figure}

% \begin{figure}[t]
%   \begin{lstlisting}[language=txscript,,morekeywords={Bank,deposit,xfer,wdraw},classoffset=3,morekeywords={a0,a1,a,b},keywordstyle=\pmvColor,classoffset=4,morekeywords={T},keywordstyle=\tokColor,frame=single]
% contract Bank {
%   deposit(a pays amt:T) { // a sends amt:T to the contract
%     if acct[a]==null then acct[a]=amt else acct[a]=acct[a]+amt
%   }
%   xfer(a sig,amt,b) { // a transfers amt:T to b
%     require acct[a]>=amt && acct[b]!=null && amt>0;
%     acct[a]=acct[a]-amt; acct[b]=acct[b]+amt
%   }
%   wdraw(a sig,amt) { // a withdraws amt:T
%     require acct[a]>=amt && amt>0;
%     acct[a]=acct[a]-amt; transfer(a,amt:T);
%   }
% }
%   \end{lstlisting}
%   \negcaptionspace
%   \caption{A \txcode{Bank} contract.}
%   \label{fig:bank}
% \end{figure}

As another example of a non-attack,
consider the \txcode{Bank} contract in~\Cref{fig:bank},
which allows users to deposit and withdraw tokens,
and to transfer them to others.
Let $\sysS$ be a state where $\pmvA$ has deposited $\waltok{n}{\tokT}$
in the contract, while the balances of the other users are zero.
We have that $\mev{\setenum{\pmvA}}{\sysS}{\emptyset} = n \cdot \wealth{}{}\idxfun{\tokT}$
and in general any set of actors including $\pmvA$ can extract MEV.
However, even this way of extracting MEV is \emph{not} considered an attack,
since any adversary $\PmvB$ not including $\pmvA$ has
$\mev{\PmvB}{\sysS}{\TxT} = 0$, for every mempool $\TxT$
(unless $\TxT$ includes an explicit \txcode{xfer} from $\pmvA$ to $\PmvB$).
Unlike in the \txcode{Blacklist} example,
the \txcode{Bank} has no hard-coded names,
but some names become bound in the contract states upon transactions.
For instance, after $\pmvA$ has deposited $\waltok{n}{\tokT}$,
we have
$\walpmv{\contract{Bank}}{\waltok{n}{\tokT}, \code{acct} = \setenum{\pmvA \mapsto n}}$.

In general, if the ability to extract MEV is subject to the existence
of specific actors in the set $\PmvA$, this is \emph{not}
considered an attack.
Even when the identity of actors is immaterial,
the amount of tokens in their wallets is not:
\eg, actors may lose the capability of extracting MEV
when spoiled from their tokens.
Therefore, we consider as universal MEV the $\max$ gain
that can be extracted by the adversary
after a suitable \emph{redistribution} of tokens in wallets.
To stay on the safe side, we always consider the optimal token  
redistribution for the adversary.
In this way, we never consider a state $\sysS$ as MEV-free
just because the adversary has not enough tokens in $\sysS$ to carry the attack:
this would be unsafe, since the attack could have been possible 
in a state $\sysSi$ where the adversary has redistributed the tokens
in their wallets.
Formally, a \emph{token redistribution} is a relation
$\waldistr{}{\sysS}{}{\sysSi}$ which holds
when all the tokens in the wallets in $\sysS$ are reassigned in $\sysSi$.

\begin{defn}[Token redistribution]
  \label{def:wallet-redistribution}
  Let $\sysS = (\WmvA,\contrC)$,\ \mbox{$\sysSi = (\WmvAi,\contrCi)$}.
  We write $\waldistr{}{\sysS}{}{\sysSi}$ when
  $\WmvA(\PmvU) = \WmvAi(\PmvU)$ and $\contrC = \contrCi$.
\end{defn}

We now formalise the key notion of \emph{universal MEV}.
% $(\sysS,\TxT)$
% it has no attackers, \ie, infinite sets of actors who can
% have a positive utility in $(\sysS,\TxT)$, regardless of their identity.
States with no universal MEV are called \emph{MEV-free}.

\begin{defn}[Universal MEV]
  \label{def:adv-mev}
  \label{def:mev-free}
  The \emph{universal MEV} extractable 
  from a mempool $\TxT$ in a state $\sysS$ is given by:  
  \begin{equation} \label{eq:adv-mev}
  \mev{}{\sysS}{\TxT}
  =
  \min_{\PmvB \text{ cofinite}} \hspace{5pt}
  \max_{\scriptsize \begin{array}{c} \PmvA \subseteq \PmvB \\ \waldistr{}{\sysS}{}{\sysSi} \end{array}} \mev{\PmvA}{\sysSi}{\TxT}
  \end{equation}  
  We say that
  $(\sysS,\TxT)$ is \emph{$\mev{}{}{}$-free} when $\mev{}{\sysS}{\TxT} = 0$.
  We define $\badmev{}{\sysS}{\TxT}$ and \emph{$\badmev{}{}{}$-free} similarly.
\end{defn}

%% \begin{equation} \label{eq:adv-mev:2}
%% \mev{}{\sysS}{\TxT}
%% =
%% \min_{\PmvB \text{ finite}} \hspace{5pt}
%% \max_{\scriptsize \begin{array}{c} \PmvA \cap \PmvB = \emptyset \\ \waldistr{\PmvU \setminus \PmvB}{\sysS}{\PmvA}{\sysSi} \end{array}} \mev{\PmvA}{\sysSi}{\TxT}
%% \end{equation}

%% \begin{equation} \label{eq:adv-mev:2}
%% \mev{}{\sysS}{\TxT}
%% =
%% \min_{\PmvB \text{ finite}} \hspace{5pt}
%% \max_{\scriptsize \begin{array}{c} \PmvA \cap \PmvB = \emptyset \\ \waldistr{\PmvU \setminus \PmvB}{\sysS}{\PmvA}{\sysSi} \end{array}} \mev{\PmvA}{\sysSi}{\TxT}
%% \end{equation}
%% cex: le vittime $\PmvB$ sono i token holders:
%% in questo caso MEV=0 se servono token per estrarre MEV
%% perche' non ci sono abbastanza token per gli avversari $\PmvA$

To ensure that the identities of actors extracting MEV are immaterial,
in~\eqref{eq:adv-mev} we take the \emph{minimum}
\wrt all sets $\PmvB$ of actors.
We restrict to \emph{infinite} sets $\PmvB$ 
to grant the attacker an unbounded amount of fresh identities,
which can be used to avoid the ones handled in a special way by the contract.
More specifically, since real-world contracts treat,  
in each state, only a finite number of actors as special,
we let $\PmvB$ range over \emph{cofinite} sets.
% (like \eg, a contract with a finite blacklist).
Once the set $\PmvB$ of adversaries is fixed,
we take the \emph{maximum} MEV of $\PmvA \subseteq \PmvB$
\wrt all the possible token redistributions.
In this way, we ensure that the adversary has enough tokens to carry
the attack, if any.
Note that \eqref{eq:adv-mev} follows the \emph{minimax} principle
of game theoretic definitions.
Intuitively, the honest players choose $\PmvB$ in the $\min{}$
so to prevent the adversary from using privileged identities.
Then, the adversary chooses, in the $\max{}$, the identities $\PmvA$ from $\PmvB$
which are actually used for the attack:
this allows the adversary to remove the actors with negative gain.

% Un contratto che blacklista tutti i pari ha MEV?
% Secondo la def. sopra no, ma in pratica \`e discutibile:
% se gli $\pmvA$ fossero indirizzi, qualunque partecipante potrebbe
% creare un indirizzo dispari e estrarre MEV.

% $\PmvB$ infinito per evitare il blacklisting;
% min serve per rimuovere dal conteggio del MEV anonimo il caso
% dei partecipanti che riprendono i propri token.

The universal MEV is always defined
under a strengthened $\wealth{}{}$-boundedness assumption
which considers token redistributions:
\begin{align}
  & \label{eq:bounded-wealth-axiom-waldistr}
  \forall \sysS[0] \in \SysU[0] . \;
  \exists n . \;
  \forall \sysS . \;\;
  \sysS[0] \; \xrightarrow{}_{\waldistrarrow{}} \cdots \xrightarrow{}_{\waldistrarrow{}} \; \sysS
  \implies
  \wealth{\PmvU}{\sysS} < n
\end{align}
where the relation $\xrightarrow{}_{\waldistrarrow{}}$
allows to redistribute tokens at each step.
\iftoggle{arxiv}{}
{Additionally, universal MEV is monotonic with respect to mempools and wallets.}

\begin{prop}
  \label{prop:adv-mev:defined}
  For all $\TxT$ and $\sysS$ satisfying~\eqref{eq:bounded-wealth-axiom-waldistr},
  $\mev{}{\sysS}{\TxT}$ and
  $\badmev{}{\sysS}{\TxT}$
  are defined and have a non-negative value.
\end{prop}

\iftoggle{arxiv}
{The following two theorems establish that universal MEV is 
monotonic with respect to mempools and wallets.}
{}

\begin{thm} % [Monotonicity]
  \label{prop:adv-mev:monotonic-on-tx}
  If $\TxT \subseteq \TxTi$
  then
  $\mev{}{\sysS}{\TxT} \leq \mev{}{\sysS}{\TxTi}$
  (similarly for $\badmev{}{}{}$).
\end{thm}

\begin{thm} % [Monotonicity on wallets]
  \label{prop:adv-mev:monotonic-on-wal}  
  Let $\sysS = (\WmvA,\contrC)$ and
  let $\sysS + \WmvA[\Delta]$ be $(\WmvA+\WmvA[\Delta],\contrC)$.
  If the contract is wallet-monotonic, then:\
  \(
  \mev{}{\sysS}{\TxT} \leq \mev{}{\sysS + \WmvA[\Delta]}{\TxT}
  \).
\end{thm}

%% Note that the mempool finiteness result that holds
%% for MEV (\Cref{th:mev:finite-tx}) does not extend to adversarial MEV.
%% % As a counterexample,
%% Consider again the \txcode{Bank} in \Cref{fig:bank}.
%% We have shown before that all its states are MEV-free for finite mempools.
%% Instead, choosing an \emph{infinite} $\TxT$
%% which includes valid \txcode{xfer} to all actors,
%% we have a positive adversarial MEV.
%% Indeed, for any cofinite $\PmvB$, we can choose any nonempty
%% $\PmvA \subseteq \PmvB$ to extract MEV by firing transactions in $\TxT$.
%% We stress that the lack of a mempool finiteness result
%% is not a problem in practice, since mempools in the real world are always finite.

% \bartnote{non mi basta un mempool finito}
%% \begin{prop} % false (see cex)
%%   \label{prop:mev-free:finite-tx}
%%   If $(\sysS,\TxT)$ is not MEV-free,
%%   then there exists $\TxTfin \subseteqfin \TxT$
%%   such that $(\sysS,\TxTfin)$ is not MEV-free.
%% \end{prop}

% Finally, we observe that MEV-freedom is not preserved by transitions:
% indeed, it is easy to craft contracts that
% switch from MEV-free to non-MEV-free as their state is updated.

\subsection{Proving MEV-freedom}

% \begin{table}[t]
% \scriptsize
%     \centering
%     \caption{MEV analysis of our benchmark of contracts.}
%     \label{tab:benchmark}    
%     \begin{tabular}{|c|c|c|}
%     \hline
%     \textbf{Contract} & \;\;$\mev{}{}{}$-free?\;\; & \;\; $\badmev{}{}{}$-free? \;\; \\
%     \hline
%     Bad HTLC & \xmark & \xmark 
%     \\
%     HTLC (\ref{ex:htlc}) & \cmark & \cmark
%     \\
%     Whitelist & \cmark & \cmark 
%     \\
%     Blacklist & \xmark & \cmark 
%     \\
%     Bank & \cmark & \cmark
%     \\
%     Coin Pusher (\ref{ex:coinpusher}) & \xmark & \xmark
%     \\
%     Crowdfund (\ref{ex:crowdfund}) & \cmark & \cmark 
%     \\
%     AMM (\ref{ex:amm}) & \xmark & \xmark 
%     \\
%     Price Bet (\cite{Babel23clockwork}) & \xmark & \cmark 
%     \\
%     Naïve bounty contract (\ref{ex:bounty}) & \xmark & \xmark
%     \\
%     Bounty contract (\ref{ex:bounty}) & \cmark & \cmark
%     \\
%     Lending Pool (\ref{ex:lp}) & \xmark & \xmark
%     \\
%     \hline
%     \end{tabular}
% \end{table}

\begin{table}[t]
\scriptsize
    \centering
    \caption{MEV analysis of our benchmark of contracts.}
    \label{tab:benchmark}    
    \begin{tabular}{|c|c|c||c|c|c|}
    \hline
    \textbf{Contract} 
    & $\mev{}{}{}$-free? 
    & $\badmev{}{}{}$-free?
    &
    \textbf{Contract} 
    & $\mev{}{}{}$-free?
    & $\badmev{}{}{}$-free?
    \\
    \hline
    Bad HTLC \iftoggle{arxiv}{(\ref{ex:htlc})}{} & \xmark & \xmark
    &
    Crowdfund \iftoggle{arxiv}{(\ref{ex:crowdfund})}{} & \cmark & \cmark 
    \\
    HTLC \iftoggle{arxiv}{(\ref{ex:htlc})}{} & \cmark & \cmark
    &
    AMM \iftoggle{arxiv}{(\ref{ex:amm})}{} & \xmark & \xmark 
    \\
    Whitelist \iftoggle{arxiv}{(\ref{ex:whitelist})}{} & \cmark & \cmark 
    &
    Price Bet \cite{Babel23clockwork,BMZ24fc} & \xmark & \cmark 
    \\
    Blacklist \iftoggle{arxiv}{(\ref{ex:blacklist})}{} & \xmark & \cmark 
    &
    Naïve bounty \iftoggle{arxiv}{(\ref{ex:bounty})}{} & \xmark & \xmark
    \\
    Bank \iftoggle{arxiv}{(\ref{ex:bank})}{} & \cmark & \cmark
    &
    Bounty \iftoggle{arxiv}{(\ref{ex:bounty})}{} & \cmark & \cmark
    \\
    Coin Pusher \iftoggle{arxiv}{(\ref{ex:coinpusher})}{} & \xmark & \xmark
    &
    Lending Pool \iftoggle{arxiv}{(\ref{ex:lp})}{} & \xmark & \cmark
    \\
    \hline
    \end{tabular}
\end{table}

\iftoggle{arxiv}
{We now apply our theory to study MEV-freedom of contracts.}
{We now apply our MEV theory.}
\iftoggle{arxiv}
{Here we just summarize the results of our analysis in~\Cref{tab:benchmark}, and we refer to~\Cref{sec:examples} for the technical details.}
{Because of space constraints, here we just summarize the results of our analysis in~\Cref{tab:benchmark} and refer to~\cite{BZ25arxiv} for the technical details.}
In the column ``$\mev{}{}{}$-free?'', we mark with a \xmark\ those cases where a (universal) MEV attack exists with respect to some state and mempool, and with \cmark\ when no such attack exist. 
The column ``$\badmev{}{}{}$-free?'' uses a similar notation.
 
The \xmark\ in the AMM is witnessed by a sandwich attack in a state where the AMM is in balanced state (\ie, the internal exchange rate equals to that of an external page oracle).
This is a case of ``bad'' MEV, since the attack exploits the mempool.
When the AMM is unbalanced, there is a case of ``legit'' MEV, \ie, anyone can perform arbitrage and have a positive MEV with an empty mempool. 
The \xmark\ in the Lending Pool is witnessed by liquidations of under-collateralized borrowers. 
% by the attack that exploits a mempool transaction that updates the interest rate.
% The attacker back-runs this transaction in a state where some borrowers  become undercollateralized, and so liquidates part of their debt, obtaining their minted tokens with a bonus.
%
The Price Bet contract in~\cite{Babel23clockwork,BMZ24fc} shows an interesting case where (universal) MEV is extractable without exploiting the mempool.
In this contract, a player bets on the future exchange rate between two tokens, determined through an AMM acting as a price oracle.
The attacker bets on a given price, then unbalances the AMM to obtain the desired exchange rate, and finally re-balances the AMM.
In this way, the attacker can win the bet, extracting MEV.
Note that this attack is possible whenever in the state there are enough tokens to unbalance the AMM as required. 
The Bounty contract, which rewards the first user who submits the solution to a puzzle, is a paradigmatic case where a na\"ive implementation leads to ``bad'' MEV attacks that makes the adversary able to steal a submitted solution. 
Fixing the contract requires to devise a non-trivial commit-reveal protocol%
\iftoggle{arxiv}{(as done in~\Cref{ex:bounty})}{}, 
through which we eventually achieve MEV-freedom. 
\section{Related work}
\label{sec:related}

% Despite the rich and growing literature on MEV,
% Only a few works attempt to make the intuitive notion of MEV rigorous.
The first (partial) formal definition of MEV was given by
Babel, Daian, Kelkar and Juels~\cite{Babel23clockwork}.
Transliterated into our notation, it is:
\begin{equation}
  \label{eq:mev-babel}
  \mathrm{MEV}^{\textrm{BDKJ}}_{\pmvA}(\sysS) =
  \max \setcomp{\gain{\pmvA}{\sysS}{\TxYS}}{\TxYS \in \mathrm{Blocks}(\pmvA,\sysS)}
\end{equation}
where $\mathrm{Blocks}(\pmvA,\sysS)$ represents the set of all
valid blocks that $\pmvA$ can construct in $\sysS$,
and $\pmvA$ is allowed to own multiple wallets.
A first key difference \wrt our work is that,
while $\mathrm{Blocks}(\pmvA,\sysS)$ is not specified
in~\cite{Babel23clockwork},
we provide an axiomatization of this set in~\Cref{def:mall}.
Notably, our axiomatization allows us to prove key properties of MEV,
as monotonicity (\wrt mempools and wallets) and actors/mempool finiteness.
The dependence of $\mathrm{Blocks}(\pmvA,\sysS)$ on the mempool is
left implicit in~\cite{Babel23clockwork},
while we make the mempool a parameter of MEV.
This allows our attackers to craft transactions by combining their private knowledge with that of the mempool,
as in the attack to the \txcode{BadHTLC} 
\iftoggle{arxiv}{in~\Cref{ex:badhtlc}}{in~\cite{BZ25arxiv}}.
Instead, in \cite{Babel23clockwork} the transactions in
$\mathrm{Blocks}(\pmvA,\sysS)$ are either generated by $\pmvA$
using only her private knowledge, or taken from the (implicit) mempool.

%% Another key difference is that \cite{Babel23clockwork} considers the
%% unrealized gain of $\pmvA$ (as per~\Cref{def:ugain}) as part of their MEV.
%% Consequently, if $\pmvA$ has previously deposited tokens into
%% a bank contract (like the \txcode{Bank} in~\Cref{fig:bank}),
%% they can ``extract'' MEV by just withdrawing from their account.
%% By contrast, our definition does not consider these tokens as MEV,
%% since they can be obtained without exploiting the mempool.

Another key difference is that~\cite{Babel23clockwork} does not provide
a notion of \emph{universal} MEV,
\ie, the MEV that can be extracted by anyone, regardless of their identity
and current token balance.
Indeed, this is the kind of MEV which is most relevant in practice.
The intuition of~\cite{Babel23clockwork} is to compute
$\mathrm{MEV}^{\textrm{BDKJ}}_{\pmvA}(\sysS)$
\wrt an actor $\pmvA$ who is ``external'' to the contract.
However, the intuition is not supported by a formalization,
which is not straightforward to achieve in general.
Instead, our~\Cref{def:adv-mev} exactly characterises
this universal MEV, making it identity-agnostic and token-agnostic.

Before ours, a version of universal MEV was proposed by
Salles~\cite{Salles21formalization},
and it has the following form:
\begin{equation}
  \label{eq:mev-salles}
  \mathrm{MEV}^{\textrm{Salles}}(\sysS) =
  \min_{\pmvA \in \PmvU} \mathrm{MEV}^{\textrm{BDKJ}}_{\pmvA}(\sysS)
\end{equation}
By taking the minimum over all actors, \eqref{eq:mev-salles} no longer depends on the identity of the attacker.
As noted in~\cite{Salles21formalization},
a drawback is that such definition classifies as MEV-free
contracts that intuitively are not:
\eg, the \txcode{Blacklist} contract
would be considered MEV-free, since performing the attack requires
upfront costs, that not all actors can afford.
Note instead that 
\iftoggle{arxiv}{\Cref{th:blacklist} correctly classifies}{we correctly classify} 
\txcode{Blacklist}
as \emph{not} MEV-free.
A fix proposed in~\cite{Salles21formalization,Mazorra22price}
is to parameterise MEV by a constant $n$,
which restricts the set of attackers to those who own at least $n$ tokens:
% (assuming a single token type):
\begin{equation}
  \label{eq:mev-salles:2}
\mathrm{MEV}^{\textrm{Salles}}(\sysS,n) =
\min \setcomp{\mathrm{MEV}^{\textrm{BDKJ}}_{\pmvA}(\sysS)}{\pmvA \in \PmvU, \WmvA(\pmvA) \geq n}
\end{equation}
where $\WmvA(\pmvA)$ is the number of tokens in $\pmvA$'s wallet.
We note that also this fix has drawbacks.
A first issue is that the set of actors who own at least $n$ tokens in
a state $\sysS$ is always \emph{finite}.
Hence, a contract that blacklists all these actors
preventing them to withdraw tokens
would be MEV-free according to $\mathrm{MEV}^{\textrm{Salles}}(\sysS,n)$. 
Instead, \Cref{def:mev-free} correctly classifies it as \emph{not} MEV-free,
since for all cofinite $\PmvB$, the tokens can be
\emph{redistributed} to some $\PmvA \subseteq \PmvB$
who are not blacklisted, and can then extract MEV.
% Fixing $\mathrm{MEV}^{\textrm{Salles}}(\sysS,n)$ would require to
% allow an analogous redistribution, which in turn would require each player
% to control an infinite set of wallets.
% This fix would therefore lead to something similar to our~\Cref{def:mev-free}.
Another issue of making the $\min$ in~\eqref{eq:mev-salles:2} 
range over all actors is the following.
Consider a variant of \txcode{Blacklist}
where calling \txcode{pay} requires zero tokens.
Since the $\min{}$ in~\eqref{eq:mev-salles:2} must take also
the hard-coded $\pmvA$ into account, and $\pmvA$'s MEV is zero
(since she is blacklisted)
then the $\min$ would be $0$, and so~\eqref{eq:mev-salles:2}
would incorrectly classify the contract as MEV-free.
Instead, our redistribution game allows us to rule out such $\pmvA$.

The notion of MEV in~\cite{BMZ24fc} is based on ours, but it uses an alternative approach to make MEV independent from the wealth of adversaries: rather than using a token redistribution, it takes the maximum MEV over all possible user wallets.
Unlike ours, \cite{BMZ24fc} does not provide a model of the adversarial knowledge. 

%% Consider \eg, the mempool
%% $\TxT = \setenum{\txcode{play}(\txin{\pmvA}{n}{\tokT})}$.
%% % as in~\Cref{ex:adv-mev:coinpusher}.
%% Since the $\min{}$ in~\eqref{eq:mev-salles:2} must take $\pmvA$
%% itself into account, and $\pmvA$'s MEV is zero
%% (since she has no transaction to exploit in $\TxT$
%% and no tokens to extract from the contract)
%% then the minimum would be $0$, and so the universal MEV.

Using redistributions like ours is also helpful to
solve another issue of~\cite{Salles21formalization}:
namely, blacklisting could also be based on the number of tokens
held in wallets, \eg, preventing actors with more than 100 tokens from
extracting MEV.
In this case, $\mathrm{MEV}^{\textrm{Salles}}(\sysS,n)$ would be zero for all $n$,
since the minimum would also take into account the blacklisted actors with zero MEV,
while our notion correctly classifies the contract as \emph{not} MEV-free.
% Also in this case, the fix would require a redistribution of tokens.

%% The definition $\mathrm{MEV}^{\textrm{Salles}}(\sysS,n)$ is monotonic \wrt $n$:
%% this would suggest that richer actors have greater MEV opportunities,
%% but --- perhaps counterintuitively --- this is not the case in general,
%% as shown in the previous blacklist example, were richer actors had zero MEV.
%% In general, if contracts can inspect the token balance in wallets
%% (as it is possible to do \eg, in Ethereum), this form of monotonicity
%% does not hold.
%% Still, monotonicity could hold for certain classes of contracts (\eg, AMMs),
%% or by assuming token redistribution.

%% Note: MEV-free does not imply TO-free (ex: AirDrop is MEV-free, but TO-attackable)
%% TO-free does not imply MEV-free (ex: BadHTLC is TO-free but not MEV-free)
%% CoinPusher: dipende dallo stato:
%% - se il contratto ha >0 allora \`e TO-attackable e MEV-attackable
%% - se il contratto ha 0 allora non \`e TO-attackable, ma MEV-attackable
%% per qualche mempool

%% A minor difference between our model and~\cite{Babel23clockwork}
%% is that our notion of MEV abstract from blocks:
%% namely, a MEV-attack sequence $\TxTS$ can be seen as a single block,
%% with an unbounded number of transactions.
%% Instead, \cite{Babel23clockwork} also studies $k$-MEV,
%% \ie, MEV that can be obtained by appending a sequence of $k$ blocks.

As discussed in~\Cref{sec:adv-mev}, our notion of universal MEV
is game-theoretic.
Another approach based on game theory ---
but with substantially different goals ---
is followed in~\cite{Mazorra22price}.
This work models the priority gas auction arising from MEV extraction as a game,
and studies the Nash equilibria ensuring that adversaries have the
same MEV opportunities.
Our goal instead is to formalize MEV so to analyse contracts \wrt MEV attacks.
\section{Limitations and conclusions}
\label{sec:limitations}

While designing our MEV model, we strove to capture 
the most important aspects of MEV in common smart contracts. 
However, our model still has some limitations, which we discuss below.

% We agree with the reviewers that our model abstracts away some aspects, and in certain cases it does not perfectly capture real-world MEV. Therefore, we plan to explicitly mention in the paper the limitations of our model, and how they affect the applicability of our theory in the real world. More specifically: 

\paragraph{Long-range attacks} 
Our notion of MEV models the value extractable by adversaries in a single block:
indeed, in \Cref{def:mev} we allow the adversary to perform a sequence $\TxYS$ of transactions (which models the block), but we neglect additional transactions happening after $\TxYS$.
This does not consider long-range attacks spanning across multiple blocks. 
\Eg, the adversary could perform some 
\iftoggle{arxiv}{contract}{} 
actions in one block that do not extract any MEV immediately, but affect the state of a time-based contract that will eventually give MEV.
Precisely addressing long-range attacks would require some extensions to our theory. 
First, the notion of wealth should take into account token price fluctuations, which are irrelevant within a single block.
Second, the knowledge $\mall{}{}$ should also depend on the blockchain state, \eg, to take revealed secrets into account.
Third, MEV should depend on the strategies of the honest actors,
\ie, on the transactions that they would send to the mempool in a given state.
Actually, not considering these strategies, and just assuming that any actor is always willing to perform any contract action, would result in a gross over-approximation of MEV.
An additional complication is that in certain contracts, like \eg, in gambling games, such strategies could be probabilistic.
There, defining MEV in terms of the best possible future state for the adversary would again provide an over-approximation of MEV, since it would assume an unrealistically lucky adversary.
\Eg, consider a guessing game where an actor commits to a secret number, and then the adversary must guess its parity to win. Taking the \emph{maximal} gain over all possible futures effectively provides the adversary with the knowledge of the secret.

% - it also does not consider probabilistic contracts (like e.g. lotteries and gambling games), which could allow extracting MEV but only with a given probability (our MEV over-approximates this probabilistic MEV); 

\paragraph{Computational adversaries}

Our notion of adversarial knowledge in~\Cref{def:mall} is \emph{sharp}: any piece of data is either known to the adversary or completely inaccessible to them.
This assumption is common in symbolic models of cryptographic protocols, but it does not always perfectly model the real-world adversaries.
Indeed, an adversary could be able to obtain some data but only at the cost of a long computation.
For instance, a contract could require the adversary to solve a moderately hard cryptographic puzzle to extract MEV.
Modeling this kind of computational adversaries would require to refine the notions of adversarial knowledge and MEV to take costs into account. 

% - our model is purely symbolic, and does not take into account the computational complexity of attacks, as done in computational models;

\paragraph{Cost of MEV}
% - the cost of MEV attacks (e.g., how many tokens are needed) is not precisely considered. For instance, our universal MEV notion effectively provides the adversaries with all the tokens held by all the actors. As such, it can be considered an over-approximation of the real world, where adversaries could be prevented from extracting MEV by the lack of tokens.

Our notion of universal MEV evaluates the MEV that can be extracted by an arbitrarily wealthy adversary. 
% which owns as many tokens as needed to interact with the contract.
%
To this purpose, \Cref{def:adv-mev} uses token redistributions, which allow the adversary to use \emph{all}  tokens in the state, even those belonging to honest actors.
In this way, we are effectively assuming that the adversary always has, either in their wallet or by buying them from honest actors,
all the tokens needed to carry the attack.
An alternative definition, which does not require to grab the tokens of honest actors, would be to allow the adversary to mint the tokens needed in the attack, similarly to the definition of 
${\rm MEV}^{\infty}$ in~\cite{BMZ24fc}.
In scenarios where tokens are used as credentials to perform given actions, we could restrict token redistribution to avoid giving such special tokens to the adversary.
We also remark that transaction fees do not contribute to MEV,
similarly to~\cite{Babel23clockwork,Salles21formalization}.
Encompassing fees would allow to declassify as MEV attacks
those where the fees needed to carry the attack exceed the
adversarial gain.
In practice, fees may be quite costly in private mempools
like Flashbots~\cite{Weintraub22flashbot}.
Adversaries can extract MEV also
% by exploiting the gas mechanism, \eg, by
by front-running a transaction in the mempool
so to increase the amount of gas needed to validate it.

% Even though formally this would be considered as a MEV attack,
% a realistic estimation of the adversarial gain would require to 
% precisely take into account the increased computational costs for validation.

\paragraph{Good \emph{vs.} bad MEV}

There is an open debate 
\iftoggle{arxiv}{within the community}{} 
about what exactly constitutes MEV, and how to separate ``good MEV'' from ``bad MEV''~\cite{flashbots-mev-taxonomy,Barczentewicz23ssrn,Ji24fc,Torres24ccs,monoceros-mev,a-new-game-in-town}.
In the absence of an agreement about these notions, our
definition of ``bad MEV'' in~\Cref{def:badmev} formally captures some of the arguments used in this debate. 
In particular, we classify as ``good'' the MEV obtained 
through arbitrage in AMMs, since it does not exploit the mempool.
Furthermore, we classify the liquidations on Lending Pools obtained without exploiting the mempool as ``good'' MEV.
These classifications seem coherent with discussions in the community: MEV is good when it is an incentive for \emph{any} user to perform actions (\eg, arbitrage, liquidations) that serve to the purpose of the protocol (\eg, aligning prices for AMMs, repaying loans for Lending Pools).
Instead, we classify as ``bad'' the MEV resulting from
sandwich attacks to AMMs~\cite{Zhou21high} and from 
liquidations that back-run interest-accruing transactions in Lending Pools: this is correct in our view, because these attacks require the privileges of block proposers (by contrast, plain arbitrages and liquidations can be performed by any user with sufficient tokens).
We stress that not all the intuitively bad MEVs are classified as such by our~\Cref{def:badmev}:
this is the case, \eg, of the DAO attack~\cite{DAO}, where MEV results from a bug in the contract implementation.
On the other side, we are not aware of any real-world contracts that have ``bad MEV'' (according to our definition), but where the MEV is considered beneficial to the contract functionality.

\paragraph{Private order flows}
The ``no shared secret'' axiom in~\Cref{def:mall} forbids actors to share private information (\eg, keys). 
It also rules out \emph{private order flows}~\cite{Gupta23aft}: if the same order flow is sent to $\pmvA$ and $\pmvB$, they might infer some common transactions that are not public knowledge.
% It also rules out some cryptographic primitives such as threshold signatures.
This simplifying assumption can be relaxed by making variables $\pmvA$ represent secrets rather than actors, and modelling actors as the set of secrets $\PmvA$ that they know.

%% \begin{equation} \label{eq:adv-mev:auth:2}
%%   \mev{}{\sysS}{\TxT}
%%   =
%%   \min_{\PmvB \text{ cofinite}} \hspace{5pt}
%%   \max_{\scriptsize \begin{array}{c} \PmvA \subseteq \PmvB \setminus \auth{\TxT} \\ \waldistr{\PmvU \setminus \auth{\TxT}{\sysS}{\PmvA}{\sysSi} \end{array}} \mev{\PmvA}{\sysSi}{\TxT}
%% \end{equation}

\iftoggle{anonymous}{}{\paragraph*{Acknowledgments}

Work partially supported by the MUR National Recovery and Resilience Plan funded by the European Union -- NextGenerationEU,
projects SERICS (PE00000014) and PRIN 2022 DeLiCE (F53D23009130001).}

\bibliographystyle{splncs04}
\bibliography{main}

\iftoggle{arxiv}{%
\clearpage
\appendix

\section{A concrete contract language}
\label{sec:txscript}

We instantiate our abstract model % in \Cref{sec:contracts}
with a simple contract language, dubbed \txscript.
The language is heavily inspired by Solidity --- and indeed we could have
as well used Solidity itself to sketch our examples.
Still, we opt for introducing a new language for coherence with
the spirit of this paper:
to formally prove properties of contracts, \eg, the presence or absence of MEV,
we need a language with a formal semantics.
To ease formal definitions and reasoning, our \txscript aims at minimality,
so we drop all the features of Solidity that are inessential to
the understanding of MEV.
Despite this simplification, \txscript is still expressive enough
to express real-world use cases like those found in DeFi.

A \txscript contract is a finite set of procedures of the form:
\[
\procF(\overrightarrow{\textit{parg}}) \braceleft s \braceright
\]
where $\procF$ is the procedure name,
% $e$ is a precondition to its call,
$\overrightarrow{\textit{parg}}$ is the sequence of formal parameters,
and $s$ is the procedure body
(see \Cref{fig:txscript:syntax}).
We assume that all the procedures in a contract have distinct names.
Statements and expressions extend those of a loop-free imperative language
with a few domain-specific constructs:
\begin{itemize}

\item $\balance{\tokT}$ is the balance of tokens of type $\tokT$ deposited in the contract;

\item $\txout{\pmvA}{n}{\tokT}$ transfers $n$ units of $\tokT$ from the contract to $\pmvA$;

\item $\reqE{e}$ rolls-back the transaction if condition $e$ is false.
%% \item $\verE{e_1}{e_2}$ verifies that $e_1$ is the secret revealed
%%   upon the commitment $e_2$ (with the same nonces);

%% \item $\secE{e}$ extracts the secret $n$ from a reveal $e = \rk{\nonceR}{n}$.
\end{itemize}

Transactions have the form $\procF(\overrightarrow{\textit{txarg}})$,
where the sequence of \emph{actual} parameters may include, besides constants,
the term $\txin{\pmvA}{n}{\tokT}$, representing
a transfer of $n$ units of token $\tokT$
from $\pmvA$ to the contract upon the procedure call.%
\footnote{This mechanism generalises the one provided by Ethereum to transfer tokens upon contract calls.
  In Ethereum, a contract call involves a single transfer of \emph{ether} from the caller to the contract.
  In \txscript, instead, a single transaction can involve multiple transfers
  of tokens (of any type) from the actors who authorise the transaction.
}
Note that $\txin{\pmvA}{n}{\tokT}$ involves a signature of $\pmvA$ on the transaction that authorize the transfer of tokens from 
her wallet to the contract.
When we are only interested in the signature, and not in the
transfer of tokens, we just write $\txsig{\pmvA}$, interpreting it as syntactic sugar for $\txin{\pmvA}{0}{\tokT}$.
Transactions nonces $@n$ are used to prevent double-spending attacks:
all transactions in the blockchain must have distinct nonces.
We omit transaction nonces in examples.

%% \item $\ck{\nonceR}{n}$, a commitment of the secret $n$ with nonce $\nonceR$;

%% \item $\rk{\nonceR}{n}$, a reveal of the secret $n$ with nonce $\nonceR$.

%% \noindent
%% Without loss of generality, we assume that each secret is $0$ or~$1$,
%% and that each actor $\pmvA$ has a set of nonces
%% $\noncesOf{\pmvA}$, disjoint from $\noncesOf{\pmvB}$ when $\pmvB \neq \pmvA$.
%% We denote with $\noncesOf{\PmvA}$ the set $\bigcup_{\pmvA \in \PmvA} \noncesOf{\pmvA}$.

%% The commit/reveal mechanism is essential to implement conditions
%% on secret choices of actors, like \eg, in lotteries and gambling games.
%% The idea is that $\pmvA$ commits to a secret $n$ by publishing a
%% transaction with parameter $\ck{\nonceR[\pmvA]}{n}$, 
%% and later reveals it with another transaction with parameter $\rk{\nonceR[\pmvA]}{n}$.
%% The semantics of \txscript ensures that
%% $\verE{v}{\ck{\nonceR[\pmvA]}{n}}$ is true if and only if
%% $v = \rk{\nonceR[\pmvA]}{n}$.
%% We will show in~\Cref{sec:tx-inference} that only $\pmvA$
%% can authorise such a reveal transaction.

\begin{figure}[htbp]
  \centering
  \fbox{
    \begin{minipage}{0.925\linewidth}    
  \[
  \small
  \begin{array}{rcll}
    c & ::= & \txscriptcode{contract} \; \contrC \braceleft \; \overrightarrow{\textit{p}} \; \braceright & \textbf{Contract}
    \\[5pt]
    \textit{p} & ::= & \procF(\overrightarrow{\textit{parg}}) \braceleft s \braceright & \textbf{Procedure}
    \\[5pt]
    \textit{parg} & ::= & & \textbf{Argument}
    \\
    & & x & \text{variable}
    \\
    & \mid & \txin{\pmva}{x}{\tokt} & \text{token input}
    \\
    & \mid & \txsig{\pmva} & \text{signature}
    \\[5pt]    
    s & ::= & & \textbf{Statement}
    \\
    & & \skipE & \text{skip}
    \\
    & & \reqE e & \text{require condition}
    \\    
    & \mid & x = e & \text{assignment}
    \\
    & \mid & x[e_1] = e_2 & \text{map update}
    \\
    & \mid & \txout{e_1}{e_2}{e_3} & \text{token output}
    \\
    & \mid & s_1 ; s_2 & \text{sequence}
    \\
    & \mid & \ifE{e}{s_1}{s_2}\qquad & \text{conditional}
    \\[5pt]
    e & ::= & & \textbf{Expression}
    \\
    & & \nullE & \text{undefined}
    \\
    & \mid & n \mid \pmvA \mid \tokT & \text{constants}
    \\
    & \mid & x & \text{variables}
    \\
    & \mid & e_1[e_2] & \text{map lookup}
    \\    
    & \mid & \op{\circ}{e_1}{e_2} & \text{operation}
    \\
    & \mid & \balance{e} & \text{number of tokens of type $e$}
    \\
    & \mid & \hashE{e_1,\ldots,e_n} & \text{collision-resistant hash}
    %% \\   
    %% & \mid & \secE{e} & \text{get secret}
    %% \\
    %% & \mid & \verE{e_1}{e_2} & \text{verify commit}
    \\[5pt]
    \txT & ::= & \procF(\overrightarrow{\textit{txarg}})@n & \textbf{Transaction}
    \\[5pt]
    \textit{txarg} & ::= & & \textbf{Transaction argument}
    \\
    & & n \mid \pmvA \mid \tokT & \text{constants}
    \\
    & \mid & \txin{\pmvA}{n}{\tokT} & \text{token input}
    \\
    & \mid & \txsig{\pmvA} & \text{signature}
    %% \\
    %% & \mid & \ck{\nonceR{}}{n} & \text{commit}
    %% \\
    %% & \mid & \rk{\nonceR{}}{n} & \text{reveal}    
  \end{array}
  \]
    \end{minipage}
  }
  % \negcaptionspace  
  \caption{Syntax of contracts and transactions.}
  \label{fig:txscript:syntax}
\end{figure}

To improve readability,
when we want to fix some parameters in a procedure, instead of writing:
\begin{lstlisting}[language=txscript,morekeywords={HTLC,commit,reveal,timeout},classoffset=3,morekeywords={a,b,o,A,B,Oracle},keywordstyle=\pmvColor,classoffset=4,morekeywords={t,BTC},keywordstyle=\tokColor,basicstyle=\fontseries{m}\footnotesize\ttfamily]
commit(a pays x:t,b,c) { 
    require a==A && x==1 && t==BTC && b==B; ... 
}
\end{lstlisting}
we just hard-code constants in the formal parameters, \eg:
\begin{lstlisting}[language=txscript,morekeywords={HTLC,commit,reveal,timeout},classoffset=3,morekeywords={a,b,o,A,B,Oracle},keywordstyle=\pmvColor,classoffset=4,morekeywords={t,BTC},keywordstyle=\tokColor,basicstyle=\fontseries{m}\footnotesize\ttfamily,]
    commit(A pays 1:BTC,B,c) { ... }
\end{lstlisting}

% Since we do not see \txscript as a core contribution of the paper,
% but just as a means to formally reason about contracts,
% we refer to~\Cref{sec:proofs} for the technical details about its semantics.
% Here we just give some intuition through an example.

\begin{example}
  \label{ex:txscript:htlc:sem}
  Recall the \txcode{HTLC} from~\Cref{fig:htlc},
  and let:
  \[
  \walu{\pmvA}{1}{\tokT} \mid
  \walu{\pmvB}{0}{\tokT} \mid
  \walpmv{\contract{HTLC}}{\sigma_0}
  \]
  be an initial state where $\pmvA$ owns \mbox{$\waltok{1}{\tokT}$},
  $\pmvB$ owns nothing, and \contract{HTLC} is in the initial state
  $\sigma_0$ where the balance is empty and all the variables
  are set to their default values (as in Solidity).
  Assume that $\PmvA$ chooses a secret $s$, and computes
  its hash $\hashSem{s} = h$.
  Upon firing
  % $\txcode{commit}(\txin{\pmvA}{1}{\tokT},\pmvB,\ck{\nonceR[\pmvA]}{b})$
  $\txcode{commit}(\txin{\pmvA}{1}{\tokT},\pmvB,h)$  
  the state takes a transition to:
  \[
  \walu{\pmvA}{0}{\tokT} \mid
  \walu{\pmvB}{0}{\tokT} \mid
  \walpmv{\contract{HTLC}}{\sigma_0\setenum{\bind{\code{xa}}{\pmvA}, \bind{\code{xb}}{\pmvB}, \bind{\code{yc}}{h}}; \waltok{1}{\tokT}}  
  % \walpmv{\contract{HTLC}}{\sigma_0\setenum{\bind{\code{xa}}{\pmvA}, \bind{\code{xb}}{\pmvB}, \bind{\code{yc}}{\ck{\nonceR[\pmvA]}{b}}}; \waltok{1}{\tokT}}
  \]
  Now, upon firing $\txcode{reveal}(\rk{\nonceR[\pmvA]}{b})$,
  the state evolves to:
  \[
  \walu{\pmvA}{1}{\tokT} \mid
  \walu{\pmvB}{0}{\tokT} \mid
  \walpmv{\contract{HTLC}}{\sigma_0\setenum{\bind{\code{xa}}{\pmvA}, \bind{\code{xb}}{\pmvB}, \bind{\code{yc}}{h}}; \waltok{0}{\tokT}}
  % \walpmv{\contract{HTLC}}{\sigma_0\setenum{\bind{\code{xa}}{\pmvA}, \bind{\code{xb}}{\pmvB}, \bind{\code{yc}}{\ck{\nonceR[\pmvA]}{b}}}; \waltok{0}{\tokT}}  
  \]
  In this state, the committed secret has been revealed,
  and the committer $\pmvA$ has redeemed her deposit.
  \hfill\qedex
\end{example}

\paragraph{Semantics}

We assume a set $\VarU$ of variables (ranged over by $x, y, \ldots$),
partitioned in two subsets:
$\VarUP$ for the variables used in procedure arguments,
and $\VarUS$ for the state variables,
which are further partitioned into base variables and map variables.
We use $\tokt$ for variable token types,
and $\pmva$ for variable actors.
We assume that the variables used in the LHS of any assignment and in
map updates are in $\VarUS$.

% Each contract has a store,
% which we model as a partial function from variables and token types to values.
Contract states are pairs of the form $(\sigma,\TxT)$.
The component $\sigma$ is a total map in
\mbox{$\VarUS \cup \TokU \rightarrow \ValU$},
where $\ValU$ is the universe of values,
comprising base values $\ValUB$, and total maps from base values to values.
Base values are natural numbers ($\Nat$),
actors ($\PmvU$),
tokens ($\TokU$),
and the singleton $\setenum{\nullSem}$.
% committed secrets $\setcomp{\ck{\nonceR}{n}}{\nonceR \in \noncesOf{\PmvU}, n \in \Nat}$,
% and revealed secrets $\setcomp{\rk{\nonceR}{n}}{\nonceR \in \noncesOf{\PmvU}, n \in \Nat}$.
We embed booleans into $\Nat$ as usual.

The component $\TxT$ of the contract state records the set of all
transactions executed so far, and it is used to prevent the double-spending
of transactions in the mempool.%
\footnote{Blockchain platforms use similar mechanisms
  to avoid replay attacks which double spend a transaction.
  For instance, Algorand marks a transaction
  as invalid if it belongs to the set of transactions fired in the last
  1000 rounds.
  %or if the current round is not in the range specified by the transaction (where this range includes at most 1000 rounds).
  In Ethereum, each transaction must be signed by its sender:
  the signature also includes a nonce, which is increased each time the
  sender broadcasts a transaction.
  In the blockchain, the nonces of the transactions from the same sender must
  be ordered, without skipping.}
A contract state $(\sigma,\TxT)$ is \emph{initial} when $\sigma(\tokT) = 0$
for all $\tokT \in \TokU$,
$\sigma(x) = \nullSem$ for all base variables,
$\sigma(x) = \lambda b.\, \nullSem$ for all map variables,
and $\TxT = \emptyset$.

We will often omit transaction nonces:
if the same procedure with the same parameters is executed two or more times,
we implicitly assume that all its transaction nonces are distinct.

The semantics of contracts is a labelled transition relation
between blockchain states, with signature as in~\Cref{def:contract}.
The transition relation $\xmapsto{}$ is specified by the following rule,
which updates the blockchain state when a valid transaction is fired:
\[
  \irule
      {\begin{array}{c}
          \txT = \procF(\overrightarrow{\textit{txarg}})@n
          \qquad
          \procF(\overrightarrow{\textit{parg}}) \braceleft s \braceright \in \contrC
          \qquad
          \txT \not\in \TxT         
          \\[8pt]
          (\overrightarrow{\textit{parg}}) \rho = \overrightarrow{\textit{txarg}}
          \qquad
          \semcmd{\overrightarrow{\textit{txarg}}}{(\WmvA,\sigma)} \Rightarrow_{\it arg} (\WmvAii,\sigma'')
          \qquad
          %% \semexp[\rho,\sigma]{e} = \true          
          %% \qquad
          \semcmd[\rho]{s}{(\WmvAii,\sigma'')} \Rightarrow (\WmvAi,\sigma')
      \end{array}}
      {(\WmvA,(\sigma,\TxT)) \xmapsto{\txT} (\WmvAi,(\sigma',\TxT \cup \setenum{\txT}))}
\]
The rule defines a single state transition
% from state $(\WmvA,\sigma)$ to $(\WmvAi,\sigma')$
triggered by a (valid) transaction $\txT$.
The condition $\txT \not\in \TxT$ in the first line of the rule premises
ensures that the same transaction cannot be executed twice.
The second line of the rule premises infers a substitution $\rho$
to match the formal and the actual parameters of the called procedure.
The condition
$\semcmd{\overrightarrow{\textit{txarg}}}{(\WmvA,\sigma)} \Rightarrow_{\it arg} (\WmvAii,\sigma'')$
evaluates the transaction arguments (see below).
%This substitution, together with the contract state $\sigma$,
%is used to check if the precondition $e$ evaluates to true.
Finally, the premise
$\semcmd[\rho]{s}{(\WmvAii,\sigma'')} \Rightarrow (\WmvAi,\sigma')$
evaluates the procedure statement $s$,
producing a new blockchain state.
Note that if some $\reqE{}$ commands in the statement $s$ fail
then $\semcmd[\rho]{s}{(\WmvAii,\sigma'')} \Rightarrow \bot$,
hence the premise is false, and the rule does not apply.

The semantics of expressions in a state $\sigma$ is standard,
except for the wallet lookup, the semantics of which
% and verification of revealed secrets,
is defined as follows:
\[
\begin{array}{c}
  \irule{\semexp[\rho,\sigma]{e} = \tokT}
        {\semexp[\rho,\sigma]{\balance{e}} = \sigma(\tokT)}
  %% \qquad
  %% \irule
  %%     {\semexp[\rho,\sigma]{e} = \rk{\nonceR}{n}
  %%     }
  %%     {\semexp[\rho,\sigma]{\secE{e}} = n}
        
  %% \\[15pt]        
  %% \irule
  %%     {\semexp[\rho,\sigma]{e_1} = \rk{\nonceR}{n} \quad
  %%      \semexp[\rho,\sigma]{e_2} = \ck{\nonceRi}{n'} \quad
  %%     }
  %%     {\semexp[\rho,\sigma]{\verE{e_1}{e_2}} = (\nonceR=\nonceRi \,\land\, n=n')}
\end{array}
\]

%% Given a partial map $\sigma \in \mathbf{D} \rightharpoonup \ValU$ with $\TokU \subseteq \mathbf{D}$,
%% a token type $\tokT \in \TokU$ and a $\circ \in \setenum{+,-}$,
%% we define the partial map $\sigma \circ n:\tokT$ as follows:
%% \[
%% \sigma \circ \waltok{n}{\tokT} = \begin{cases}
%%   \sigma\setenum{\bind{\tokT}{(\sigma \tokT) \;\circ\; n}}
%%   & \text{if $\tokT \in \dom{\sigma}$ and $(\sigma \tokT) \circ n \in \Nat$}
%%   \\
%%   \sigma\setenum{\bind{\tokT}{n}}
%%   & \text{if $\tokT \not\in \dom{\sigma}$ and $\circ = +$}
%% \end{cases}
%% \]
%% This allows to increase/decrease the amount of tokens in $\sigma$.
%% For instance, if $\sigma = \walenum{\waltok{5}{\tokT[0]}}$, then
%% $\sigma + \waltok{1}{\tokT[0]} = \walenum{\waltok{6}{\tokT[0]}}$, and
%% $\sigma + \waltok{1}{\tokT[1]} = \walenum{\waltok{5}{\tokT[0]}, \waltok{1}{\tokT[1]}}$.

We assume a basic type system on expressions, which rules out
operations between non-compatible types, like \eg, 
$\pmvA + 1$, $\tokT + 1$, $\balance{\pmvA}$, $\secE{e}$ where $e$
is anything but a reveal $\rk{r}{n}$.

The rule for transferring tokens from the contract to an actor
is the following,
where we use the standard notation $\sigma \setenum{\bind{x}{v}}$
to update a partial map $\sigma$ at point $x$:
namely, $\sigma \setenum{\bind{x}{v}}(x) = v$, while
$\sigma \setenum{\bind{x}{v}}(y) = \sigma(y)$ for $y \neq x$.
\[
\irule
    {\begin{array}{c}
        \semexp[\rho,\sigma]{e_1} = \pmvA \quad
        \semexp[\rho,\sigma]{e_2} = n \quad
        \semexp[\rho,\sigma]{e_3} = \tokT \quad
        \sigma(\tokT) \geq n
      \end{array}
    }
    {\semcmd[\rho]{\txout{e_1}{e_2}{e_3}}{(\WmvA,\sigma)} \Rightarrow
      (\WmvA\setenum{\bind{\pmvA}{\WmvA(\pmvA) + \walenum{\waltok{n\,}{\,\tokT}}}},\sigma - \walenum{\waltok{n}{\tokT}})}
\]

The rules for evaluating $\reqE$ commands are the following:
\[
\irule
    {\semexp[\rho,\sigma]{e} = \true}
    {\semcmd[\rho]{\reqE e}{(\WmvA,\sigma)} \Rightarrow (\WmvA,\sigma)}
\qquad
\irule
    {\semexp[\rho,\sigma]{e} = \false}
    {\semcmd[\rho]{\reqE e}{(\WmvA,\sigma)} \Rightarrow
      \bot}
\]
The second rule deal with the case when the condition under the $\reqE$
is violated: in this case, the evaluation of the command yields the
special value $\bot$, which represents an execution error.

The rule for evaluating actual parameters is the following:
\[
\irule
    {\begin{array}{c}
        \WmvA(\pmvA)(\tokT) \geq n
        \qquad
        \semcmd{\overrightarrow{\textit{txarg}}}{(\WmvA\setenum{\bind{\pmvA}{\WmvA(\pmvA) - \walenum{\waltok{n\,}{\,\tokT}}}},\sigma + \walenum{\waltok{n}{\tokT}})} \Rightarrow_{\it arg}
      (\WmvAi,\sigma')
      \end{array}
    }
    {\semcmd{\txin{\pmvA}{n}{\tokT}; \overrightarrow{\textit{txarg}}}{(\WmvA,\sigma)} \Rightarrow_{\it arg}
      (\WmvAi,\sigma')}
\]
The other rules are standard.
The full set of rules is in~\Cref{fig:txscript:sem}.

\begin{figure}[htbp]
  \centering
  \small
  \fbox{
  \begin{tabular}{c}
  $
      \begin{array}{c}
      \\[0pt]
        \irule{}
              {\semexp[\rho,\sigma]{\nullE} = \nullSem}
        \qquad
        \irule{}
              {\semexp[\rho,\sigma]{v} = v}
        \qquad
        \irule{}
              {\semexp[\rho,\sigma]{x} = (\rho \cup \sigma)(x)}
        \qquad
        \irule{\semexp[\rho,\sigma]{e} = \tokT}
              {\semexp[\rho,\sigma]{\balance{e}} = \sigma(\tokT)}
        \\[15pt]
        \irule{\semexp[\rho,\sigma]{e_1} = f \in \ValUB \rightarrow \ValU \quad
          \semexp[\rho,\sigma]{e_2} = v \in \ValUB
        }
        {\semexp[\rho,\sigma]{e_1[e_2]} = f(v)}       
        \\[15pt]
        \irule{\semexp[\rho,\sigma]{e_1} = n_1 \quad
          \semexp[\rho,\sigma]{e_2} = n_2 \quad n_1 \circ n_2 = n \in \Nat}
              {\semexp[\rho,\sigma]{e_1 \circ e_2} = n}
        \\[15pt] 
        %% \irule
        %%     {\semexp[\rho,\sigma]{e} = \rk{\nonceR}{n}}
        %%     {\semexp[\rho,\sigma]{\secE{e}} = n}
        %% \\[15pt]
        \irule
            {\semexp[\rho,\sigma]{e_1} = v_1 \quad \cdots \quad
              \semexp[\rho,\sigma]{e_n} = v_n \quad
              \hashSem{v_1 \cdots v_n} = v}
            {\semexp[\rho,\sigma]{\hashE{e_1,\ldots,e_n}} = v}             
        %% \\[15pt]
        %% \irule
        %%     {\semexp[\rho,\sigma]{e_1} = \rk{\nonceR}{n} \quad
        %%       \semexp[\rho,\sigma]{e_2} = \ck{\nonceRi}{n'}}
        %%     {\semexp[\rho,\sigma]{\verE{e_1}{e_2}} = (\nonceR=\nonceRi \,\land\, n=n')}             
   \end{array}
   $
   \\[70pt]
   $
    \begin{array}{c}
        \irule
            {}
            {\semcmd[\rho]{\skipE}{(\WmvA,\sigma)} \Rightarrow (\WmvA,\sigma)}    
        \\[15pt]
        \irule
            {\semexp[\rho,\sigma]{e} = \true}
            {\semcmd[\rho]{\reqE e}{(\WmvA,\sigma)} \Rightarrow
              (\WmvA,\sigma)}    
        \qquad
        \irule
            {\semexp[\rho,\sigma]{e} = \false}
            {\semcmd[\rho]{\reqE e}{(\WmvA,\sigma)} \Rightarrow
              \bot}    
        \\[15pt]           
        \irule
            {\semexp[\rho,\sigma]{e} = v}
            {\semcmd[\rho]{x = e}{(\WmvA,\sigma)} \Rightarrow
              (\WmvA,\sigma\setenum{\bind{x}{v}})}    
        \\[15pt]
        \irule
            {\begin{array}{c}
                \semexp[\rho,\sigma]{x} = f \quad          
                \semexp[\rho,\sigma]{e_1} = k \quad
                \semexp[\rho,\sigma]{e_2} = v \quad
                f' = f\setenum{\bind{k}{v}}
              \end{array}
            }
            {\semcmd[\rho]{x[e_1] = e_2}{(\WmvA,\sigma)} \Rightarrow
              (\WmvA,\sigma\setenum{\bind{x}{f'}})}
        \\[15pt]
        \irule
            {\begin{array}{c}
                \semexp[\rho,\sigma]{e_1} = \pmvA \quad
                \semexp[\rho,\sigma]{e_2} = n \quad
                \semexp[\rho,\sigma]{e_3} = \tokT \quad
                \sigma(\tokT) \geq n
              \end{array}
            }
            {\semcmd[\rho]{\txout{e_1}{e_2}{e_3}}{(\WmvA,\sigma)} \Rightarrow
              (\WmvA\setenum{\bind{\pmvA}{\WmvA(\pmvA) + \waltok{n\,}{\,\tokT}}},\sigma - \waltok{n}{\tokT})}
        \\[15pt]
        \irule
            {\begin{array}{c}
                \semcmd[\rho]{s_1}{(\WmvA,\sigma)} \Rightarrow (\WmvA'',\sigma'')
                \quad
                \semcmd[\rho]{s_2}{(\WmvA'',\sigma'')} \Rightarrow (\WmvA',\sigma')
              \end{array}
            }
            {\semcmd[\rho]{s_1;s_2}{(\WmvA,\sigma)} \Rightarrow
              (\WmvA\setenum{\bind{\pmvA}{\WmvA(\pmvA) + \waltok{n\,}{\,\tokT}}},\sigma - \waltok{n}{\tokT})}
        \\[15pt]
        \irule
            {\begin{array}{c}
                \semexp[\rho,\sigma]{e} = b \quad
                \semcmd[\rho]{s_b}{(\WmvA,\sigma)} \Rightarrow (\WmvA',\sigma')
              \end{array}
            }
            {\semcmd[\rho]{\ifE{e}{s_{\true}}{s_{\false}}}{(\WmvA,\sigma)} \Rightarrow (\WmvA',\sigma')}
    \end{array}
    $
    \\[120pt]
    $
      \begin{array}{c}
        \irule
            {}
            {\semcmd{\varepsilon}{(\WmvA,\sigma)} \Rightarrow_{\it arg} (\WmvA,\sigma)}
        \qquad
        \irule
            {\semcmd{\overrightarrow{\textit{txarg}}}{(\WmvA,\sigma)} \Rightarrow_{\it arg} (\WmvAi,\sigma')}
            {\semcmd{v ; \overrightarrow{\textit{txarg}}}{(\WmvA,\sigma)} \Rightarrow_{\it arg} (\WmvAi,\sigma')}
        \\[20pt]
        \irule
            {\begin{array}{c}
                \WmvA(\pmvA)(\tokT) \geq n
                \quad
                \semcmd{\overrightarrow{\textit{txarg}}}{(\WmvA\setenum{\bind{\pmvA}{\WmvA(\pmvA) - \waltok{n\,}{\,\tokT}}},\sigma + \waltok{n}{\tokT})} \Rightarrow_{\it arg}
                (\WmvAi,\sigma')
                \\[2pt]
              \end{array}
            }
            {\semcmd{\txin{\pmvA}{n}{\tokT}; \overrightarrow{\textit{txarg}}}{(\WmvA,\sigma)} \Rightarrow_{\it arg}
              (\WmvAi,\sigma')}
    \\[30pt]
            \irule
            {\begin{array}{c}
                \txT = \procF(\overrightarrow{\textit{txarg}})@n
                \qquad
                \procF(\overrightarrow{\textit{parg}}) \braceleft s \braceright \in \contrC
                \qquad
                \txT \not\in \TxT         
                \\[5pt]
                (\overrightarrow{\textit{parg}}) \rho = \overrightarrow{\textit{txarg}}
                \qquad
                \semcmd{\overrightarrow{\textit{txarg}}}{(\WmvA,\sigma)} \Rightarrow_{\it arg} (\WmvAii,\sigma'')
                % \semexp[\rho,\sigma]{e} = \true          
                \qquad
                \semcmd[\rho]{s}{(\WmvAii,\sigma'')} \Rightarrow (\WmvAi,\sigma')
                \\[2pt]
            \end{array}}
            {(\WmvA,(\sigma,\TxT)) \xmapsto{\txT} (\WmvAi,(\sigma',\TxT \cup \setenum{\txT}))}
            \\[20pt]
      \end{array}
    $
    \end{tabular}
   } % fbox
   \caption{Semantics of contracts.}
   \label{fig:txscript:sem}
\end{figure}

\section{Evaluation}
\label{sec:examples}

We now assess the effectiveness of our MEV theory
on a benchmark of real-world contracts.

\subsection{Whitelist}
\label{ex:whitelist}

\begin{thm}
  \txcode{Whitelist} is $\mev{}{}{}$-free, for all $\sysS$ and~$\TxT$. 
\end{thm}
\begin{proof}
  Let $\sysS$ be such that \txcode{Whitelist}
  contains $\waltok{n}{\tokT}$ with $n>0$.
  Let $\TxT$ be arbitrary, and
  let $\PmvB$ be a cofinite set \emph{not} including $\pmvA$.
  For all $\PmvA \subseteq \PmvB$ and for all redistributions
  $\waldistr{}{\sysS}{}{\sysSi}$,
  % $\waldistr{\PmvU}{\sysS}{\PmvB}{\sysSi}$,
  we have that $\mev{\PmvA}{\sysSi}{\TxT} = 0$,
  hence the $\max{}$ in~\eqref{eq:adv-mev} is zero.
  Therefore, $\mev{}{\sysS}{\TxT} = 0$:
  indeed, any reachable state is MEV-free in any mempool.
  % and mempool $\TxT = \setenum{\txcode{pay}(\txin{\pmvA}{n}{\tokT})}$,
  Note that taking the $\min{}$ \wrt all cofinite $\PmvB$ in \Cref{def:adv-mev}
  is instrumental to exclude from the potential adversaries those actors
  which are assigned a privileged role by the contract, as 
  $\pmvA$ in \txcode{Whitelist}.
  In this way, adversaries cannot exploit their identities to extract MEV.
\end{proof}

\subsection{Blacklist}
\label{ex:blacklist}

\begin{thm}
  \label{th:blacklist}
  \txcode{Blacklist} is \emph{not} $\mev{}{}{}$-free,
  for all $\TxT$ and
  all $\sysS$ where both wallets and the contract
  contain at least $\waltok{1}{\tokT}$.
\end{thm}
\begin{proof}
  Let $\PmvB$ be any cofinite set of actors,
  let $\TxT$ be an arbitrary mempool,
  and assume that \contract{Blacklist} contains
  $\waltok{n}{\tokT}$ with $n>0$.  
  Let $\PmvA = \PmvB$, and let
  % any wallet redistribution
  $\waldistr{}{\sysS}{}{\sysSi}$
  % $\waldistr{\PmvU}{\sysS}{\PmvB}{\sysSi}$
  assign at least $\waltok{1}{\tokT}$ to some $\pmvM \neq \pmvA$ in $\PmvA$.
  This maximizes $\mev{\PmvA}{\sysSi}{\TxT} = n \cdot \wealth{}{}\idxfun{\tokT}$. 
  Therefore, taking the $\min{}$ \wrt all cofinite sets of actors
  yields $\mev{}{\sysS}{\TxT} = n \cdot \wealth{}{}\idxfun{\tokT}$,
  and so the contract is \emph{not} MEV-free, as expected.
  Note that some redistributions would lead to zero MEV:
  \eg, this is the case when all tokens are assigned to the blacklisted $\pmvA$.
  To avoid this issue,
  the $\max{}$ in~\eqref{eq:adv-mev} allows to consider the MEV
  resulting from the most favourable redistribution for the adversary.
  Note that computing the $\min{}$ \wrt all \emph{cofinite} $\PmvB$ 
  ensures that we can always assign tokens to non-blacklisted actors.
  % (provided that contracts can only blacklist a finite set of actors).
  % Since the adversarial MEV is extractable also with an empty mempool,
  % this is a case of a displacement attack.
  % \hfill\qedex  
\end{proof}

\subsection{Bank}
\label{ex:bank}

\begin{thm}
  \label{ex:adv-mev:bank}
  % As a further sanity check of~\Cref{def:adv-mev},
  \txcode{Bank} is $\mev{}{}{}$-free, for all $\sysS$ and finite $\TxT$.
\end{thm}
\begin{proof}
  Let $\PmvB$ be cofinite and not including any actor in $\TxT$
  or in the contract state (which can only mention finitely many actors).
  Observe that $\mev{\PmvA}{\sysSi}{\TxT} = 0$,
  for any $\PmvA \subseteq \PmvB$
  and token redistribution $\waldistr{}{\sysS}{}{\sysSi}$.
  Indeed, there are only two ways for $\PmvA$
  to extract tokens from the contract:
  via a \txcode{wdraw}, or a \txcode{xfer} followed by a \txcode{wdraw}.
  Now, since the contract state does not mention actors in $\PmvA$,
  \txcode{wdraw} transactions fired from $\PmvA$ are invalid
  (unless they are preceded by a \txcode{deposit},
  but this would lead to a non-positive gain).
  Since the mempool $\TxT$ does not mention actors in $\PmvB$,
  \txcode{xfer} transactions to actors in $\PmvA \subseteq \PmvB$
  are not forgeable by $\PmvA$.
  Therefore, the $\max{}$ in~\eqref{eq:adv-mev} is zero,
  and so any contract state is MEV-free.
  % \hfill\qedex
\end{proof}

%%%%%%%%%%%%%%%%%%%%%%%%%%%%%%%%%%%%%%%%%%%%%%%%%%%%%%%%%%%%%%%%%%%%%%%%%%%
%
% HTLC
%
%%%%%%%%%%%%%%%%%%%%%%%%%%%%%%%%%%%%%%%%%%%%%%%%%%%%%%%%%%%%%%%%%%%%%%%%%%%

\subsection{HTLC}
\label{ex:htlc}
% \label{ex:badhtlc}

\label{ex:txscript:badhtlc:mall}
The \txcode{BadHTLC} contract in~\Cref{fig:badhtlc}
implements a Hash-Time Locked Contract, where
a committer promises that she will either reveal a secret
within a certain deadline, or pay a penalty of \mbox{$\waltok{1}{\tokT}$} to anyone after the deadline.
The procedure \txcode{commit} initialises the contract state,
setting the variable \code{commitment}.
The parameter ``$\txin{\pmv{a}}{1}{\tokT}$''
asks any $\pmv{a}$ (who becomes the committer) to deposit \mbox{$\waltok{1}{\tokT}$} into the contract.
% The \reqE ensures that the procedure can be called exactly once.
%
The procedure \txcode{reveal} allows anyone
to redeem the deposit by revealing the secret: there,
the parameter ``$\txsig{\pmv{a}}$'' requires the transaction
to be signed by $\pmv{a}$;
the command $\txout{\pmv{a}}{\balance{\tokT}}{\tokT}$
transfers to $\pmv{a}$ all the tokens $\tokT$ stored in the contract.
Dually, \txcode{timeout} allows anyone to redeem the deposit
after the deadline, triggered by a time oracle who
signs the transaction.

We anticipate that \txcode{BadHTLC} suffers from a MEV attack
in a state where the secret has been committed but not revealed yet, like \eg:
\begin{equation}
\label{eq:txscript:badhtlc:mall}
\sysS
\; = \;
\walu{\pmvA}{0}{\tokT}
\mid
\walpmv{\contract{BadHTLC}}{\waltok{1}{\tokT}, \code{commitment}=H(s)}
\mid
\cdots
\end{equation}
In state $\sysS$, $\pmvA$ has \mbox{$\waltok{0}{\tokT}$}
and \contract{BadHTLC} has \mbox{$\waltok{1}{\tokT}$}.
% and its state has $\code{xb}=\pmvB$ and $\code{yc}=h$.
Suppose the mempool $\TxT$ contains a transaction
\mbox{$\txT[\pmvA] = \txcode{reveal}(\txsig{\pmvA},s)$}
sent by $\pmvA$ to redeem the deposit.
Since the secret $s$ is public in the mempool,
any adversary $\pmvM$ can craft a transaction
$\txT[\pmvM] = \txcode{reveal}(\txsig{\pmvM},s)$
by combining their own knowledge (to provide $\pmvM$'s signature)
with that of $\TxT$ (to provide~$s$).
Therefore, $\txT[\pmvM] \in \mall{\setenum{\pmvM}}{\TxT}$, and
so $\pmvM$ can extract MEV
by front-running $\txT[\pmvA]$ with $\txT[\pmvM]$.
Indeed, we have that:
% This allows $\pmvM$ to redeem $\pmvA$'s deposit:
\[
\sysS
\xrightarrow{\txT[\pmvM]}
\walu{\pmvA}{0}{\tokT} \mid
\walu{\pmvM}{1}{\tokT} \mid
\walpmv{\contract{BadHTLC}}{\waltok{0}{\tokT}, \cdots}
\mid
\cdots
\]
Note instead that $\pmvM$ \emph{alone} cannot deduce the transaction that would allow her to trigger the timeout. 
Formally,
\(
\txY[\pmvM]
=
\txcode{timeout}(\txsig{\pmvM},\txsig{\pmv{Oracle}})
\not\in \mall{\setenum{\pmvM}}{\TxT}
\).
Crafting $\txY[\pmvM]$ is not possible without the cooperation
of the oracle, even if $\pmvM$ knows a transaction
\(
\txcode{timeout}(\txsig{\pmvB},\txsig{\pmv{Oracle}})
\)
from a past interaction, 
since each signature is tied to a specific transaction.
% values like ``$\txsig{\pmv{Oracle}}$''
% are \emph{not} reusable outside the transaction where they occur,
% since they are linked to the transaction by the oracle's signature.
% Doing this requires the collaboration of the oracle, \ie,
% $\txZ \in \mall{\setenum{\pmvM,\pmv{Oracle}}}{\TxT}$.
% 
We remark that this attack requires the adversary to combine
private and mempool knowledge,
which does not seem to be properly accounted for in
current MEV formalizations~\cite{Babel23clockwork,Salles21formalization,Mazorra22price}.

\medskip
The \txcode{HTLC} contract in~\Cref{fig:htlc} implements a fix to the \contract{BadHTLC} in~\Cref{fig:badhtlc}.
The fix consists in constraining the reveal method to only transfer tokens to the committer.
The fixed contract is MEV-free in any state and finite mempool.

\begin{figure}[t]
  \begin{lstlisting}[language=txscript,morekeywords={HTLC,commit,reveal,timeout},classoffset=3,morekeywords={a,b,A,committer,verifier,Oracle},keywordstyle=\pmvColor,classoffset=4,morekeywords={t,T},keywordstyle=\tokColor,frame=single]
contract HTLC {
  commit(a pays 1:T,b,c) { // a must send 1 token T to the contract
    require verifier==null; committer=a; verifier=b; commitment=c;
  }
  reveal(a sig,y) { // a must sign the transaction
    require a==committer && balance(T)>0 && H(y)==commitment;
    transfer(a,balance(T):T) // send the whole balance of tokens T to a
  }
  timeout(Oracle sig) { // Oracle must sign the transaction
    require balance(T)>0; // the contract must have some tokens T
    transfer(verifier,balance(T):T); 
    verifier=null; 
  }
}
  \end{lstlisting}
  \negcaptionspace  
  \caption{A Hash Time Locked Contract.}
  \label{fig:htlc}
\end{figure}

%%%%%%%%%%%%%%%%%%%%%%%%%%%%%%%%%%%%%%%%%%%%%%%%%%%%%%%%%%%%%%%%%%%%%%%%%%%
%
% Coinpusher
%
%%%%%%%%%%%%%%%%%%%%%%%%%%%%%%%%%%%%%%%%%%%%%%%%%%%%%%%%%%%%%%%%%%%%%%%%%%%

\subsection{Coin Pusher}
\label{ex:coinpusher}

The following example shows the case where extracting MEV
requires to fire some transactions found in the mempool.
The \txcode{CoinPusher} contract in~\Cref{fig:coinpusher}
transfers all its tokens to anyone who
makes the balance exceed $\waltok{100}{\tokT}$.
Let $\sysS$ be a state where the contract has $\waltok{0}{\tokT}$,
$\pmvA$ has $\waltok{1}{\tokT}$, and
$\pmvM$ has $\waltok{99}{\tokT}$.
In the empty mempool, $\pmvM$ has no MEV in $\sysS$,
since she has not enough balance to trigger the push of tokens from the contract.
Instead, with a mempool
$\TxT = \setenum{\txcode{play}(\txin{\pmvA}{1}{\tokT})}$,
we have that $\pmvM$ can fire the sequence
$\txcode{play}(\txin{\pmvA}{1}{\tokT}) \, \txcode{play}(\txin{\pmvM}{99}{\tokT}) \in \mall{\pmvM}{\TxT}^*$,
obtaining $\mev{\setenum{\pmvM}}{\sysS}{\TxT} = 1 \cdot \idxfun{\tokT}$.

\begin{figure}
  \begin{lstlisting}[language=txscript,morekeywords={CoinPusher,play},classoffset=3,morekeywords={a,b,A,Oracle},keywordstyle=\pmvColor,classoffset=4,morekeywords={t,T},keywordstyle=\tokColor,frame=single]
contract CoinPusher {
  play(a pays x:T) {
    require x>0;
    if balance(T)>=100 then transfer(a,balance(T):T) else skip;
  }
}
  \end{lstlisting}
  \negcaptionspace
  \caption{A \txcode{CoinPusher} contract.}
  \label{fig:coinpusher}
\end{figure}

%%%%%%%%%%%%%%%%%%%%%%%%%%%%%%%%%%%%%%%%%%%%%%%%%%%%%%%%%%%%%%%%%%%%%%%%%%%
%
% Crowdfund
%
%%%%%%%%%%%%%%%%%%%%%%%%%%%%%%%%%%%%%%%%%%%%%%%%%%%%%%%%%%%%%%%%%%%%%%%%%%%

\subsection{Crowdfund}
\label{ex:crowdfund}

\begin{figure}
  \begin{lstlisting}[language=txscript,morekeywords={Crowdfund,init,donate,claim,timeout,refund},classoffset=3,morekeywords={a,b,o,A,Oracle,rcv},keywordstyle=\pmvColor,classoffset=4,morekeywords={t,T},keywordstyle=\tokColor,frame=single]
contract Crowdfund {
  init(a,n) {
    require rcv==null;
    rcv=a; goal=n; isOpen=true;
  }
  donate(a pays x:T) {
    require isOpen && x>0;
    if amount[a]=null then amount[a]=x else amount[a]=amount[a]+x;
  }  
  claim() {
    require balance(T)>=goal && isOpen;
    transfer(rcv,balance(T):T); rcv=null; isOpen=false;
  }
  timeout(Oracle sig) {
    isOpen=false;
  }
  refund(a sig) {
    require amount[a]>0 && !isOpen;
    transfer(a,amount[a]:T); amount[a]=0;
  }
}
  \end{lstlisting}
  \negcaptionspace  
  \caption{A crowdfunding contract.}
  \label{fig:crowdfund}
\end{figure}

Consider the \txcode{Crowdfund} contract in~\Cref{fig:crowdfund}.
Let $\sysS$ be a state where
someone has donated $50$ tokens (with a goal of $51$):
\[
\walu{\pmvB}{0}{\tokT} \mid
\walpmv{\contract{Crowdfund}}{\waltok{50}{\tokT},\code{rcv}=\pmvB,\code{goal}=51,\code{isOpen}=\true}
\mid \cdots
\]
% Since $\pmvB$ has no tokens, he cannot \txcode{donate} the missing
% token that would enable him to withdraw the funds through \txcode{claim}.
% Therefore, $\mev{\setenum{\pmvB}}{\sysS}{\emptyset} = 0$.
For $\pmvB$ to have a positive MEV,
the mempool must contain a transaction where some $\pmvA \neq \pmvB$
donates at least \mbox{$\waltok{1}{\tokT}$}.
For instance, if $\pmvA$ has at least \mbox{$\waltok{10}{\tokT}$} in $\sysS$,
then $\pmvB$ has MEV for
$\TxT = \setenum{\txcode{donate}(\txin{\pmvA}{10}{\tokT})}$.
MEV can be extracted by firing the sequence
$\txcode{donate}(\txin{\pmvA}{10}{\tokT})\,\txcode{claim}() \in \mall{\setenum{\pmvB}}{\emptyset}$, 
resulting in
$\mev{\setenum{\pmvB}}{\sysS}{\TxT} = 60 \cdot \wealth{}{}\idxfun{\tokT} > 0$.

Note however that $\sysS$ is MEV-free for any mempool.
Indeed, choosing a cofinite $\PmvB$ \emph{not} including $\pmvB$
gives $\mev{\PmvA}{\sysSi}{\TxT} = 0$ for any $\PmvA \subseteq \PmvB$
and any $\waldistr{}{\sysS}{}{\sysSi}$,
and so the $\max$ in~\eqref{eq:adv-mev} is zero.
This is coherent with the intuition:
indeed, after the \txcode{init},
the identity of the actor who can extract tokens
is singled out in the contract state,
and it cannot be replaced by an arbitrary miner/validator.
% Therefore, also this way of extracting MEV is not considered an attack.
In general, \txcode{Crowdfund} is MEV-free after \txcode{init}
in any finite mempool.
The only case where the contract is not MEV-free
is the (hardly realistic) one
where \txcode{init} has not been performed yet,
and the mempool contains a \txcode{donate}.

%%%%%%%%%%%%%%%%%%%%%%%%%%%%%%%%%%%%%%%%%%%%%%%%%%%%%%%%%%%%%%%%%%%%%%%%%%%
%
% Crowdfund
%
%%%%%%%%%%%%%%%%%%%%%%%%%%%%%%%%%%%%%%%%%%%%%%%%%%%%%%%%%%%%%%%%%%%%%%%%%%%

\subsection{Automated Market Makers}
\label{ex:amm}

We now formally prove the well-known fact that
constant-product Automated Market Makers~\cite{Angeris20aft,Angeris21analysis,BCL22amm},
a wide class of decentralized exchanges including mainstream platforms
like Uniswap, Curve and Balancer \cite{uniswapimpl,curveimpl,balancerpaper},
are \emph{not} MEV-free.

\begin{figure}[t]
\begin{lstlisting}[language=txscript,,morekeywords={AMM,addliq,swap0,swap1},classoffset=3,morekeywords={a0,a1,a},keywordstyle=\pmvColor,classoffset=4,morekeywords={T0,T1},keywordstyle=\tokColor,frame=single]
contract AMM {
  addliq(a0 pays x0:T0,a1 pays x1:T1) {
    require balance(T0) * (balance(T1)-x1) == (balance(T0)-x0) * balance(T1)
  }
  swap0(a pays x:T0,ymin) { 
    y = (x * balance(T1)) / balance(T0);
    require y>=ymin && y<balance(T1);
    transfer(a,y:T1); 
  }
  swap1(a pays x:T1,ymin) {
    y = (x * balance(T0)) / balance(T1);
    require y>=ymin && y<balance(T0);
    transfer(a,y:T0);
  }
}
\end{lstlisting}
\negcaptionspace
\caption{A constant-product AMM contract.}
\label{fig:amm}
\end{figure}

Consider the \contract{AMM} contract in~\Cref{fig:amm}.
Users can add reserves of $\tokT[0]$ and $\tokT[1]$
to the contract with \txcode{addliq}
(preserving the reserves ratio), and
exchange units of $\tokT[0]$ with units of $\tokT[1]$
with \txcode{swap0} and \txcode{swap1}.
More specifically,
$\txcode{swap0}(\txin{\pmvA}{x}{\tokT[0]},y_{\min})$
allows $\pmvA$ to send $\waltok{x}{\tokT[0]}$ to the contract,
and receive at least \mbox{$\waltok{y_{\min}}{\tokT[1]}$} in exchange.
Symmetrically, $\txcode{swap1}(\txin{\pmvA}{x}{\tokT[1]},y_{\min})$
allows $\pmvA$ to exchange $\waltok{x}{\tokT[1]}$ for at least
\mbox{$\waltok{y_{\min}}{\tokT[0]}$}.

We show how adversaries can extract MEV from the contract
through the so-called \emph{sandwich attack}~\cite{Zhou21high}.
Assume that the token prices are
$\wealth{}{}\idxfun{\tokT[0]} = 4$
$\wealth{}{}\idxfun{\tokT[1]} = 9$. 
Let:
\[
\sysS =
% \walu{\pmvM}{??}{\tokT[1]} \mid
\walpmv{\contract{AMM}}{\waltok{6}{\tokT[0]},\waltok{6}{\tokT[1]}} \mid
\walu{\pmvA}{3}{\tokT[0]} \mid  
\cdots
\] 
% be a state where $\WmvA$ contains at least $\waltok{6}{\tokT[0]}$.
Since the exchange rate given by the AMM
($\waltok{1}{\tokT[0]}$ for $\waltok{1}{\tokT[1]}$)
is more convenient than the exchange rate given by the token prices
($\waltok{9}{\tokT[0]}$ for $\waltok{4}{\tokT[1]}$),
$\pmvA$ would have a positive gain by firing
$\txT[\pmvA] = \txcode{swap1}(\txin{\pmvA}{3}{\tokT[1]},1)$
in the \emph{current} state.
Indeed, we would have:
\begin{align*}
  \sysS
  & \xrightarrow{\txT[\pmvA]} 
  % \walpmv{\pmvM}{\waltok{6}{\tokT[0]}} \mid
  \walpmv{\contract{AMM}}{\waltok{9}{\tokT[0]},\waltok{4}{\tokT[1]}} \mid
  \walpmv{\pmvA}{\waltok{0}{\tokT[0]},\waltok{2}{\tokT[1]}} \mid  
  \cdots
\end{align*}
and so $\pmvA$'s gain would be
\(
\gain{\pmvA}{\sysS}{\txT[\pmvA]}
= 2 \cdot \wealth{}{}\idxfun{\tokT[1]} - 3 \cdot \wealth{}{}\idxfun{\tokT[0]}
= 6 > 0
\).
In a sandwdich attack, the adversary has access to the mempool
$\TxT = \setenum{\txT[\pmvA]}$,
and can have a positive gain to $\pmvA$'s detriment.
Let $\PmvB$ be any cofinite adversary,
and pick $\pmvM \neq \pmvA$ in $\PmvB$.
Assume a token redistribution which assigns $\waltok{3}{\tokT[0]}$ to $\pmvM$,
and let $\PmvA = \setenum{\pmvM}$.
We show that $\PmvA$ has a positive MEV,
and so $(\sysS,\TxT)$ is \emph{not} MEV-free.

The idea of the sandwich attack is the following:
\begin{enumerate}
  
\item $\pmvM$ front-runs $\txT[\pmvA]$ to make the AMM reach
  an equilibrium, where the AMM exchange rate equals
  the exchange rate given by the external token prices.
  This is done through 
  $\txT[\pmvM] = \txcode{swap0}(\txin{\pmvM}{3}{\tokT[0]},2)$;
  
\item then, $\pmvM$ plays $\txT[\pmvA]$.
  In this way, $\pmvA$ will receive fewer units of $\tokT[1]$
  than she would have obtained in $\sysS$.
  Indeed, since the AMM is in equilibrium after $\txT[\pmvM]$,
  $\pmvA$ will have a negative gain;
  
\item finally, $\pmvM$ closes the sandwich with a transaction
  that makes the AMM reach again the equilibrium state.
  This is done through
  $\txTi[\pmvM] = \txcode{swap1}(\txin{\pmvM}{1}{\tokT[1]},3)$.    
\end{enumerate}

\noindent
More precisely, we have the following computation:
\begin{align*}
  \sysS
  & \xrightarrow{\txT[\pmvM]} 
  \walpmv{\contract{AMM}}{\waltok{9}{\tokT[0]},\waltok{4}{\tokT[1]}} \mid
  \walpmv{\pmvM}{\waltok{0}{\tokT[0]},\waltok{2}{\tokT[1]}} \mid
  \walpmv{\pmvA}{\waltok{3}{\tokT[0]}} \mid 
  \cdots
  \\
  & \xrightarrow{\txT[\pmvA]} 
  \walpmv{\contract{AMM}}{\waltok{12}{\tokT[0]},\waltok{3}{\tokT[1]}} \mid
  \walpmv{\pmvM}{\waltok{0}{\tokT[0]},\waltok{2}{\tokT[1]}} \mid
  \walpmv{\pmvA}{\waltok{0}{\tokT[0]},\waltok{1}{\tokT[1]}} \mid 
  \cdots
  \\
  & \xrightarrow{\txTi[\pmvM]} 
  \walpmv{\contract{AMM}}{\waltok{9}{\tokT[0]},\waltok{4}{\tokT[1]}} \mid
  \walpmv{\pmvM}{\waltok{3}{\tokT[0]},\waltok{1}{\tokT[1]}} \mid
  \walpmv{\pmvA}{\waltok{0}{\tokT[0]},\waltok{1}{\tokT[1]}} \mid 
  \cdots  
\end{align*}
The resulting gains for $\pmvM$ and $\pmvA$ are:
\begin{align*}
  \gain{\pmvA}{\sysS}{\txT[\pmvM]\txT[\pmvA]\txTi[\pmvM]}
  & = 1 \cdot \wealth{}{}\idxfun{\tokT[1]} - 3 \cdot \wealth{}{}\idxfun{\tokT[0]}
  = -3 < 0
  \\
  \gain{\pmvM}{\sysS}{\txT[\pmvM]\txT[\pmvA]\txTi[\pmvM]}
  & = 1 \cdot \wealth{}{}\idxfun{\tokT[1]}
  = 9 > 0
\end{align*}
Since this is the optimal strategy for $\pmvM$
(as shown \eg, in~\cite{BCL22fc}),
we conclude that $\mev{}{\sysS}{\TxT} = 9$.

\subsection{Bounty contract}
\label{ex:bounty}

\begin{figure}[t]
  \begin{lstlisting}[language=txscript,morekeywords={BadBounty,init,commit,ver,claim},classoffset=3,morekeywords={a,b,o,A,Oracle},keywordstyle=\pmvColor,classoffset=4,morekeywords={t,T},keywordstyle=\tokColor,frame=single]
contract BadBounty {
  init(b pays n:T) { }
  claim(a,sent_sol) {
    require balance(T)>0;
    require <sent_sol is really the solution>;
    transfer(a,balance(T):T);
  }
}  
  \end{lstlisting}
  \negcaptionspace  
  \caption{A MEV-leaking bounty contract.}
  \label{fig:badbounty}
\end{figure}

\begin{figure}
  \begin{lstlisting}[language=txscript,morekeywords={Bounty,init,commit,ver,claim},classoffset=3,morekeywords={a,b,o,A,Oracle},keywordstyle=\pmvColor,classoffset=4,morekeywords={t,T},keywordstyle=\tokColor,frame=single]
contract Bounty {
  init(b pays n:T) { }
  commit(a,sol_cmt) {
    require !found && balance(T)>0;
    usr[next] = a;
    cmt[next] = sol_cmt;
    next = next+1;
  }
  ver(sent_sol) {
    require !found && balance(T)>0;
    require <sent_sol is really the solution>;
    found = true;     
    sol = sent_sol;
  }
  claim() {
    require found && i_claimed<next;
    if (cmt[i_claimed] == H(usr[i_claimed],sol))
    then transfer(usr[i_claimed],balance(T):T) 
    else i_claimed = i_claimed+1;
  }
}  
  \end{lstlisting}
  \negcaptionspace  
  \caption{A MEV-free bounty contract.}
  \label{fig:bounty}
\end{figure}

Consider a bounty contract which rewards the first user who
submits the solution to a puzzle.
For simplicity we assume that the solution is unique,
and that only a certain actor $\pmvA$ can find the solution.
We start with a na\"ive contract \txcode{BadBounty} in~\Cref{fig:badbounty}.
Here, $\pmvA$ submits the solution by sending
$\txT[\pmvA] = \txcode{claim}(\pmvA,s)$,
with her name and the solution $s$ as parameters.
In this state, where $\txT[\pmvA]$ is in the mempool,
the contract is \emph{not} MEV-free.
Indeed, an adversary $\pmvM$ can front-run $\txT[\pmvA]$
with $\txcode{claim}(\pmvM,s)$, which will release the bounty.

Fixing the contract to make it MEV-free requires some ingenuity.
In the \txcode{Bounty} contract in~\Cref{fig:bounty},
users follow a commit-reveal protocol.
First, they \txcode{commit} their name $\pmvA$ together with the
hash of the pair $(\pmvA,s)$, where $s$ is the bounty solution.
Note that an adversary $\pmvM$ can front-run the commit
replaying the sniffed hashes together with her name $\pmvM$,
but as we will see, this is not enough to extract MEV.
All the commits are recorded in two maps \code{usr} and \code{cmt}:
their ordering is controlled by the adversary.
After $\pmvA$'s commit is finalised on the blockchain,
$\pmvA$ calls \txcode{ver} to submit the actual solution $s$,
making it public.
When $\pmvM$ sees this transaction in the mempool, she can
front-run it with the commit of the pair $(\pmvM,s)$,
which is however recorded \emph{after} $\pmvA$'s commit in the maps.
At this point $\pmvA$ repeatedly sends \txcode{claim} transactions
until receiving the bounty.
The \txcode{claim} procedure scans the maps from the first commit onwards,
releasing the tokens to the \emph{first} user $\pmvB$ who has submitted
the hash of pair $(\pmvB,s)$.
So, even though $\pmvM$ has front-run $\pmvA$'s commit with her own,
she will not receive the bounty, since the hashed name does not correspond
to the name in the \code{usr} map.

In any state $\sysS$ reached when $\pmvA$ follows this protocol,
and any mempool $\TxT$ which contains the next move of $\pmvA$,
the contract is MEV-free.
Indeed, $\mev{\PmvA}{\sysS}{\TxT} = 0$ whenever $\pmvA \not\in \PmvA$,
and so the $\max{}$ in~\eqref{eq:adv-mev} is zero.

\subsection{Lending Pools}
\label{ex:lp}

We analyse MEV in a Lending Pool contract
inspired by Aave~\cite{aave} (see~\Cref{fig:lp}).
In general, Lending Pools implement loan markets
and feature complex economic mechanisms to incentivize users
to deposit tokens and repay loans.
To keep our presentation self-contained, our \txcode{LP} contract
makes several simplifications \wrt mainstream implementations;
these simplifications, however, are irrelevant to the analysis of MEV.

Users can \txcode{deposit} tokens in the \txcode{LP},
obtaining in return virtual tokens minted by the contract.
More precisely, upon a deposit of $\waltok{x}{\tokT}$
in a pool with reserves of $\waltok{n}{\tokT}$,
the user will receive $x \cdot \code{X}(n)$ minted tokens,
where the exchange rate $\code{X}(n)$ is given by:
\[
\code{X}(n) = \begin{cases}
  1 & \text{if $\code{totM}=0$} \\
  \frac{n \, + \, \code{totD}\cdot\code{ir}}{\code{totM}} & \text{otherwise}
\end{cases}
\]
where
\code{totM} is the total number of minted tokens, and
$\code{totD}\cdot\code{ir}$ is the total amount of debt
(more on this below).
A first simplification here is that our \txcode{LP} manages
a \emph{single} token type $\tokT$, while actual implementations
allow \eg, a user to lend tokens of a certain type
and borrow tokens of another type.
Although this simplification makes our \txcode{LP} not interesting 
for practical purposes, as said before it is immaterial for MEV.
The use of minted tokens is twofold.
On the one hand, they are an incentive to lend:
users deposit speculating that the minted tokens will be redeemable 
for a value greater than the original deposit.
On the other hand, they are used as a collateral when borrowing tokens:
namely, users can obtain a loan only if they have enough
\emph{collateralization}, that is given by:
\[
\code{C}(\pmvA,n) = \frac{\code{minted[\pmvA]} \cdot \code{X}(n)}{\code{debt}[\pmvA] \cdot \code{ir}}
\]
where $n$ is the reserve of $\tokT$ in the pool, 
$\code{minted[\pmvA]}$ is the amount of minted tokens owned by the user,
and
$\code{debt[\pmvA]}$ is the amount of $\pmvA$'s debt.
The \txcode{borrow} action requires that the user collateralization
after the action is above a given threshold.

Users can \txcode{redeem} their minted tokens for units of $\tokT$,
where the actual amount is obtained by applying the exchange rate.
Also the redeem requires that the collateralization is
above the threshold.

Interests on loans accrue over time: the current interest rate is given by the exponentiation of a base multiplier \code{Imul}$>1$ to the number of blocks that have been produced since the deployment of the contract.
Note that interest accrual may make some borrowers undercollateralized,
exposing them to liquidations.
Namely, a \txcode{liquidate} action allows \emph{anyone} to repay
part of the debt of an undercollateralized user,
and obtain as a reward part of their minted tokens.
The multiplication factor \code{Rliq}$>1$ incentivizes liquidations. 

%% when a user fails to repay a loan. 
%% % Since LPs cannot spontaneously trigger actions, 
%% % LPs incentivize users to seize collaterals: namely, 
%% An unpaid loan of $\pmvA$ can be \emph{liquidated} by $\pmvB$,
%% who pays (part of) $\pmvA$'s loan in return for a discounted amount 
%% of $\pmvA$'s collateral. 
%% For this to be possible, the value of the collateral must be greater  
%% than that of the loan. 
%% % (\ie, the loan is \emph{over-collateralized}). 
%% To incentivize deposits, loans \emph{accrue} interest, 
%% which increase a user's loan amount by the \emph{interest rate}. 

\begin{figure}
  \begin{lstlisting}[language=txscript,morekeywords={LP,init,deposit,borrow,accrue,repay,redeem,liquidate},classoffset=3,morekeywords={a,b,o,A,Oracle},keywordstyle=\pmvColor,classoffset=4,morekeywords={t,T},keywordstyle=\tokColor,frame=single]
contract LP {
  fun ir() { // interest rate, which automatically increases over time
    return Imul^(block.number - start)
  }
  fun X(n) { // exchange rate 1:T = X() minted T
    if totM==0 then return 1
    else return (n + totD*ir())/totM
  }
  fun C(a,n) { // collateralization of a
    return (minted[a]*X(n))/(debt[a]*ir())
  }
  init(a sig,c,r,m) { 
    require balance(T)==0 && r>1 && m>1;
    Cmin = c;   // minimum collateralization
    Rliq = r;   // liquidation bonus
    start = block.number;   // starting time
    Imul = m;   // interest rate multiplier
    totD = 0;   // total debt
    totM = 0;   // total minted tokens    
  }
  deposit(a pays x:T) { 
    y = x/X(balance(T)-x); // received minted tokens
    minted[a] += y;
    totM += y;
  } 
  borrow(a sig,x) { 
    require balance(T)>x;
    transfer(a,x:T); 
    debt[a] += x / ir();
    totD += x / ir(); 
    require C(a,balance(T))>=Cmin;
  } 
  repay(a pays x:T) {
    require debt[a]*ir>=x;
    debt[a] -= x / ir(); 
    totD -= x / ir();
  }
  redeem(a sig,x) { 
    y = x * X(balance(T));
    require minted[a]>=x && balance(T)>=y;
    transfer(a,y:T); 
    minted[a] -= x; 
    totM -= x;
    require C(a,balance(T))>=Cmin
  }
  liquidate(a pays x:T,b) { 
    y = (x / X(balance(T)-x)) * Rliq;
    require debt[b]*ir()>x;    
    require C(b,balance(T)-x)<Cmin;
    require minted[b] >= y;
    minted[a] += y; 
    minted[b] -= y; 
    debt[b] -= x / ir();
    totD -= x / ir();
    require C(b,balance(T))<=Cmin;
  }  
}
  \end{lstlisting}
  \negcaptionspace  
  \caption{A Lending Pool contract.}
  \label{fig:lp}
\end{figure}

We now show that it possible to reach a state $S$ where the adversary can extract MEV.
For simplicity, in our example we use fractional values, rather than integers,
and we assume that $\wealth{}{}\idxfun{\tokT} = 1$.
Consider the sequence of transactions:
\begin{itemize}
\item $\txT[0] = \txcode{init}(\txin{\pmv{O}}{0}{\tokT},1.5,1.3,1.5)$,
  that initializes the contract;
\item $\txT[1] = \txcode{deposit}(\txin{\pmvB}{50}{\tokT})$,
  where $\pmvB$ deposits $\waltok{50}{\tokT}$;  
\item $\txT[2] = \txcode{borrow}(\txin{\pmvB}{0}{\tokT},30)$
  where $\pmvB$ borrows $\waltok{30}{\tokT}$.
\end{itemize}

\noindent
From the initial state:
\[
\sysS[0]
\; = \;\;
\walpmv{\contract{LP}}{\waltok{0}{\tokT}} \mid
\walpmv{\pmvM}{\waltok{100}{\tokT}} \mid
\walpmv{\pmvB}{\waltok{100}{\tokT}} \mid 
\cdots
\]
firing $\txT[0]$ leads to the state:
\bartnote{adjust notation}
\begin{align*}
%  \sysS[0]
%  & \xrightarrow{\txT[01]}
  \sysS[1]
  & =
  \walpmv{\contract{LP}}{\waltok{0}{\tokT},
    \bind{\code{Cmin}}{1.5},\bind{\code{Rliq}}{1.3},\bind{\code{Imul}}{1.5},\cdots} \mid
  \cdots
\end{align*}
where we summarize with $\cdots$ the parts of the state unaffected by the
transition.
% Then, firing $\txT[1]$ increases by $\waltok{100}{\tokT}$
% the \txcode{LP} reserves, and
% decreases by 100 the units of $\tokT$ in $\pmvM$'s wallet.
Firing $\txT[1]$ in $\sysS[1]$ leads to the state:
\begin{align*}
  \sysS[2]
  & =
  \walpmv{\contract{LP}}{\waltok{50}{\tokT},
    \bind{\code{totM}}{50},
    \bind{\code{minted}}{\pmvB \rightarrow 50},\cdots
  } \mid
  % \walpmv{\pmvM}{\waltok{200}{\tokT}} \mid
  \walpmv{\pmvB}{\waltok{50}{\tokT}} \mid
  \cdots
\end{align*}
The map \code{minted} records that $\pmvB$ now has 50 units of the minted token.
% Then, firing $\txT[2]$ performs $\pmvB$'s deposit,
% leading to the state:
% \begin{align*}
%   \sysS[3]
%   & =
%   \walpmv{\contract{LP}}{\waltok{150}{\tokT},
%     \bind{\code{totM}}{150},
%     \bind{\code{minted}}{\pmvM \rightarrow 100, \pmvB \rightarrow 50},\cdots
%   } \mid
%   \walpmv{\pmvB}{\waltok{50}{\tokT}} \mid 
%   \cdots
% \end{align*}
At this point, $\pmvB$ fires $\txT[2]$,
borrowing $\waltok{30}{\tokT}$ from the \txcode{LP}.
In the reached state $\sysS[3]$, these units are transferred from
the contract to $\pmvB$'s wallet,
and the map \code{debt} records that $\pmvB$ has a debt of 30 units:
\begin{align*}
  \sysS[3] 
  & =
  \walpmv{\contract{LP}}{\waltok{20}{\tokT},
    \bind{\code{totM}}{50},
    \bind{\code{totD}}{30},    
    \bind{\code{debt}}{\pmvB \rightarrow 30},\cdots
  } \mid
  \walpmv{\pmvB}{\waltok{80}{\tokT}} \mid 
  \cdots  
\end{align*}

\noindent
Note that the \txcode{borrow} transaction is valid because $\pmvB$'s
collateralization in $\sysS[3]$ is still above the threshold \code{Cmin}:
indeed, we have
$\code{X}(20) = (20+30)/50 = 1$,
and $\pmvB$'s collateralization is
$\code{C}(\pmvB,20) = 50/30 = 1.67 > \code{Cmin} = 1.5$.

Suppose now that the block containing $\txT[0], \txT[1], \txT[2]$ is appended, implicitly updating the interest rate to $\code{ir} = \code{IMul}^1 = 1.5$.
In the blockchain state $\sysS$ at this point, 
$\pmvB$ becomes undercollateralized, since
$\code{X}(20) = (20+(30 \cdot 1.5))/50 = 1.3$,
and $\pmvB$'s collateralization is
$\code{C}(\pmvB,20) = (50 \cdot 1.3)/(30 \cdot 1.5) = 1.444 < \code{Cmin} = 1.5$.

At this point, anyone can liquidate part of $\pmvB$'s debt,
and obtain $\pmvB$'s minted tokens.
To perform the attack, any adversary $\pmvM$ with sufficient balance of $\tokT$ can perform the following sequence of transactions:
\begin{itemize}
\item $\txY[1] = \txcode{liquidate}(\txin{\pmvM}{10}{\tokT},\pmvB)$,
  where $\pmvM$ obtains minted tokens from $\pmvB$'s collateral,
  at a discounted price;
\item $\txY[2] = \txcode{redeem}(\txin{\pmvM}{0}{\tokT},10)$,
  where $\pmvM$ redeems $10$ units of her minted tokens.
\end{itemize}

\noindent
The sequence $\txY[1]\txY[2]$ allows $\pmvM$ to extract MEV from the contract.
% The state resulting after the attack is obtained as follows.
The liquidation $\txY[1]$ transfers \mbox{$\waltok{10}{\tokT}$}
from $\pmvM$ to the \txcode{LP}, and updates the maps \code{debt} and \code{minted}:
\begin{align*}
  \sysS[4]
  & =
  \walpmv{\contract{LP}}{\waltok{30}{\tokT},
    \bind{\code{totD}}{23.333},
    \bind{\code{debt}}{\pmvB \rightarrow 23.333},
    \bind{\code{minted}}{\pmvM \rightarrow 10, \pmvB \rightarrow 40},\cdots  
  } \mid
  \walpmv{\pmvM}{\waltok{90}{\tokT}} \mid 
  \cdots  
\end{align*}
Note that after the liquidation, $\pmvB$ is still under-collateralized:
\[
\code{C}(\pmvB,30) = \frac{40 \cdot 1.3}{23.333 \cdot 1.5} = \frac{52}{35} = 1.485 < \code{Cmin} = 1.5
\]
Finally, the transaction $\txY[2]$ allows $\pmvM$ to redeem $10$ units of the minted token with $\code{y} = 10 \cdot \code{X}(30) = 10 \cdot 1.3 = 13$ units of $\tokT$:
\begin{align*}
  \sysS[5]
  & =
  \walpmv{\contract{LP}}{\waltok{17}{\tokT},
    \bind{\code{minted}}{\pmvM \rightarrow 0, \pmvB \rightarrow 40},\cdots  
  } \mid
  \walpmv{\pmvM}{\waltok{103}{\tokT}} \mid 
  \cdots  
\end{align*}
Summing up, $\pmvM$'s gain is $(103 - 100) \cdot \wealth{}{}\idxfun{\tokT} = 3 \cdot \wealth{}{}\idxfun{\tokT}$.
% there is no way to extract more tokens from the contract
% (in the last state, the reserves are empty),
Note that the identity of the adversary is immaterial for the attack.
Therefore, $\mev{}{\sysS}{\emptyset} =  3 \cdot \wealth{}{}\idxfun{\tokT}$.

\section{Supplementary results and proofs}
\label{sec:proofs}

%%%%%%%%%%%%%%%%%%%%%%%%%%%%%%%%%%%%%%%%%%%%%%%%%%%%%%%%%%%%%%%%%%%%%%%%%%%
%
% Proofs for Section sec:contracts
%
%%%%%%%%%%%%%%%%%%%%%%%%%%%%%%%%%%%%%%%%%%%%%%%%%%%%%%%%%%%%%%%%%%%%%%%%%%%

\subsection{Blockchain model (\Cref{sec:contracts})}

We detail below the proofs of the statements in~\Cref{sec:contracts},
together with additional results.

The finite tokens axiom ensures that the wallet of any
set of actors $\PmvA$, defined as the pointwise addition of functions:
\[
\wal{\PmvA}{\sysS}
\; = \;
\textstyle
\sum_{\pmvA \in \PmvA}{\wal{\pmvA}{\sysS}}
\]
is always defined,
and that it has a non-zero amount of tokens only for a \emph{finite}
set of token types.

\begin{prop}
  \label{prop:wal:PmvA}
  For all (possibly infinite) $\PmvA$ and $\sysS$,
  $\wal{\PmvA}{\sysS}$ is defined and has a finite support.
  % (\ie, $\wal{\PmvA}{\sysS}$ takes value $0$ on all but a finite subset of $\TokU$).
\end{prop}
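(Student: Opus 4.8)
The plan is to bound the wallet $\wal{\PmvA}{\sysS}$ pointwise by the wallet $\wal{\PmvU}{\sysS}$ of the entire (countable) population of actors, and then read off both claims from the finite tokens axiom~\eqref{eq:finite-tokens}. Throughout, I would fix a blockchain state $\sysS$ and write $\WmvA = \wal{}{\sysS}$ for its wallet state, so that $\wal{\pmvA}{\sysS} = \WmvA(\pmvA)$ for every actor $\pmvA$, and work in $\Nat \cup \setenum{\infty}$, where (possibly infinite) sums of non-negative integers are always meaningful.

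First I would establish that $\wal{\PmvA}{\sysS}$ is \emph{defined}, i.e.\ that the pointwise sum $\sum_{\pmvA \in \PmvA} \WmvA(\pmvA)$ yields a natural number at each token type. Fix $\tokT \in \TokU$. Each summand $\WmvA(\pmvA)(\tokT)$ is a non-negative integer, so the sum $\sum_{\pmvA \in \PmvA} \WmvA(\pmvA)(\tokT)$ is monotone under enlarging the index set; since $\PmvA \subseteq \PmvU$ it is therefore dominated by $\sum_{\pmvA \in \PmvU} \WmvA(\pmvA)(\tokT) = \WmvA(\PmvU)(\tokT)$. By~\eqref{eq:finite-tokens} we have $\sum_{\tokT} \WmvA(\PmvU)(\tokT) \in \Nat$, which in particular forces each term $\WmvA(\PmvU)(\tokT)$ to be finite. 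Hence $\wal{\PmvA}{\sysS}(\tokT) = \sum_{\pmvA \in \PmvA} \WmvA(\pmvA)(\tokT) \le \WmvA(\PmvU)(\tokT) < \infty$, so $\wal{\PmvA}{\sysS}$ is a well-defined element of $\TokU \rightarrow \Nat$.

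For finite support, the same domination gives the inclusion $\setcomp{\tokT}{\wal{\PmvA}{\sysS}(\tokT) \neq 0} \subseteq \setcomp{\tokT}{\WmvA(\PmvU)(\tokT) \neq 0}$, so it suffices to show the latter set is finite; and if it were infinite, then $\sum_{\tokT} \WmvA(\PmvU)(\tokT)$ would be a sum of infinitely many strictly positive integers, hence not a natural number, contradicting~\eqref{eq:finite-tokens}. Therefore $\wal{\PmvA}{\sysS}$ has finite support. I do not foresee a genuine obstacle here: the statement essentially repackages~\eqref{eq:finite-tokens}, and the only point needing mild care is the monotonicity of sums of non-negative integers over index sets, which becomes routine once one computes in $\Nat \cup \setenum{\infty}$ so that the value $\infty$ is available and is then excluded by the axiom.
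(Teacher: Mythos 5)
Your proof is correct and follows essentially the same route as the paper, which simply notes that the statement is a direct consequence of the finite tokens axiom~\eqref{eq:finite-tokens}; your write-up is the routine unpacking of that remark, bounding $\wal{\PmvA}{\sysS}$ pointwise by $\wal{\PmvU}{\sysS}$ and reading off finiteness and finite support from the axiom.
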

\begin{proof} % {prop:wal:PmvA}
  Direct consequence of~\eqref{eq:finite-tokens}.
  % \qed
\end{proof}

%We follow the standard convention of writing $f(x) = \bot$ to indicate that
%the partial function $f$ is undefined on $x$:
%in particular, when $\rightarrow(\sysS,\txT) \neq \bot$ we say that
%the transaction $\txT$ is \emph{valid} in $\sysS$.

Furthermore, for any possibly infinite $\PmvA$,
there exists a least \emph{finite} subset $\PmvAfin$ which contains exactly
the tokens of $\PmvA$.
This is also true for all subsets $\PmvB$ of $\PmvA$ including $\PmvAfin$.

\begin{prop}
  \label{prop:wal:finite-actors}
  For all $\sysS$, $\PmvA$,
  there exists the least $\PmvAfin \subseteqfin \PmvA$ such that,
  for all $\PmvB$, if $\PmvAfin \subseteq \PmvB \subseteq \PmvA$
  then $\wal{\PmvAfin}{\sysS} = \wal{\PmvB}{\sysS}$.
\end{prop}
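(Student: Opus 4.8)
The plan is to produce the witness $\PmvAfin$ explicitly, namely as the set of actors of $\PmvA$ that hold a non-zero wallet in $\sysS$, then to check in turn that it is \emph{finite} (using the finite tokens axiom), that it \emph{satisfies} the stated equation, and that it is the \emph{least} such set, by a one-point argument.

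Concretely, I would define $\PmvAfin \eqdef \setcomp{\pmvA \in \PmvA}{\wal{\pmvA}{\sysS} \neq \zero}$. To see that $\PmvAfin \subseteqfin \PmvA$, I would instantiate the finite tokens axiom~\eqref{eq:finite-tokens} on the wallet state $\wal{}{\sysS}$: the quantity $\sum_{\tokT} \sum_{\pmvA \in \PmvU} \wal{\pmvA}{\sysS}(\tokT)$ is a natural number, hence a finite sum of naturals, so only finitely many of its summands $\wal{\pmvA}{\sysS}(\tokT)$ are non-zero. In particular only finitely many $\pmvA \in \PmvU$ have $\wal{\pmvA}{\sysS} \neq \zero$, so $\PmvAfin$ is finite.

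For the defining property, I would take an arbitrary $\PmvB$ with $\PmvAfin \subseteq \PmvB \subseteq \PmvA$. Every $\pmvA \in \PmvB \setminus \PmvAfin$ satisfies $\wal{\pmvA}{\sysS} = \zero$ by construction, so, splitting the pointwise sum (well defined by \Cref{prop:wal:PmvA}),
\[
\wal{\PmvB}{\sysS}
\;=\;
\sum_{\pmvA \in \PmvAfin} \wal{\pmvA}{\sysS}
\;+\;
\sum_{\pmvA \in \PmvB \setminus \PmvAfin} \wal{\pmvA}{\sysS}
\;=\;
\wal{\PmvAfin}{\sysS}.
\]

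For minimality, suppose some $\PmvA' \subseteqfin \PmvA$ also satisfies the property but $\PmvAfin \not\subseteq \PmvA'$, and pick $\pmvA \in \PmvAfin \setminus \PmvA'$. Then $\PmvB \eqdef \PmvA' \cup \setenum{\pmvA}$ satisfies $\PmvA' \subseteq \PmvB \subseteq \PmvA$, so applying the property of $\PmvA'$ to this $\PmvB$ gives $\wal{\PmvA'}{\sysS} = \wal{\PmvB}{\sysS} = \wal{\PmvA'}{\sysS} + \wal{\pmvA}{\sysS}$, whence $\wal{\pmvA}{\sysS} = \zero$, contradicting $\pmvA \in \PmvAfin$. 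Hence $\PmvAfin \subseteq \PmvA'$, so $\PmvAfin$ is the least set with the required property. The argument is almost entirely routine; the only step that needs care is deducing finiteness of $\PmvAfin$ from the finite tokens axiom, which is precisely what makes the explicit candidate work.
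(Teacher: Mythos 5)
Your proposal is correct and follows essentially the same route as the paper: both take $\PmvAfin$ to be the actors with non-zero wallets, derive its finiteness from the finite tokens axiom, and obtain the equality for any intermediate $\PmvB$ since the extra actors contribute zero. The only difference is cosmetic — you spell out minimality with an explicit one-point argument, where the paper simply notes it holds by construction.
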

\begin{proof} % {prop:wal:finite-actors}
  By definition,
  $\wal{\PmvA}{\sysS} \tokT = \sum_{\pmvA \in \PmvA}{\wal{\pmvA}{\sysS} \tokT}$.
  By~\eqref{eq:finite-tokens}, 
  the quantity $\wal{\pmvA}{\sysS} \tokT$ is non-zero for only finitely many 
  $\pmvA$ and $\tokT$.
  Therefore, the set 
  $\PmvAfin = \setcomp{\pmvA}{\exists \tokT . \ \wal{\pmvA}{\sysS} \tokT \neq 0}$ 
  is finite, and $\wal{\PmvA \setminus \PmvAfin}{\sysS} \tokT = 0$ for all $\tokT$.
  Consequently,
  $\wal{\PmvA}{\sysS} \tokT = \wal{\PmvAfin}{\sysS} \tokT \leq \wal{\PmvB}{\sysS} \tokT \leq \wal{\PmvA}{\sysS} \tokT$.
  Note that, by construction, $\PmvAfin$ is the least subset of $\PmvA$ preserving the tokens.
  \qed
\end{proof}

Hereafter, we denote by $\finwalU$ the set of finite-support functions
from $\TokU$ to $\Nat$, like those resulting from $\wal{\PmvA}{\sysS}$.

We can extend~\eqref{eq:wealth:token-prices:wealth} to
the sum of a countable set of wallets:

\begin{prop}
  \label{prop:wealth:token-prices}
  Let $(\wmvA[i])_i$ be a countable set of wallets in $\finwalU$ such that
  $\sum_{i,\tokT} \wmvA[i](\tokT) \in \Nat$.
  % Note: this implies (\sum_i b_i) is T->fin N)
  Then:
  \[
  \wealth{}{} \sum_i \wmvA[i]
  \; = \;
  \sum_i \wealth{}{} \wmvA[i]
  \; = \;
  \sum_{i,\tokT} \wmvA[i](\tokT) \cdot \wealth{}{} \idxfun{\tokT}
  \]
\end{prop}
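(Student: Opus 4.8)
The plan is to reduce the statement about a countable sum of finite-support wallets to the finite-support case already handled by~\eqref{eq:wealth:token-prices:wealth}, using the hypothesis $\sum_{i,\tokT} \wmvA[i](\tokT) \in \Nat$ as the device that keeps all rearrangements legitimate. First I would observe that $\wmvA \eqdef \sum_i \wmvA[i]$ is well defined as a pointwise sum: for each $\tokT$ the series $\sum_i \wmvA[i](\tokT)$ is a sum of naturals bounded by $\sum_{i,\tokT'} \wmvA[i](\tokT')$, hence convergent to a natural number; moreover $\sum_{\tokT} \wmvA(\tokT) = \sum_{\tokT}\sum_i \wmvA[i](\tokT) = \sum_{i,\tokT}\wmvA[i](\tokT) \in \Nat$, so $\wmvA \in \finwalU$. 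This also shows that all the double series appearing in the statement are series of non-negative terms with finite total, so by Tonelli/Fubini for non-negative series they may be summed in any order and grouped arbitrarily.

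Next I would establish the rightmost equality. Since each $\wmvA[i]$ has finite support, \eqref{eq:wealth:token-prices:wealth} gives $\wealth{}{}\wmvA[i] = \sum_{\tokT}\wmvA[i](\tokT)\cdot\wealth{}{}\idxfun{\tokT}$ for every $i$. Summing over $i$ and swapping the two summations — legitimate because every term $\wmvA[i](\tokT)\cdot\wealth{}{}\idxfun{\tokT}$ is a non-negative real — yields $\sum_i \wealth{}{}\wmvA[i] = \sum_{i,\tokT}\wmvA[i](\tokT)\cdot\wealth{}{}\idxfun{\tokT}$. For the same double series I would instead group by $\tokT$ first: $\sum_{i,\tokT}\wmvA[i](\tokT)\cdot\wealth{}{}\idxfun{\tokT} = \sum_{\tokT}\big(\sum_i\wmvA[i](\tokT)\big)\cdot\wealth{}{}\idxfun{\tokT} = \sum_{\tokT}\wmvA(\tokT)\cdot\wealth{}{}\idxfun{\tokT}$. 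Finally, since $\wmvA \in \finwalU$, applying \eqref{eq:wealth:token-prices:wealth} once more gives $\sum_{\tokT}\wmvA(\tokT)\cdot\wealth{}{}\idxfun{\tokT} = \wealth{}{}\wmvA = \wealth{}{}\sum_i\wmvA[i]$, which closes the chain of equalities.

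The one subtlety — and the step I would be most careful about — is the interchange of the summations over $i$ and over $\tokT$: additivity of $\wealth{}{}$ in Definition~\ref{def:wealth} is only stated for finitely many summands, so it does not directly license pulling $\wealth{}{}$ through an infinite sum, and one must not implicitly invoke it for infinitely many wallets. The argument therefore has to route through \eqref{eq:wealth:token-prices:wealth} on each individual (finite-support) $\wmvA[i]$ and only afterwards manipulate the resulting double series of reals, where the reordering is justified purely by non-negativity and the finite total $\sum_{i,\tokT}\wmvA[i](\tokT)\cdot\wealth{}{}\idxfun{\tokT} < \infty$ (each $\wealth{}{}\idxfun{\tokT}$ being a fixed natural, so this double series is dominated by a constant times $\sum_{i,\tokT}\wmvA[i](\tokT) \in \Nat$ — wait, this needs the prices to be bounded, which is not assumed; instead I would note that for each fixed $\tokT$ only finitely many token types have nonzero price is also not assumed, so the cleanest justification is simply Tonelli for the non-negative double-indexed family $(\wmvA[i](\tokT)\cdot\wealth{}{}\idxfun{\tokT})_{i,\tokT}$, whose iterated sums agree in $[0,\infty]$, and then finiteness of one iterated sum — e.g.\ $\sum_{\tokT}\wmvA(\tokT)\cdot\wealth{}{}\idxfun{\tokT} = \wealth{}{}\wmvA \in \Nat$ — forces all of them to be equal finite naturals). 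Everything else is bookkeeping.
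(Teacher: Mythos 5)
Your proof is correct, and it reaches the same destination as the paper's, but it justifies the crucial rearrangement differently. The paper's proof never leaves the realm of finite sums: it notes that since the $\wmvA[i](\tokT)$ are naturals and $\sum_{i,\tokT}\wmvA[i](\tokT)\in\Nat$, almost all of them are zero, so $\sum_i\wmvA[i]$ is literally a \emph{finite} sum of elementary wallets $\wmvA[i](\tokT)\cdot\idxfun{\tokT}$; it then regroups this finite sum in the two possible ways, using only the (binary) additivity of $\wealth{}{}$ at every step, with no convergence considerations whatsoever. You instead apply the pricing identity~\eqref{eq:wealth:token-prices:wealth} to each $\wmvA[i]$ and to the aggregate wallet, and justify the interchange of $\sum_i$ and $\sum_{\tokT}$ by Tonelli for non-negative double series, using finiteness of one iterated sum to conclude everything is finite. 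That is sound — and your explicit care not to push $\wealth{}{}$ through an infinite sum, together with your correction that one cannot dominate by a constant since prices are unbounded, addresses exactly the right pitfalls — but it imports infinite-series machinery that the hypothesis lets you avoid entirely: had you observed, as the paper does, that the hypothesis forces all but finitely many pairs $(i,\tokT)$ to contribute zero, every sum in sight would be finite and the Tonelli appeal unnecessary. In exchange, your argument is slightly more modular (it uses~\eqref{eq:wealth:token-prices:wealth} as a black box per wallet) and would survive a relaxation of the hypothesis to mere summability of the wealths rather than of the token counts.
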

\begin{proof} % {prop:wealth:token-prices}
  Since $\sum_i \wmvA[i]$ is in $\TokU \rightarrow \Nat$,
  by~\eqref{eq:wealth:token-prices:wal} we can write:
  \begin{equation}
    \label{prop:wealth:token-prices:eq1}
    \sum_i \wmvA[i]
    \; = \;
    \sum_{\tokT} \Big( \sum_i \wmvA[i] \Big)(\tokT) \cdot \idxfun{\tokT}
    \; = \;
    \sum_{\tokT,i} \wmvA[i](\tokT) \cdot \idxfun{\tokT}
  \end{equation}
  By hypothesis, $\wmvA[i](\tokT)$ is almost always zero,
  so the rhs in~\eqref{prop:wealth:token-prices:eq1} is a \emph{finite} sum.
  Then:
  \begin{align*}
    \wealth{}{} \sum_i \wmvA[i]  
    & = \wealth{}{} \sum_{\tokT,i} \wmvA[i](\tokT) \cdot \idxfun{\tokT}
    && \text{by~\eqref{prop:wealth:token-prices:eq1}}
    \\
    & = \sum_{\tokT,i} \wealth{}{} \big(\wmvA[i](\tokT) \cdot \idxfun{\tokT}\big)
    && \text{additivity, finite sum}
    \\
    & = \sum_{\tokT,i} \wmvA[i](\tokT) \cdot \wealth{}{}\idxfun{\tokT}
    && \text{additivity, finite sum}
    \\
    & = \sum_{i} \sum_{\tokT} \wmvA[i](\tokT) \cdot \wealth{}{}\idxfun{\tokT}
    \\
    & = \sum_{i} \sum_{\tokT} \wealth{}{} \big( \wmvA[i](\tokT) \cdot \idxfun{\tokT} \big)
    && \text{additivity, finite sum}
    \\     
    & = \sum_{i} \wealth{}{} \sum_{\tokT} \wmvA[i](\tokT) \cdot \idxfun{\tokT}
    && \text{additivity, finite sum}
    \\     
    & = \sum_{i} \wealth{}{} \wmvA[i]
    && \text{by~\eqref{eq:wealth:token-prices:wal}}
    \tag*{\qedhere}
  \end{align*}
\end{proof}

Note that, since $\PmvU$ is countable, so is any subset $\PmvA$ of $\PmvU$.
Therefore, we have the following corollary of~\Cref{prop:wealth:token-prices}.

\begin{prop}
  \label{cor:wealth:finite-sum}
  For all $\sysS$ and $\PmvA$,
  $\wealth{\PmvA}{\sysS} = \sum_{\pmvA \in \PmvA} \wealth{\pmvA}{\sysS}$.
\end{prop}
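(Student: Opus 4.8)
The plan is to derive this directly as a corollary of~\Cref{prop:wealth:token-prices}, by exhibiting the wallet $\wal{\PmvA}{\sysS}$ as a countable sum of the individual wallets $\wal{\pmvA}{\sysS}$ and checking that the hypothesis of~\Cref{prop:wealth:token-prices} is met. First I would fix a reachable state $\sysS$ and an arbitrary set of actors $\PmvA \subseteq \PmvU$. Since $\PmvU$ is countable, $\PmvA$ is countable, so we may enumerate its elements and regard the family $(\wal{\pmvA}{\sysS})_{\pmvA \in \PmvA}$ as a countable family of wallets. By definition, $\wal{\PmvA}{\sysS} = \sum_{\pmvA \in \PmvA} \wal{\pmvA}{\sysS}$ as a pointwise sum of functions $\TokU \to \Nat$.

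The key step is verifying the summability hypothesis $\sum_{\pmvA \in \PmvA, \tokT} \wal{\pmvA}{\sysS}(\tokT) \in \Nat$ needed to invoke~\Cref{prop:wealth:token-prices}. Here I would appeal to the finite tokens axiom~\eqref{eq:finite-tokens}, which gives $\sum_{\tokT} \WmvA(\PmvU)(\tokT) \in \Nat$ where $\WmvA = \wal{}{\sysS}$. Since $\PmvA \subseteq \PmvU$ and all terms are non-negative naturals, the double sum over $\PmvA$ is bounded above by the corresponding double sum over $\PmvU$, which equals $\sum_{\tokT} \WmvA(\PmvU)(\tokT)$ after reordering the non-negative terms; hence it is finite. (Strictly, one should note that each $\wal{\pmvA}{\sysS}$ need not a priori have finite support, but $\wal{\PmvA}{\sysS}$ does by~\Cref{prop:wal:PmvA}, and the same bounding argument shows each individual wallet lies in $\finwalU$ as well, so the family indeed sits in $\finwalU$ as~\Cref{prop:wealth:token-prices} requires.)

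With the hypothesis discharged,~\Cref{prop:wealth:token-prices} applied to the family $(\wal{\pmvA}{\sysS})_{\pmvA \in \PmvA}$ yields $\wealth{}{}\big(\sum_{\pmvA \in \PmvA} \wal{\pmvA}{\sysS}\big) = \sum_{\pmvA \in \PmvA} \wealth{}{}\wal{\pmvA}{\sysS}$, which by the definitions of $\wal{\PmvA}{\sysS}$ and of the notation $\wealth{\PmvA}{\sysS}$, $\wealth{\pmvA}{\sysS}$ is exactly the claimed equality $\wealth{\PmvA}{\sysS} = \sum_{\pmvA \in \PmvA} \wealth{\pmvA}{\sysS}$. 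I expect the only real subtlety — hence the main obstacle — to be bookkeeping around infinite sums of non-negative terms: justifying the reordering/bounding of the double sum and confirming that the individual wallets satisfy the finite-support precondition of~\Cref{prop:wealth:token-prices}. Everything else is an immediate unfolding of definitions.
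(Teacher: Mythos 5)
Your proposal is correct and matches the paper's route: the paper derives this statement as an immediate corollary of~\Cref{prop:wealth:token-prices}, observing only that $\PmvA \subseteq \PmvU$ is countable, with the summability hypothesis left implicit via the finite tokens axiom~\eqref{eq:finite-tokens}. Your spelling out of that hypothesis check (bounding the double sum over $\PmvA$ by the one over $\PmvU$) is exactly the omitted bookkeeping and is sound.
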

%% \begin{proof}
%%   %%% alternative proof using prop:wealth:finite-actors
%%   Let $\PmvAfin$ be the set given by~\Cref{prop:wealth:finite-actors}.
%%   Since $\PmvAfin$ is finite, we have that:
%%   \[
%%   \wealth{\PmvA}{\sysS}
%%   =
%%   \wealth{\PmvAfin}{\sysS} + \wealth{\PmvA \setminus \PmvAfin}{\sysS}
%%   =
%%   \sum_{\pmvA \in \PmvA} \wealth{\pmvA}{\sysS}
%%   \]
%% \end{proof}

We can extend \Cref{prop:wal:finite-actors} to wealth,
by ensuring that, in any reachable state,
only \emph{finitely} many actors have a non-zero wealth:

\begin{prop}
  \label{prop:wealth:finite-actors}
  For all $\sysS$, $\PmvA$,
  there exists the least $\PmvAfin \!\subseteqfin \!\PmvA$ such that,
  for all $\PmvB$, if $\PmvAfin \subseteq \! \PmvB \subseteq\! \PmvA$
  then $\wealth{\PmvAfin}{\sysS} = \wealth{\PmvB}{\sysS}$.
\end{prop}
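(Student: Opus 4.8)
The statement is the wealth-analogue of \Cref{prop:wal:finite-actors}, so the plan is to reduce it to that proposition together with \Cref{cor:wealth:finite-sum}. The key observation is that by \Cref{cor:wealth:finite-sum} we have $\wealth{\PmvA'}{\sysS} = \sum_{\pmvA \in \PmvA'} \wealth{\pmvA}{\sysS}$ for every $\PmvA' \subseteq \PmvA$, so the value $\wealth{\PmvA'}{\sysS}$ depends on $\PmvA'$ only through which actors in $\PmvA'$ have nonzero wealth. Concretely, let $\PmvA^{+} = \setcomp{\pmvA \in \PmvA}{\wealth{\pmvA}{\sysS} \neq 0}$ be the set of ``wealthy'' actors in $\PmvA$; I would show this set is finite and then take $\PmvAfin = \PmvA^{+}$.

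\textbf{Finiteness of $\PmvA^{+}$.} First I would apply \Cref{prop:wal:finite-actors} to obtain the least finite $\PmvAfin[0] \subseteqfin \PmvA$ with $\wal{\PmvAfin[0]}{\sysS} = \wal{\PmvB}{\sysS}$ for all $\PmvB$ with $\PmvAfin[0] \subseteq \PmvB \subseteq \PmvA$. Taking $\PmvB = \PmvAfin[0] \cup \setenum{\pmvA}$ for any single $\pmvA \in \PmvA \setminus \PmvAfin[0]$ forces $\wal{\pmvA}{\sysS} = \zero$ (the zero wallet), hence $\wealth{\pmvA}{\sysS} = \wealth{}{}\zero = 0$ by additivity of $\wealth{}{}$ (since $\wealth{}{}\zero = \wealth{}{}(\zero+\zero) = 2\,\wealth{}{}\zero$). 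Therefore every actor with nonzero wealth already lies in $\PmvAfin[0]$, i.e.\ $\PmvA^{+} \subseteq \PmvAfin[0]$, so $\PmvA^{+}$ is finite.

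\textbf{The defining property and minimality.} Set $\PmvAfin = \PmvA^{+}$. For any $\PmvB$ with $\PmvAfin \subseteq \PmvB \subseteq \PmvA$, every actor in $\PmvB \setminus \PmvAfin$ has zero wealth, so by \Cref{cor:wealth:finite-sum}, $\wealth{\PmvB}{\sysS} = \sum_{\pmvA \in \PmvB}\wealth{\pmvA}{\sysS} = \sum_{\pmvA \in \PmvAfin}\wealth{\pmvA}{\sysS} = \wealth{\PmvAfin}{\sysS}$, which is the required equality. For leastness, suppose $\PmvA' \subseteqfin \PmvA$ also satisfies the property but $\PmvAfin \not\subseteq \PmvA'$; pick $\pmvA \in \PmvAfin \setminus \PmvA'$, so $\wealth{\pmvA}{\sysS} \neq 0$. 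Applying the property of $\PmvA'$ to $\PmvB = \PmvA'$ and to $\PmvB = \PmvA' \cup \setenum{\pmvA}$ (both sandwiched between $\PmvA'$ and $\PmvA$) gives $\wealth{\PmvA'}{\sysS} = \wealth{\PmvA' \cup \setenum{\pmvA}}{\sysS} = \wealth{\PmvA'}{\sysS} + \wealth{\pmvA}{\sysS}$ by \Cref{cor:wealth:finite-sum}, forcing $\wealth{\pmvA}{\sysS} = 0$, a contradiction. Hence $\PmvAfin \subseteq \PmvA'$, so $\PmvAfin$ is the least such set.

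The only mild subtlety — the ``main obstacle'', if any — is the bookkeeping in establishing $\PmvA^{+} \subseteq \PmvAfin[0]$: one must be careful that \Cref{prop:wal:finite-actors} gives equality of \emph{wallets} (functions $\TokU \to \Nat$), from which the vanishing of the individual wallet $\wal{\pmvA}{\sysS}$ of an excluded actor follows by a pointwise cancellation argument, and only then does additivity of the wealth function transfer this to $\wealth{\pmvA}{\sysS} = 0$. Everything else is a direct consequence of additivity (\Cref{cor:wealth:finite-sum}) plus a singleton-augmentation trick applied twice.
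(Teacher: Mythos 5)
Your proof is correct and follows essentially the same route as the paper: invoke \Cref{prop:wal:finite-actors} to get a finite wallet-preserving set, use additivity (\Cref{cor:wealth:finite-sum}) to pass to wealth, and take $\PmvAfin$ to be the actors of nonzero wealth. The only difference is cosmetic — you verify the sandwich equality via the finite-sum identity instead of the paper's monotonicity squeeze, and you spell out the leastness argument that the paper dismisses with ``by construction''.
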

\begin{proof} % {prop:wealth:finite-actors}
  Let $\PmvA[1] = \setenum{\pmvA[1],\ldots,\pmvA[n]}$
  be the finite set given by~\Cref{prop:wal:finite-actors},
  for which we have:
  \(
  \wal{\PmvA}{\sysS} = \wal{\PmvA[1]}{\sysS} = \sum_{i = 1}^n \wal{\pmvA[i]}{\sysS}
  \).
  Since this is a finite sum, by additivity of $\wealth{}{}$ we obtain:
  \(
  \wealth{\PmvA}{\sysS} = \wealth{\PmvA[1]}{\sysS} = \sum_{i = 1}^n \wealth{\pmvA[i]}{\sysS}
  \).
  Now, let $\PmvAfin = \setcomp{\pmvA \in \PmvA[1]}{\wealth{\pmvA}{\sysS} \neq 0}$.
  Since $\PmvAfin$ is finite, we have
  $\wealth{\PmvA}{\sysS} = \wealth{\PmvA[1]}{\sysS} = \wealth{\PmvAfin}{\sysS}$.
  Now, let $\PmvAfin \subseteq \PmvB \subseteq \PmvA$.
  Since $\wealth{}{}$ is additive, it is also monotonic, and so
  $\wealth{\PmvA}{\sysS} = \wealth{\PmvAfin}{\sysS} \leq \wealth{\PmvB}{\sysS} \leq \wealth{\PmvA}{\sysS}$.
  Note that, by construction, $\PmvAfin$ is the least set satisfying the statement.
  % \qed
\end{proof}

% Given a \emph{set} $\TxTSset$ of sequences of transactions, 
% we abbreviate as $\gain{\PmvA}{\sysS}{\TxTSset}$ the set of gains
% $\setcomp{\gain{\PmvA}{\sysS}{\TxTS}}{\TxTS \in \TxTSset}$.

% \begin{prop}
%   \label{prop:gain}
%   For all (possibly infinite) $\PmvA$, for all $\sysS$ and $\TxTS$:
%   \begin{enumerate}[\rm (1)]

%   \item \label{prop:gain:PmvA}
%     $\gain{\PmvA}{\sysS}{\TxTS}$ is defined and has a finite integer value

%   \item \label{prop:gain:sum}
%     $\gain{\PmvA}{\sysS}{\TxTS} = \sum_{\pmvA \in \PmvA} \gain{\pmvA}{\sysS}{\TxTS}$
    
%     \item \label{prop:gain:fin}
%       there exists $\PmvAfin \subseteqfin\! \PmvA$ such that,
%       for all $\PmvB$, if $\PmvA[0] \subseteq \PmvB \subseteq \PmvA$ then
%       $\gain{\PmvAfin}{\sysS}{\TxTS} = \gain{\PmvB}{\sysS}{\TxTS}$.
    
%   \end{enumerate}
% \end{prop}

\begin{proofof}{prop:gain}
  We must prove the following statements:
  \begin{enumerate}[\rm (1)]

  \item \label{prop:gain:PmvA}
    $\gain{\PmvA}{\sysS}{\TxTS}$ is defined and has a finite integer value

  \item \label{prop:gain:sum}
    $\gain{\PmvA}{\sysS}{\TxTS} = \sum_{\pmvA \in \PmvA} \gain{\pmvA}{\sysS}{\TxTS}$
    
    \item \label{prop:gain:fin}
      there exists $\PmvAfin \subseteqfin\! \PmvA$ such that,
      for all $\PmvB$, if $\PmvA[0] \subseteq \PmvB \subseteq \PmvA$ then
      $\gain{\PmvAfin}{\sysS}{\TxTS} = \gain{\PmvB}{\sysS}{\TxTS}$.
    
  \end{enumerate}

  \Cref{prop:gain:PmvA} follows from~\Cref{prop:wal:PmvA}.
  % Consequence of the bounded wealth axiom~\eqref{eq:bounded-wealth-axiom} ??

  \noindent
  \Cref{prop:gain:sum} follows from Propositions~\ref{prop:wal:PmvA} and~\ref{cor:wealth:finite-sum}.

  \noindent
  For~\Cref{prop:gain:fin},
  let $\sysS \xrightarrow{\TxTS} \sysSi$.
  By~\Cref{def:gain}, we have that
  $\gain{\PmvA}{\sysS}{\TxTS} = \wealth{\PmvA}{\sysSi} - \wealth{\PmvA}{\sysS}$.
  By~\Cref{prop:wealth:finite-actors},
  there exist $\PmvA[1],\PmvA[2] \subseteqfin \PmvA$ such that:  
  \begin{align}
    \label{prop:gain:fin:1}
    & \forall \PmvB : \PmvA[1] \subseteq \PmvB \subseteq \PmvA \implies
    \wealth{\PmvA[1]}{\sysS} = \wealth{\PmvB}{\sysS}
    \\
    \label{prop:gain:fin:2}
    & \forall \PmvB : \PmvA[2] \subseteq \PmvB \subseteq \PmvA \implies    
    \wealth{\PmvA[2]}{\sysSi} = \wealth{\PmvB}{\sysSi}
  \end{align} 
  Let $\PmvA[0] = \PmvA[1] \cup \PmvA[2]$,
  and let $\PmvB$ be such that $\PmvA[0] \subseteq \PmvB \subseteq \PmvA$.
  % Since $\PmvB[1],\PmvB[2] \subseteq \PmvA$, then
  %% By~\eqref{prop:gain:fin:1},
  %% $\wealth{\PmvA[0]}{\sysS} = \wealth{\PmvB}{\sysS}$;
  %% by~\eqref{prop:gain:fin:2},
  %% $\wealth{\PmvA[0]}{\sysSi} = \wealth{\PmvB}{\sysSi}$. 
  Therefore:
  \begin{align*}
    \gain{\PmvB}{\sysS}{\TxTS}
    & = \wealth{\PmvB}{\sysSi} - \wealth{\PmvB}{\sysS}
    && \text{by~\Cref{def:gain}}
    \\
    & = \wealth{\PmvA[0]}{\sysSi} - \wealth{\PmvA[0]}{\sysS}
    && \text{by~\eqref{prop:gain:fin:1} and~\eqref{prop:gain:fin:2}}
    \\
    & = \gain{\PmvA[0]}{\sysS}{\TxTS}
    \tag*{\qedhere}
  \end{align*}
\end{proofof}

%%%%%%%%%%%%%%%%%%%%%%%%%%%%%%%%%%%%%%%%%%%%%%%%%%%%%%%%%%%%%%%%%%%%%%%%%%%
%
% Proofs for Section sec:tx-inference
%
%%%%%%%%%%%%%%%%%%%%%%%%%%%%%%%%%%%%%%%%%%%%%%%%%%%%%%%%%%%%%%%%%%%%%%%%%%%

\subsection{Adversarial knowledge (\Cref{sec:tx-inference})}

\begin{proofof}{lem:mall:basic}
  For~\ref{lem:mall:inj}, 
  if $\mall{\PmvA}{\emptyset} = \mall{\PmvB}{\emptyset}$,
  then by the private knowledge axiom
  $\PmvA \subseteq \PmvB$ and $\PmvB \subseteq \PmvA$, and so $\PmvA = \PmvB$.  
  Item~\ref{lem:mall:cap} follows from no shared secrets and monotonicity.
  The leftmost inclusion of item~\ref{lem:mall:cup} follows from extensivity and monotonicity,
  while the rightmost inclusion is given by:
  \begin{align*}
    \mall{\PmvA}{\mall{\PmvB}{\TxT}}
    & \subseteq 
      \mall{\PmvA \cup \PmvB}{\mall{\PmvA \cup \PmvB}{\TxT}}
    && \text{monotonicity, twice}
    \\
    & \subseteq 
      \mall{\PmvA \cup \PmvB}{\TxT}
    && \text{idempotence}
  \end{align*}
  %% For~\ref{lem:mall:subseteq-mall-mall},
  %% the inclusion $\subseteq$ follows directly by extensivity.
  %% The inclusion $\supseteq$ is given by:
  %% \begin{align*}
  %%   \mall{\PmvB}{\mall{\PmvA}{\TxT}}
  %%   & \subseteq
  %%   \mall{\PmvA}{\mall{\PmvA}{\TxT}}
  %%   && \text{(monotonicity)}
  %%   \\
  %%   & = \mall{\PmvA}{\TxT}
  %%   && \text{(idempotence)}    
  %% \end{align*}
  For~\ref{lem:mall:subseteq-mall-mall}, we have:
  \begin{align*}
    \mall{\PmvB}{\mall{\PmvA}{\TxY} \cup \TxT}
    & \subseteq
    \mall{\PmvB}{\mall{\PmvA}{\TxT} \cup \TxT}
    && \text{monotonicity, $\TxY \subseteq \TxT$}
    \\
    & =
    \mall{\PmvB}{\mall{\PmvA}{\TxT}}
    && \text{extensivity}
    % [ \mall(\PmvA, \TxT) cup \TxT = \mall(\PmvA, \TxT) ]
    \\
    & \subseteq
    \mall{\PmvA}{\mall{\PmvA}{\TxT}}
    && \text{monotonicity, $\PmvB \subseteq \PmvA$}
    \\
    & = \mall{\PmvA}{\TxT}
    && \text{idempotence} 
  \end{align*}

  For~\Cref{lem:mall:finite-tx},
  let $\TxT = \bigcup_i \TxT[i]$, where $\setenum{\TxT[i]}_i$ is an increasing chain
  of finite sets.
  By continuity, we have that:
  \[
  \txT \in \mall{\PmvA}{\TxT}
  \textstyle
  = \mall{\PmvA}{\bigcup_i \TxT[i]}
  = \bigcup_i \mall{\PmvA}{\TxT[i]}
  \]
  Therefore, there exists some $n$ such that $\txT \in \mall{\PmvA}{\TxT[n]}$.
  The thesis follows from the fact that $\TxT[n]$ is a finite subset of $\TxT$.
  \qed
\end{proofof}

\begin{proofof}{lem:auth:basic}
  Item~\ref{lem:auth:empty} is straightforward by~\Cref{def:auth}.
  
  Item~\ref{lem:auth:mall-auth} is a direct consequence of extensivity.

  For~\ref{lem:auth:from-empty-tx},
  \wwlog, take a minimal finite set of actors $\PmvAfin$
  such that $\TxT \subseteq \mall{\PmvAfin}{\emptyset}$
  (this is always possible, since finite sets are well ordered by inclusion).
  We prove that
  $\PmvAfin = \min \setcomp{\PmvB}{\TxT \subseteq \mall{\PmvB}{\emptyset}}$.
  By contradiction, let $\PmvA$ be such that
  $\TxT \subseteq \mall{\PmvA}{\emptyset}$ and
  $\PmvAfin \nsubseteq \PmvA$.
  Let $\PmvB = \PmvAfin \cap \PmvA$.
  We have that:
  \begin{align*}
    \TxT
    & \subseteq
    \mall{\PmvAfin}{\emptyset} \cap \mall{\PmvA}{\emptyset}
    && \text{$\TxT \subseteq \mall{\PmvAfin}{\emptyset}$, $\TxT \subseteq \mall{\PmvA}{\emptyset}$}
    \\
    & \; \subseteq \;
    \mall{\PmvAfin \cap \PmvA}{\emptyset}
    && \text{no shared secrets}
    \\
    & \; = \;
    \mall{\PmvB}{\emptyset}
    && \text{$\PmvB = \PmvAfin \cap \PmvA$}
  \end{align*}
   Since $\PmvAfin \nsubseteq \PmvA$, then $\PmvB \subset \PmvAfin$,
  and so $\PmvAfin$ is not minimal --- contradiction.

  For~\ref{lem:auth:from-finite-part},
  by the finite causes property, there exists a finite $\PmvAfin$ such that
  $\TxT \subseteq \mall{\PmvAfin}{\emptyset}$.
  The thesis follows by~\Cref{lem:auth:from-empty-tx}.
  \qed
\end{proofof}

%%%%%%%%%%%%%%%%%%%%%%%%%%%%%%%%%%%%%%%%%%%%%%%%%%%%%%%%%%%%%%%%%%%%%%%%%%%
%
% Proofs for Section sec:mev
%
%%%%%%%%%%%%%%%%%%%%%%%%%%%%%%%%%%%%%%%%%%%%%%%%%%%%%%%%%%%%%%%%%%%%%%%%%%%

\subsection{MEV (\Cref{sec:mev})}

\begin{proofof}{prop:mev:defined}
  By~\Cref{def:mall}, $\mall{\PmvA}{\TxT}$ is defined.
  By \mbox{$\wealth{}{}$-boundedness}, $\gain{\PmvA}{\sysS}{\TxTS}$ 
  is defined and bounded for all $\TxTS$.
  The existence of the maximum in~\Cref{def:mev} follows 
  from the fact that each non-empty upper-bounded subset of $\Nat$ admits a maximum.
  This maximum exists even though $\mall{\pmvA}{-}^*$ generates
  an infinite set of sequences, of unbounded (but finite) length.
  The MEV is non-negative since the empty sequence of transactions
  gives a zero gain.
  \qed
\end{proofof}

We denote by $\auth{\TxT}$ the least set of actors who are needed to  
craft all the transactions in $\TxT$. 
When $\auth{\TxT} = \emptyset$, then $\TxT$ can be created by anyone,
without requiring any private knowledge.

\begin{defn}[Transaction authoriser]
  \label{def:auth}
  For all $\TxT$, we define:
  \[
  \auth{\TxT}
  \; = \;
  \min \setcomp{\PmvA}{\TxT \subseteq \mall{\PmvA}{\emptyset}}
  \]
  when such a minimum exists.
\end{defn}

\Cref{lem:auth:basic} establishes some properties of authorisers.
\Cref{lem:auth:mall-auth} states that 
the authorisers of $\TxT$ can generate at least $\TxT$.
\Cref{lem:auth:from-empty-tx,lem:auth:from-finite-part} give sufficient conditions
for $\auth{\TxT}$ to be defined:
indeed, $\auth{\TxT}$ may be undefined for some $\TxT$,
since the set of $\PmvA$ such that $\TxT \subseteq \mall{\PmvA}{\emptyset}$
may not have a minimum.
\Cref{lem:auth:from-empty-tx}
guarantees that $\auth{\TxT}$ is defined (and finite)
whenever $\TxT$ is inferred from an empty knowledge.
\Cref{lem:auth:from-finite-part}
gives the same guarantee when $\TxT$ is finite.

\begin{prop}
  \label{lem:auth:basic}
  For all $\TxT$ and $\PmvA$, we have that:
  \begin{enumerate}[\rm(1)]

  \item \label{lem:auth:empty}
    $\auth{\emptyset} = \emptyset$
    
  \item \label{lem:auth:mall-auth}
    if $\auth{\TxT}$ is defined, then $\TxT \subseteq \mall{\auth{\TxT}}{\emptyset}$

  \item \label{lem:auth:from-empty-tx}
    if $\TxT \subseteq \mall{\PmvA}{\emptyset}$ with $\PmvA$ finite, 
    $\auth{\TxT}$ is defined and~finite
    
  \item \label{lem:auth:from-finite-part}
    if $\TxT$ is finite, then $\auth{\TxT}$ is defined and finite.
  
\end{enumerate}
\end{prop}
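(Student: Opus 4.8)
The plan is to treat the four items in turn: two are immediate unfoldings of the definition, item~\ref{lem:auth:from-empty-tx} carries the real content, and item~\ref{lem:auth:from-finite-part} is a one-line reduction to it.

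For item~\ref{lem:auth:empty}, \textbf{extensivity} gives $\emptyset \subseteq \mall{\emptyset}{\emptyset}$, so $\emptyset$ lies in $\setcomp{\PmvA}{\emptyset \subseteq \mall{\PmvA}{\emptyset}}$; being $\subseteq$-smallest among all sets, it is the minimum, hence $\auth{\emptyset} = \emptyset$. For item~\ref{lem:auth:mall-auth}, whenever $\auth{\TxT}$ exists it is by definition the \emph{least element} of $\setcomp{\PmvA}{\TxT \subseteq \mall{\PmvA}{\emptyset}}$, hence in particular a member of that family, which is exactly the assertion $\TxT \subseteq \mall{\auth{\TxT}}{\emptyset}$.

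For item~\ref{lem:auth:from-empty-tx} I would write $\mathcal{S} = \setcomp{\PmvB}{\TxT \subseteq \mall{\PmvB}{\emptyset}}$ and show $\mathcal{S}$ has a finite minimum. The key observation is that $\mathcal{S}$ is closed under finite intersections: if $\PmvB[1], \PmvB[2] \in \mathcal{S}$, then by the \textbf{no shared secrets} axiom $\TxT \subseteq \mall{\PmvB[1]}{\emptyset} \cap \mall{\PmvB[2]}{\emptyset} \subseteq \mall{\PmvB[1] \cap \PmvB[2]}{\emptyset}$, so $\PmvB[1] \cap \PmvB[2] \in \mathcal{S}$, and closure under arbitrary finite intersections follows by induction. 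Now restrict to $\mathcal{S}_{\PmvA} = \setcomp{\PmvB \in \mathcal{S}}{\PmvB \subseteq \PmvA}$: it is nonempty since $\PmvA \in \mathcal{S}_{\PmvA}$ by hypothesis, and \emph{finite} since $\PmvA$ is finite, hence $\powset{\PmvA}$ is. Let $\PmvAfin$ be the intersection of all members of $\mathcal{S}_{\PmvA}$; as a finite intersection of members of $\mathcal{S}$ it lies in $\mathcal{S}$, and it is finite because $\PmvAfin \subseteq \PmvA$. Finally one checks $\PmvAfin$ is $\subseteq$-below every $\PmvB \in \mathcal{S}$, not just those already contained in $\PmvA$: for any $\PmvB \in \mathcal{S}$, the set $\PmvA \cap \PmvB$ lies in $\mathcal{S}$ (binary intersection) and is contained in $\PmvA$, hence in $\mathcal{S}_{\PmvA}$, so $\PmvAfin \subseteq \PmvA \cap \PmvB \subseteq \PmvB$. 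Thus $\min \mathcal{S} = \PmvAfin$, so $\auth{\TxT}$ is defined and finite. Item~\ref{lem:auth:from-finite-part} then follows at once: if $\TxT$ is finite, the \textbf{finite causes} axiom yields a finite $\PmvA$ with $\TxT \subseteq \mall{\PmvA}{\emptyset}$, and item~\ref{lem:auth:from-empty-tx} applies.

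The only delicate point is the passage in item~\ref{lem:auth:from-empty-tx} where the candidate minimum $\PmvAfin$, built as an intersection inside the \emph{finite} powerset $\powset{\PmvA}$, must be shown to dominate competitors that are not themselves subsets of $\PmvA$. This is precisely where finiteness of $\PmvA$ (so that $\PmvAfin \in \mathcal{S}$ without taking an infinitary intersection, which need not stay in $\mathcal{S}$) and the ``no shared secrets'' axiom (to intersect an arbitrary competitor back into $\PmvA$) are both essential; everything else is routine set-theoretic bookkeeping.
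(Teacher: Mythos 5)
Your proof is correct and follows essentially the same route as the paper: both hinge on the \textbf{no shared secrets} axiom to intersect witnesses of $\TxT \subseteq \mall{\cdot}{\emptyset}$, use finiteness of $\PmvA$ to get a finite minimum in item~\ref{lem:auth:from-empty-tx}, and reduce item~\ref{lem:auth:from-finite-part} to it via the \textbf{finite causes} axiom. The only difference is presentational: the paper picks a $\subseteq$-minimal finite witness and derives a contradiction by intersecting it with an arbitrary competitor, whereas you construct the minimum directly as the intersection of all witnesses inside $\powset{\PmvA}$ and then dominate external competitors by intersecting them back into $\PmvA$ --- the same idea in constructive rather than contradiction form.
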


For a \txscript transaction $\txT = \procF(\overrightarrow{\textit{txarg}})$,
the authorizers are the $\pmvA$ such that
$\overrightarrow{\textit{txarg}}$ contains
$\txin{\pmvA}{n}{\tokT}$, for some $n \in \Nat$ and $\tokT \in \TokU$,
or involves a secret $s$ generated by $\pmvA$.
For instance,
in \txcode{BadHTLC} (\Cref{ex:txscript:badhtlc:mall}) we have:
$\auth{\txT[\pmvA]} = \setenum{\pmvA}$,
$\auth{\txT[\pmvM]} = \setenum{\pmvM,\pmvA}$, and
$\auth{\txY[\pmvM]} = \setenum{\pmvM,\pmv{Oracle}}$.
% bart: tolto BadLottery, ma si potrebbe rimettere
% In \txcode{BadLottery} (\Cref{ex:txscript:badlottery:mall})
% we have:
% $\auth{\txT[\pmvA,\txcode{c}]} = \auth{\txT[\pmvA,\txcode{r}]} = \auth{\txT[\pmvM,\txcode{r}]} = \setenum{\pmvA}$,
% $\auth{\txT[\pmvM,\txcode{c}]} = \setenum{\pmvA,\pmvM}$,
% and $\auth{\txcode{win}()} = \emptyset$.

%% \bartnote{
%%   %% If $\sysS \xrightarrow{\txT} \sysSi$
%%   %% and $\wal{\pmvA}{\sysSi} \tokT < \wal{\pmvA}{\sysS} \tokT$,
%%   %% then $\pmvA \in \auth{\txT}$.
%%   Sembrerebbe sensato che se $\txT$ spende dei token di $\pmvA$, allora
%%   $\pmvA$ deve essere un autorizzatore di $\txT$.  
%%   Questo vale in txscript, ma non in generale. Che propriet\`a perdiamo senza?
%% }

A special case of~\Cref{prop:mev:cut} is when
the actors in $\PmvA$ have enough knowledge to generate the
set $\TxT$ by themselves.
In this case, the $\mev{}{}{}$ extractable by $\PmvA$ is 
that of a displacement attack:
\begin{cor}
  \label{prop:mev:zero}
  If $\auth{\TxT} \subseteq \PmvA$, then $\mev{\PmvA}{\sysS}{\TxT} = \mev{\PmvA}{\sysS}{\emptyset}$.
\end{cor}
\begin{proof}
  Consequence of~\Cref{prop:mev:cut}, using the inequality:
  \begin{align*}
    \mall{\PmvA}{\TxT}
    & \subseteq 
    \mall{\PmvA}{\mall{\auth{\TxT}}{\emptyset}}
    && \text{\Cref{lem:auth:basic}\ref{lem:auth:mall-auth}}
    \\
    & \subseteq 
    \mall{\PmvA \cup \auth{\TxT}}{\emptyset}
    && \text{\Cref{lem:mall:basic}\ref{lem:mall:mall}}
    \\
    & \subseteq 
    \mall{\PmvA}{\emptyset}	
    && \text{$\auth{\TxT} \subseteq \PmvA$}
    \tag*{\qedhere}
  \end{align*}
\end{proof}

\begin{proofof}{prop:mev:cut}
  We have that $\TxT = \TxY \cup \TxZ$,
  where $\TxY = \TxT \setminus \mall{\PmvA}{\emptyset}$ and
  $\TxZ = \TxT \cap \mall{\PmvA}{\emptyset}$.
  We show that 
  $\mall{\PmvA}{\TxT} = \mall{\PmvA}{\TxY}$.
  The inclusion $\supseteq$ follows by monotonicity of $\mall{}{}$.
  For $\subseteq$, we have that:
  \begin{align*}
    \mall{\PmvA}{\TxT}
    & = \mall{\PmvA}{\TxY \cup \TxZ}
    && \text{$\TxT = \TxY \cup \TxZ$}
    \\
    & \subseteq
    \mall{\PmvA}{\TxY \cup \mall{\PmvA}{\emptyset}}
    && \text{monotonicity, $\TxZ \subseteq \mall{\PmvA}{\emptyset}$}
    \\
    & \subseteq
    \mall{\PmvA}{\TxY \cup \mall{\PmvA}{\TxY}}
    && \text{monotonicity}
    \\
    & =
    \mall{\PmvA}{\mall{\PmvA}{\TxY}}
    && \text{extensivity}
    \\
    & =
    \mall{\PmvA}{\TxY}
    && \text{idempotence}
    \tag*{\qed}
  \end{align*}
\end{proofof}

\begin{proofof}{prop:mev:monotonic-on-tx}
  Let $\TxTS \in \mall{\PmvA}{\TxT}^*$ maximize $\gain{\PmvA}{\sysS}{\cdot}$.
  By monotonicity of $\mall{}{}$ we have that  
  $\mall{\PmvA}{\TxT} \subseteq \mall{\PmvAi}{\TxTi}$.
  Hence, $\TxTS \in \mall{\PmvAi}{\TxTi}^*$.
  The thesis follows from \Cref{prop:gain}\ref{prop:gain:sum}.
  \qed
\end{proofof}

\begin{proofof}{th:mev:finite-part}
  Let $\PmvB \subseteq \PmvA$.
  By~\Cref{def:mev}, we have that:
  \[
  \mev{\PmvB}{\sysS}{\TxT}
  \; = \;
  \gain{\PmvB}{\sysS}{\TxTS}
  \tag*{for some $\TxTS \in \mall{\PmvB}{\TxT}^*$}
  \]
  Since $\TxTS$ is made of a finite set of transactions,
  by the finite causes property of $\mall{}{}$
  there exists a \emph{finite} $\PmvC$ such that
  $\TxTS \in \mall{\PmvC}{\emptyset}^*$.
  Then, by monotonicity of $\mall{}{}$ it follows that
  \(
  \TxTS \in \mall{\PmvC}{\emptyset}^*
  \subseteq \mall{\PmvC}{\TxT}^*
  \).
  Let $\PmvC[0] = \PmvC \cap \PmvB$.  
  By the ``no shared secrets'' property of $\mall{}{}$:
  \begin{align*}
    \TxTS
    & \in \mall{\PmvC}{\TxT}^* \cap \mall{\PmvB}{\TxT}^*
    = (\mall{\PmvC}{\TxT} \cap \mall{\PmvB}{\TxT})^*
    \\
    & \subseteq \mall{\PmvC \cap \PmvB}{\TxT}^*
    = \mall{\PmvC[0]}{\TxT}^*
  \end{align*}

  By~\Cref{prop:gain}\ref{prop:gain:fin},
  there exists a finite $\PmvC[1] \subseteqfin \PmvB$ such that
  $\gain{\PmvB}{\sysS}{\TxTS} = \gain{\PmvC[1]}{\sysS}{\TxTS}$.
  Let $\PmvAfin = \PmvC[0] \cup \PmvC[1] \subseteq \PmvB$.
  We have that:
  \begin{align*}
    \gain{\PmvB}{\sysS}{\TxTS}
    & = \gain{\PmvC[1]}{\sysS}{\TxTS}
    = \gain{\PmvAfin}{\sysS}{\TxTS}
  \end{align*}
  Since $\TxTS \in \mall{\PmvC[0]}{\TxT}^*$,
  the monotonicity of $\mall{}{}$ implies 
  $\TxTS \in \mall{\PmvAfin}{\TxT}^*$.
  Summing up:
  \begin{align*}
    \mev{\PmvB}{\sysS}{\TxT}
    & \; = \; \gain{\PmvB}{\sysS}{\TxTS}
    % & = \max \gain{\PmvAfin}{\sysS}{\mall{\PmvAfin}{\TxT}^*} -
    % \max \gain{\PmvAfin}{\sysS}{\mall{\PmvAfin}{\emptyset}^*}
    \; = \; \gain{\PmvAfin}{\sysS}{\TxTS}
    \; = \; \mev{\PmvAfin}{\sysS}{\TxT}
    \tag*{\qed}
  \end{align*}
\end{proofof}

\begin{proofof}{th:mev:finite-tx}
  By~\Cref{def:mev}, we have that:
  \[
  \mev{\PmvA}{\sysS}{\TxT}
  \; = \;
  \gain{\PmvA}{\sysS}{\TxTS}
  \tag*{for some $\TxTS = \txT[1] \cdots \txT[n] \in \mall{\PmvA}{\TxT}^*$}
  \]
  By~\Cref{lem:mall:basic}\ref{lem:mall:finite-tx},
  for all $i \in 1..n$ there exists $\TxT[i] \subseteqfin \TxT$ such that
  $\txT[i] \in \mall{\PmvA}{\TxT[i]}$.
  Let $\TxTfin = \bigcup_{i \in 1..n} \TxT[i] \subseteqfin \TxT$.
  By monotonicity of $\mall{}{}$, we have
  $\txT[i] \in \mall{\PmvA}{\TxTfin}$ for all $i \in 1..n$.
  Therefore,
  $\TxTS \in \mall{\PmvA}{\TxTfin}^*$.
  \qed
\end{proofof}

%%%%%%%%%%%%%%%%%%%%%%%%%%%%%%%%%%%%%%%%%%%%%%%%%%%%%%%%%%%%%%%%%%%%%%%%%%%
%
% Proofs for Section sec:bad-mev
%
%%%%%%%%%%%%%%%%%%%%%%%%%%%%%%%%%%%%%%%%%%%%%%%%%%%%%%%%%%%%%%%%%%%%%%%%%%%

\subsection{Bad MEV (\Cref{sec:mev})}

\begin{prop}
  \label{prop:badmev:defined}
  % For all $\PmvA$, $\TxT$ and $\sysS$,
  $\badmev{\PmvA}{\sysS}{\TxT}$ is defined and has a non-negative value.
\end{prop}
\begin{proof}
    Immediate from~\Cref{prop:mev:defined}, \Cref{prop:mev:monotonic-on-tx}, and~\Cref{def:badmev}.
    \qed
\end{proof}

\begin{prop}
  \label{prop:badmev:cut}
  \(
  \badmev{\PmvA}{\sysS}{\TxT} = \badmev{\PmvA}{\sysS}{\TxT \setminus \mall{\PmvA}{\emptyset}}
  \)
\end{prop}
\begin{proof}
We have that:
\begin{align*}
  \badmev{\PmvA}{\sysS}{\TxT} 
  & = \mev{\PmvA}{\sysS}{\TxT} - \mev{\PmvA}{\sysS}{\emptyset}
  && \text{by~\Cref{def:badmev}}
  \\
  & = \mev{\PmvA}{\sysS}{\TxT \setminus \mall{\PmvA}{\emptyset}} - \mev{\PmvA}{\sysS}{\emptyset}
  && \text{by~\Cref{prop:mev:cut}}
  \\
  & = \badmev{\PmvA}{\sysS}{\TxT \setminus \mall{\PmvA}{\emptyset}}
  && \text{by~\Cref{def:badmev}}
  \tag*{\qedhere}
\end{align*}
\end{proof}

\begin{prop} % [Monotonicity]
  \label{prop:badmev:monotonic-on-tx}
  If $\TxT \subseteq \TxTi$, then \
  \(
  \badmev{\PmvA}{\sysS}{\TxT} \leq \badmev{\PmvA}{\sysS}{\TxTi}
  \).
\end{prop}
\begin{proof}
We have that:
\begin{align*}
  \badmev{\PmvA}{\sysS}{\TxT} 
  & = \mev{\PmvA}{\sysS}{\TxT} - \mev{\PmvA}{\sysS}{\emptyset}
  && \text{by~\Cref{def:badmev}}
  \\
  & \leq \mev{\PmvA}{\sysS}{\TxTi} - \mev{\PmvA}{\sysS}{\emptyset}
  && \text{by~\Cref{prop:mev:monotonic-on-tx}}
  \\
  & = \badmev{\PmvA}{\sysS}{\TxTi}
  && \text{by~\Cref{def:badmev}}
  \tag*{\qedhere}
\end{align*}
\end{proof}

% \begin{prop} % [Monotonicity on wallets]
%   \label{prop:badmev:monotonic-on-wal}  
%   Let $\sysS = (\WmvA,\contrC)$ and let
%   $\sysS[\Delta] = (\WmvA+\WmvA[\Delta],\contrC)$.
%   If the contract is wallet-monotonic, then \
%   \(
%   \badmev{\PmvA}{\sysS}{\TxT} \leq \badmev{\PmvA}{\sysS[\Delta]}{\TxT}
%   \).
% \end{prop}
% \begin{proof}
% We have that:
% \begin{align*}
%   \badmev{\PmvA}{\sysS}{\TxT}
%   & =
%   \mev{\PmvA}{\sysS}{\TxT} - \mev{\PmvA}{\sysS}{\emptyset}
%   && \text{by~\Cref{def:badmev}}
%   \\
%   & \leq 
%   \mev{\PmvA}{\sysS[\Delta]}{\TxT} - \mev{\PmvA}{\sysS}{\emptyset}
%   && \text{by~\Cref{prop:mev:monotonic-on-wal}}
%   \\
%   & = \barterror{NO} \badmev{\PmvA}{\sysS[\Delta]}{\TxT}
%   && \text{by~\Cref{def:badmev}}
% \end{align*}
% \end{proof}

\begin{prop} % [Actors finiteness]
  \label{th:badmev:finite-part}
  For all $\PmvA$, $\TxT$ and $\sysS$,
  there exists $\PmvAfin \subseteqfin \PmvA$
  such that, for all $\PmvB$, if $\PmvAfin \subseteq \PmvB \subseteq \PmvA$
  then: \
  \(
  \badmev{\PmvAfin}{\sysS}{\TxT} = \badmev{\PmvB}{\sysS}{\TxT}
  \).
\end{prop}
\begin{proof}
By~\Cref{th:mev:finite-part} (applied twice), there exist
$\PmvA[0] \subseteqfin \PmvA$
and
$\PmvA[1] \subseteqfin \PmvA$
such that, for all $\PmvB[0]$ and $\PmvB[1]$: 
\begin{align}
\label{eq:badmev:finite-part:1}
\PmvA[0] \subseteq \PmvB[0] \subseteq \PmvA    
& \implies
\mev{\PmvA[0]}{\sysS}{\TxT} = \mev{\PmvB[0]}{\sysS}{\TxT}
\\
\label{eq:badmev:finite-part:2}
\PmvA[1] \subseteq \PmvB[1] \subseteq \PmvA    
& \implies
\mev{\PmvA[1]}{\sysS}{\emptyset} = \mev{\PmvB[1]}{\sysS}{\emptyset}
\end{align}
Let $\PmvA[2] = \PmvA[0] \cup \PmvA[1]$, which is finite.
Then, for all $\PmvB$ such that $\PmvA[2] \subseteqfin \PmvB \subseteq \PmvA$:
  \begin{align*}
  \badmev{\PmvA}{\sysS}{\TxT}
  & =
  \mev{\PmvA}{\sysS}{\TxT} - \mev{\PmvA}{\sysS}{\emptyset}
  && \text{by~\Cref{def:badmev}}
  \\
  & = 
  \mev{\PmvB}{\sysS}{\TxT} - \mev{\PmvB}{\sysS}{\emptyset}
  && \text{by~\eqref{eq:badmev:finite-part:1},\eqref{eq:badmev:finite-part:2}}
  \\
  & = 
  \badmev{\PmvB}{\sysS}{\TxT}
  \tag*{\qedhere}
  \end{align*}
\end{proof}

\begin{prop} % [Mempool finiteness of bad MEV]
  \label{th:badmev:finite-tx}
  For all $\PmvA$, $\TxT$ and $\sysS$, there exists $\TxTfin \subseteqfin \TxT$ such that
  $\badmev{\PmvA}{\sysS}{\TxTfin} = \badmev{\PmvA}{\sysS}{\TxT}$.
\end{prop}
\begin{proof}
  By~\Cref{th:badmev:finite-tx}, there exists $\TxTfin \subseteqfin \TxT$ such that
  $\mev{\PmvA}{\sysS}{\TxTfin} = \mev{\PmvA}{\sysS}{\TxT}$.
  Therefore, by~\Cref{def:badmev}:
  \begin{align*}
  \badmev{\PmvA}{\sysS}{\TxT}
  & =
  \mev{\PmvA}{\sysS}{\TxT} - \mev{\PmvA}{\sysS}{\emptyset}
  \\
  & = 
  \mev{\PmvA}{\sysS}{\TxTfin} - \mev{\PmvA}{\sysS}{\emptyset}
  \\
  & = 
  \badmev{\PmvA}{\sysS}{\TxTfin}
  \tag*{\qedhere}
  \end{align*}
\end{proof}

\begin{proofof}{prop:badlem}
Consequence of
\Cref{prop:badmev:defined},
\Cref{prop:badmev:cut},
\Cref{prop:badmev:monotonic-on-tx}, 
\Cref{th:badmev:finite-part},
\Cref{th:badmev:finite-tx}.
\qed
\end{proofof}

%%%%%%%%%%%%%%%%%%%%%%%%%%%%%%%%%%%%%%%%%%%%%%%%%%%%%%%%%%%%%%%%%%%%%%%%%%%
%
% Proofs for Section sec:clusters
%
%%%%%%%%%%%%%%%%%%%%%%%%%%%%%%%%%%%%%%%%%%%%%%%%%%%%%%%%%%%%%%%%%%%%%%%%%%%

\subsection{Preservation under renaming}

In general, renaming actors in $\PmvA$ may affect $\mev{\PmvA}{\sysS}{\TxT}$.
However, in most real-world contracts there exist sets of actors 
that are \emph{indistinguishable} from each other when looking at their
interaction capabilities with the contract.
We formalise as \emph{clusters} these sets of actors, 
and we show in \Cref{th:mev:cluster-renaming} that MEV
is invariant \wrt renaming actors in a cluster. 

\Cref{def:renaming} formalises the effect of renaming
a set of actors $\PmvA$ on states.
We say that $\rho$ is a \emph{renaming of $\PmvA$}
if $\rho \in \PmvA \rightarrow \PmvA$ is a permutation.
Applying $\rho$ to a blockchain state $(\WmvA,\ContrS)$
only affects the wallet state, leaving the contract state unaltered.
The renamed wallet state $\WmvA\rho$ is defined as follows.

\begin{defn}
  \label{def:renaming}
  For a wallet state $\WmvA$ and renaming $\rho$ of $\PmvA$, let:
  \[
  (\WmvA \rho)(\pmvA) = \begin{cases}
    \WmvA (\rho^{-1}(\pmvA)) & \text{if $\pmvA \in \PmvA$} \\
    \WmvA(\pmvA) & \text{otherwise}
  \end{cases}
  \]
  We define the renaming of $(\WmvA,\ContrS)$ as
  $(\WmvA,\ContrS) \rho = (\WmvA \rho,\ContrS)$.
\end{defn}

As a basic example, for
\(
\WmvA =
\walu{\pmvA[0]}{0}{\tokT} \mid
\walu{\pmvA[1]}{1}{\tokT} \mid
\walu{\pmvA[2]}{2}{\tokT}
\)
and
\(
\rho = \setenum{\bind{\pmvA[0]}{\pmvA[1]}, \bind{\pmvA[1]}{\pmvA[2]}, \bind{\pmvA[2]}{\pmvA[0]}}
\)
is a renaming of $\setenum{\pmvA[0],\pmvA[1],\pmvA[2]}$, then
$\WmvA\rho = \walu{\pmvA[1]}{0}{\tokT} \mid \walu{\pmvA[2]}{1}{\tokT} \mid
  \walu{\pmvA[0]}{2}{\tokT}$.

Intuitively, a set $\PmvA$ is a cluster when, for each renaming $\rho$
of $\PmvA$ and each subset $\PmvB$ of $\PmvA$,
if $\PmvB$ can fire a sequence of transactions leading to a given wallet state,
then also $\rho(\PmvB)$ can achieve the same effect, up-to the renaming $\rho$.

\begin{defn}[Cluster of actors]
  \label{def:cluster}
  A set of actors $\PmvA \subseteq \PmvU$ is a
  \emph{cluster in $(\sysS, \TxT)$} 
  iff, for all renamings $\rho_0,\rho_1$ of $\PmvA$, and
  for all $\PmvB \subseteq \PmvA$, $\sysSi$, $\TxTS$,
  and $\TxY \subseteq \TxT$, if
  \[
  \sysS\rho_0 \xrightarrow{\TxTS} \sysSi
  \qquad \text{ with }
  \TxTS \in \mall{\rho_0(\PmvB)}{\TxY}^*
  \]
  then there exist $\TxTiS$ and $\sysRi$ such that:
  \[
  \sysS\rho_1 \xrightarrow{\TxTiS} \sysRi
  \; \text{with }
  \TxTiS \in \mall{\rho_1(\PmvB)}{\TxY}^*
  \text{ and }
  \wal{}{\sysRi} = \wal{}{\sysSi} \rho_0^{-1} \rho_1
  \]
\end{defn}

We establish below two basic properties of clusters.
First, any subset of a cluster is still a cluster;
second, if $\PmvA$ is a cluster in $\sysS$ then it is also
a cluster of any $\PmvA$-renaming of $\sysS$.

\begin{lem}
  \label{lem:cluster}
  Let $\PmvA$ be a cluster in $(\sysS,\TxT)$. Then:
  \begin{enumerate}[\rm(1)]
  \item \label{lem:cluster:subset}
    if $\PmvAi \subseteq \PmvA$ and $\TxTi \subseteq \TxT$,
    then $\PmvAi$ is a cluster in $(\sysS,\TxTi)$;
  \item \label{lem:cluster:rho} 
  for all renamings $\rho$ of $\PmvA$,
  $\PmvA$ is a cluster in $(\sysS\rho,\TxT)$.
  \end{enumerate}
\end{lem}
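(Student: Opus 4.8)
The plan is to unfold \Cref{def:cluster} in each case and reduce the goal to the cluster property already assumed for $\PmvA$ in $(\sysS,\TxP)$; no new conceptual ingredient is needed beyond that hypothesis, the work being entirely bookkeeping with compositions of renamings.

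For part~\ref{lem:cluster:subset}, assume $\PmvAi \subseteq \PmvA$ and $\TxPi \subseteq \TxP$, and let $\rho_0,\rho_1$ be renamings of $\PmvAi$, $\PmvB \subseteq \PmvAi$, $\TxY \subseteq \TxPi$, $\sysSi$, $\TxTS$ be such that $\sysS\rho_0 \xrightarrow{\TxTS}\sysSi$ and $\TxTS \in \mall{\rho_0(\PmvB)}{\TxY}^*$. First I would extend each $\rho_i$ to a renaming $\hat\rho_i$ of $\PmvA$ by letting it act as the identity on $\PmvA \setminus \PmvAi$. Since $\hat\rho_i$ agrees with $\rho_i$ on $\PmvAi$ and is the identity elsewhere, a direct check gives $\sysS\hat\rho_i = \sysS\rho_i$, $\hat\rho_i(\PmvB) = \rho_i(\PmvB)$ (using $\PmvB \subseteq \PmvAi$), and — because any permutation obtained by composing $\hat\rho_0^{\pm 1},\hat\rho_1^{\pm 1}$ restricts to the corresponding composition of $\rho_0^{\pm 1},\rho_1^{\pm 1}$ on $\PmvAi$ and is the identity outside — $\wal{}{\sysSi}\hat\rho_0^{-1}\hat\rho_1 = \wal{}{\sysSi}\rho_0^{-1}\rho_1$. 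Now $\TxY \subseteq \TxP$, so applying the cluster property of $\PmvA$ in $(\sysS,\TxP)$ to $\hat\rho_0,\hat\rho_1,\PmvB,\TxY$ yields $\TxTiS,\sysRi$ with $\sysS\hat\rho_1 \xrightarrow{\TxTiS}\sysRi$, $\TxTiS \in \mall{\hat\rho_1(\PmvB)}{\TxY}^*$ and $\wal{}{\sysRi} = \wal{}{\sysSi}\hat\rho_0^{-1}\hat\rho_1$; rewriting these three through the identities above produces exactly the witnesses required to establish that $\PmvAi$ is a cluster in $(\sysS,\TxPi)$.

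For part~\ref{lem:cluster:rho}, fix a renaming $\rho$ of $\PmvA$ and let $\rho_0,\rho_1$ be renamings of $\PmvA$, $\PmvB \subseteq \PmvA$, $\TxY \subseteq \TxP$, $\sysSi$, $\TxTS$ with $(\sysS\rho)\rho_0 \xrightarrow{\TxTS}\sysSi$ and $\TxTS \in \mall{\rho_0(\PmvB)}{\TxY}^*$. The key identity is $(\WmvA\rho)\rho_0 = \WmvA(\rho_0\circ\rho)$, so setting $\sigma_i \eqdef \rho_i\circ\rho$ — again renamings of $\PmvA$ — we have $\sysS\sigma_0 \xrightarrow{\TxTS}\sysSi$, and taking $\PmvBi \eqdef \rho^{-1}(\PmvB) \subseteq \PmvA$ we get $\sigma_0(\PmvBi) = \rho_0(\PmvB)$, hence $\TxTS \in \mall{\sigma_0(\PmvBi)}{\TxY}^*$. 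Applying the cluster property of $\PmvA$ in $(\sysS,\TxP)$ to $\sigma_0,\sigma_1,\PmvBi,\TxY$ then gives $\TxTiS,\sysRi$ with $\sysS\sigma_1 \xrightarrow{\TxTiS}\sysRi$, $\TxTiS \in \mall{\sigma_1(\PmvBi)}{\TxY}^*$ and $\wal{}{\sysRi} = \wal{}{\sysSi}\sigma_0^{-1}\sigma_1$. Translating back: $\sysS\sigma_1 = (\sysS\rho)\rho_1$, $\sigma_1(\PmvBi) = \rho_1(\PmvB)$, and — since $\sigma_1\circ\sigma_0^{-1} = \rho_1\circ\rho\circ\rho^{-1}\circ\rho_0^{-1} = \rho_1\circ\rho_0^{-1}$ — $\wal{}{\sysSi}\sigma_0^{-1}\sigma_1 = \wal{}{\sysSi}\rho_0^{-1}\rho_1$, which are precisely the witnesses needed for $\PmvA$ to be a cluster in $(\sysS\rho,\TxP)$.

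The main obstacle is purely notational: one has to keep straight the direction in which a renaming acts on a wallet state and the order of composition, so that the $\rho\circ\rho^{-1}$ cancellation in part~\ref{lem:cluster:rho} and the ``identity outside $\PmvAi$'' reasoning in part~\ref{lem:cluster:subset} line up with the exact form $\wal{}{\sysSi}\rho_0^{-1}\rho_1$ demanded by \Cref{def:cluster}. Once the convention is pinned down, both verifications are mechanical.
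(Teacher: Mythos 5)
Your proof is correct and follows essentially the same route as the paper's: part~(1) by extending the renamings of $\PmvAi$ to renamings of $\PmvA$ acting as the identity outside $\PmvAi$, and part~(2) by passing to $\rho_i \circ \rho$ and $\rho^{-1}(\PmvB)$ and cancelling $\rho \circ \rho^{-1}$ in the wallet condition. Your explicit check of the wallet equality in part~(1) is a minor elaboration of a step the paper leaves implicit; otherwise the two arguments coincide.
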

\begin{proof}
  For item~\ref{lem:cluster:subset}, 
  let $\PmvAi \subseteq \PmvA$,
  let $\TxTi \subseteq \TxT$, 
  and let $\rho_0,\rho_1$ be renamings of~$\PmvAi$. 
  For $i \in \setenum{0,1}$, let:
  \[
  \rho_i'
  \; = \;
  \lambda x. \textit{ if } x \in \PmvAi \textit{ then } \rho_i(x) \textit{ else } x
  \]
  Clearly, $\rho_i'$ is a renaming of $\PmvA$
  and $\sysS\rho_i = \sysS\rho_i'$ for $i \in \setenum{0,1}$.
  Let $\PmvB \subseteq \PmvAi$, $\TxY \subseteq \TxTi$, and $\sysSi$, $\TxTS$
  as in the hypotheses.
  Since $\PmvB \subseteq \PmvA$, $\TxY \subseteq \TxT$,
  and $\rho_i(\PmvB) = \rho_i'(\PmvB)$,
  the other premises remain true:
  hence, we can exploit the fact that $\PmvA$ is a cluster in $(\sysS,\TxT)$
  to prove that $\PmvAi$ is a cluster in $(\sysS,\TxTi)$.

  For item~\ref{lem:cluster:rho}, 	
  % Assume that $\PmvA$ is a cluster in $\sysS$,
  let $\rho,\rho_0,\rho_1$ be renamings of $\PmvA$.
  Note that also $\rho_i \circ \rho$ is a renaming of $\PmvA$ for $i \in \setenum{0,1}$.
  Let $\PmvB \subseteq \PmvA$,
  let $\TxY \subseteq \TxT$, and
  let $(\sysS\rho)\rho_0 \xrightarrow{\TxTS} \sysSi$,
  with $\TxTS \in \mall{\rho_0(\PmvB)}{\TxY}^*$.
  We have that $(\sysS\rho)\rho_0 = \sysS(\rho_0 \circ \rho)$ and,
  for $\PmvBi = \rho^{-1}(\PmvB)$, 
  $\TxTS \in \mall{(\rho_0 \circ \rho)(\PmvBi)}{\TxY}^*$.
  Let $\rho_0' = \rho_0 \circ \rho$ and $\rho_1' = \rho_1 \circ \rho$.
  Since $\PmvA$ is a cluster in $(\sysS,\TxT)$,
  then there exist $\TxTiS$ and $\sysRi$ such that:
  \begin{enumerate}
  \item $\sysS\rho_1' \xrightarrow{\TxTiS} \sysRi$
  \item $\TxTiS \in \mall{\rho_1'(\PmvBi)}{\TxY}^*$
  \item $\wal{}{\sysSi} (\rho_1' \circ \rho_0'^{-1})  = \wal{}{\sysRi}$.
  \end{enumerate}
  The thesis follows from:
  \begin{enumerate}
  \item $\sysS\rho_1' = \sysS (\rho_1 \circ \rho) = (\sysS \rho) \rho_1$
  \item $\rho_1'(\PmvBi) = (\rho_1 \circ \rho)(\rho^{-1}(\PmvB)) = \rho_1(\PmvB)$
  \item Since $\rho_0 = \rho_0' \circ \rho^{-1}$ and $\rho_1 = \rho_1' \circ \rho^{-1}$, then
    \begin{align*}
      \wal{}{\sysSi} (\rho_1 \circ \rho_0^{-1})
      & = \wal{}{\sysSi} ((\rho_1' \circ \rho^{-1}) \circ (\rho_0' \circ \rho^{-1})^{-1})
      \\
      & = \wal{}{\sysSi} ((\rho_1' \circ \rho^{-1}) \circ (\rho \circ \rho_0'^{-1}))
      \\
      & = \wal{}{\sysSi} (\rho_1' \circ \rho_0'^{-1})
      \\
      & = \wal{}{\sysRi}
      \tag*{\qedhere}
    \end{align*}
  \end{enumerate}  
\end{proof}

\begin{example}
  \label{ex:cluster:coinpusher:app}
  In the \txcode{CoinPusher} contract in~\Cref{fig:coinpusher},
  for any $\sysS$ and for
  $\TxT = \setenum{\txcode{play}(\txin{\pmvA[i]}{n_i}{\tokT})}_{i=1..n}$,
  we have that
  $\PmvC = \PmvU \setminus \setenum{\pmvA[i]}_{i=1..n}$ is a cluster in $(\sysS,\TxT)$.
  To see how a sequence $\TxTS$ is transformed into an equivalent $\TxTiS$,
  consider the case $n=1$, $\pmvA[1] = \pmvA$ and $n_1 = 99$.
  Let $\PmvU = \setenum{\pmvA[i]}_{i \in \Nat}$,
  let $\PmvB \subseteq \PmvC$, $\TxY \subseteq \TxT$,
  and let $\rho_0, \rho_1$ be renamings of $\PmvC$.
  Any $\TxTS \in \mall{\rho_0(\PmvB)}{\TxY}^*$
  is a sequence of transactions of the form
  \(
  \txcode{play}(\txin{\pmvB[i]}{v_i}{\tokT})
  \),
  with $\pmvB[i] \in \rho_0(\PmvB) \cup \setenum{\pmvA}$.
  We craft $\TxTiS \in \mall{\rho_1(\PmvB)}{\TxY}^*$ from $\TxTS$,
  by renaming with $\rho_0^{-1}\rho_1$ the actors in $\TxTS$.
  For instance, consider a state $\sysS$ where the contract has $0$ tokens,
  and:
  \[
  \sysS =
  \walu{\pmvA[0]}{99}{\tokT} \mid
  \walu{\pmvA[1]}{1}{\tokT} \mid
  \walu{\pmvA[2]}{0}{\tokT} \mid
  \walu{\pmvA[3]}{0}{\tokT} \mid \cdots
  \]
  Let
  \(
  \rho_0 = \setenum{\bind{\pmvA[0]}{\pmvA[1]}, \bind{\pmvA[1]}{\pmvA[2]}, \bind{\pmvA[2]}{\pmvA[0]}}
  \),
  let $\PmvB = \setenum{\pmvA[0],\pmvA[1]}$, and let
  \[
  \TxTS = \txcode{play}(\txin{\pmvA[1]}{99}{\tokT}) \; \txcode{play}(\txin{\pmvA[2]}{1}{\tokT})
  \in \mall{\rho_0(\PmvB)}{\TxY}
  \]
  Now, let
  \(
  \rho_1 = \setenum{\bind{\pmvA[0]}{\pmvA[3]}, \bind{\pmvA[1]}{\pmvA[0]}, \bind{\pmvA[3]}{\pmvA[1]}}
  \).
  We define $\TxTiS$ as:
  \[
  \TxTiS = \txcode{play}(\txin{\pmvA[3]}{99}{\tokT}) \; \txcode{play}(\txin{\pmvA[0]}{1}{\tokT})
  \in \mall{\rho_1(\PmvB)}{\TxY}
  \]
  Firing $\TxTS$ and $\TxTiS$, respectively,
  from $\sysS\rho_0$ and $\sysS\rho_1$ leads to:
  \begin{align*}
    \sysS\rho_0
    & \xrightarrow{\TxTS\;}
    \walu{\pmvA[0]}{0}{\tokT} \mid
    \walu{\pmvA[1]}{0}{\tokT} \mid
    \walu{\pmvA[2]}{100}{\tokT} \mid
    \walu{\pmvA[3]}{0}{\tokT} \mid \cdots    
    \\
    \sysS\rho_1
    & \xrightarrow{\TxTiS}
    \walu{\pmvA[0]}{100}{\tokT} \mid
    \walu{\pmvA[1]}{0}{\tokT} \mid
    \walu{\pmvA[2]}{0}{\tokT} \mid
    \walu{\pmvA[3]}{0}{\tokT} \mid \cdots        
  \end{align*}
  and the two states are equivalent up-to renaming.
  \hfill\qedex
\end{example}

\begin{figure}
  \begin{lstlisting}[language=txscript,morekeywords={DoubleAuth,init,auth1,auth2,withdraw},classoffset=3,morekeywords={a,b,A,B},keywordstyle=\pmvColor,classoffset=4,morekeywords={t,T},keywordstyle=\tokColor,frame=single]
contract DoubleAuth {
  init(a pays x:T) {
    require owner==null;
    owner=a;
  }  
  auth1(a sig) {
    require owner!=null && c1==null;
    require a==A || a==B || a==owner;
    c1=a;
  }  
  auth2(a sig) {
    require c1!=null && c2=null && c1!=a;
    require a==A || a==B || a==owner;
    c2=a;
  }  
  withdraw(a sig) {
    require c1!=null && c2!=null;
    transfer(a,balance(T):T);
  }
}
  \end{lstlisting}
  \negcaptionspace
  \caption{A contract requiring two authorizations.}
  \label{fig:doubleauth}
\end{figure}

\begin{example}
  \label{lem:doubleauth:cluster} 
  The contract \txcode{DoubleAuth} in~\Cref{fig:doubleauth}
  allows anyone to withdraw once 2 out of 3 actors
  among $\pmvA$, $\pmvB$ and the owner give their authorization.
  The contract has two hard-coded actors $\pmvA$ and $\pmvB$
  who have a privileged status (they can authorise the \txcode{withdraw}),
  and so cannot be replaced by others.
  Further, once \contract{DoubleAuth} has been initialised, also
  the owner (say, $\pmvO$) acquires the same privilege, and so 
  also $\pmvO$ is not replaceable by others.
  We prove that neither
  $\pmvA$ nor $\pmvB$ nor $\pmvO$ can belong to large enough clusters.

  Let $\sysS$ be the state reached upon firing
  $\txcode{init}(\txin{\pmvO}{1}{\tokT})$ from
  an initial state $\sysS[0]$ such that 
  $\wal{\pmvO}{\sysS[0]}= \idxfun{\tokT}$ and
  $\wal{\pmva}{\sysS[0]}(\tokt) = 0$ for all 
  $(\pmva,\tokt) \neq (\pmvO,\tokT)$.
  We prove that no cluster with size greater than 3 can include $\pmvA$, $\pmvB$ or $\pmvO$. 

  By contradiction, assume that $\PmvA$ is a cluster in $\sysS$ with $|\PmvA| > 3$ and $\pmvA \in \PmvA$.
  Let $\rho_0$ be the identity, and let 
  $\rho_1$ be such that $\rho_1(\pmvA) = \pmvM$,
  where $\pmvM \in \PmvA \setminus \setenum{\pmvA,\pmvB,\pmvO}$
  (such an $\pmvM$ always exists by the hypothesis on the size of $\PmvA$).
  Let $\PmvB = \setenum{\pmvA} \subseteq \PmvA$, and let
  $\TxTS = \txcode{auth1}(\txin{\pmvA}{0}{\tokT})$ $\txcode{auth2}(\txin{\pmvB}{0}{\tokT})$ $\txcode{withdraw}(\txin{\pmvA}{0}{\tokT})$
  be a sequence in $\mall{\PmvA}{\TxY}^*$
  for some $\TxY$ with $\auth{\TxY} \cap \PmvA = \emptyset$
  and $\pmvO \not\in \auth{\TxY}$
  (this can always be satisfied by choosing $\TxY = \emptyset$ if $\pmvB \in \PmvA$, and $\TxY = \setenum{\pmvB}$ otherwise).
  Since $\PmvA$ is a cluster in $\sysS$, there must exist
  $\TxTiS$ and $\sysRi$ satisfying~\Cref{def:cluster}. 
  Firing the \txcode{withdraw} requires $\auth{\TxY} \cup \rho_1(\PmvA)$ to contain at least two distinct actors among $\pmvA$, $\pmvB$ and $\pmvO$,
  while by the hypotheses above we have that
  $\auth{\TxY} \cup \rho_1(\PmvA) \subseteq \setenum{\pmvB,\pmvM}$.
  Therefore, $\TxTiS$ cannot contain a valid \txcode{withdraw},
  and so $\pmvM$ has no tokens in $\sysRi$.
  Instead, upon firing $\TxTS$ we have that 
  $\pmvA$ owns  $\waltok{1}{\tokT}$.
  This violates~\Cref{def:cluster},
  which requires $\wal{}{\sysRi} \pmvM = (\wal{}{\sysSi}\rho^{-1}) \pmvM = \wal{}{\sysSi} \rho(\pmvM) = \wal{}{\sysSi} \pmvA$.
  Proving that $\pmvB$ or $\pmvO$ cannot belong to clusters of size greater than 3 in $\sysS$ is done similarly.
  \hfill\qedex
\end{example}

%% \Cref{lem:cluster:txscript} instantiates the notion of cluster on \txscript.
%% Roughly,
%% % given a state $\sysS$ of a \txscript contract,
%% we can obtain cluster in $(\sysS,\TxT)$
%% by removing from the universe $\PmvU$ all the actors that are
%% hard-coded in the contract, or appear in the contract state
%% or in $\TxT$.
%% Since reachable states in \txscript
%% can only mention a finite number of actors,
%% this ensures that any state admits an infinite and co-finite cluster.
%% % \bartnote{In the abstract model, there exist contracts without infinite clusters}

%% \begin{lem}[Clusters in \txscript]
%%   \label{lem:cluster:txscript}
%%   Given a state $\sysS$ and a mempool $\TxT$
%%   of a \txscript contract $\contrC$,
%%   let $\PmvB$ be the set of actors that occur in $\TxT$,
%%   or in the contract state of $\sysS$,
%%   or are hard-coded in $\contrC$.
%%   Then, $\PmvU \setminus \PmvB$ is a cluster in $(\sysS,\TxT)$.
%% \end{lem}

\begin{example}
  \label{ex:cluster:coinpusher}
  Recall the contract \txcode{CoinPusher} in~\Cref{fig:coinpusher}.
  For any $\sysS$ and for
  $\TxT = \setenum{\txcode{play}(\txin{\pmvA[i]}{n_i}{\tokT})}_{i=1..n}$,
  we have that
  $\PmvC = \PmvU \setminus \setenum{\pmvA[i]}_{i=1..n}$
  is a cluster in $(\sysS,\TxT)$.
  \hfill\qedex
\end{example}

The following \namecref{th:mev:cluster-renaming} establishes that
renaming the actors in a cluster preserves their MEV.

\begin{thm}[Preservation under renaming]
  \label{th:mev:cluster-renaming}
  Let $\PmvA$ be a cluster in $(\sysS,\TxT)$
  and let $\rho$ be a renaming of $\PmvA$.
  For all~$\PmvB \subseteq \PmvA$:
  % and for all $\TxY \subseteq \TxT$:
  \[
  \mev{\PmvB}{\sysS}{\TxT} \; = \; \mev{\rho(\PmvB)}{\sysS \rho}{\TxT}
  \]
\end{thm}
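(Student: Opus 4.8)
The plan is to prove the equality by establishing a single inequality and then invoking a symmetry argument; the key tool is to transport an \emph{optimal} transaction sequence across the renaming with the help of the cluster hypothesis. For the symmetry reduction, observe that $\rho^{-1}$ is again a renaming of $\PmvA$, that $(\sysS\rho)\rho^{-1} = \sysS$ (a renaming touches neither the contract state nor the actors outside $\PmvA$, and $\rho^{-1}\circ\rho$ is the identity on $\PmvA$), that $\rho(\PmvB)\subseteq\PmvA$, and that by \Cref{lem:cluster}\ref{lem:cluster:rho} $\PmvA$ is also a cluster in $(\sysS\rho,\TxP)$. It therefore suffices to prove, for every state/mempool pair in which $\PmvA$ is a cluster, every renaming $\rho$ of $\PmvA$, and every $\PmvB\subseteq\PmvA$, the single inequality
\[
\mev{\PmvB}{\sysS}{\TxP} \;\le\; \mev{\rho(\PmvB)}{\sysS\rho}{\TxP};
\]
instantiating this statement with the cluster $\PmvA$ in $(\sysS\rho,\TxP)$, the renaming $\rho^{-1}$, and the subset $\rho(\PmvB)$ yields the reverse inequality.

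The first building block is a purely computational wallet lemma: for any wallet state $\WmvA$, renaming $\rho$ of $\PmvA$, and $\PmvB\subseteq\PmvA$, the aggregate wallets satisfy $(\WmvA\rho)(\rho(\PmvB)) = \WmvA(\PmvB)$, because $\rho$ restricts to a bijection $\PmvB\to\rho(\PmvB)$ and, by \Cref{def:renaming}, $(\WmvA\rho)(\rho(\pmvA)) = \WmvA(\pmvA)$ for each $\pmvA\in\PmvB$. Together with additivity of $\wealth{}{}$ and \Cref{cor:wealth:finite-sum}, this shows that the wealth of a subset $\PmvB\subseteq\PmvA$ in a blockchain state equals the wealth of $\rho(\PmvB)$ in the $\rho$-renamed state.

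The second and central building block is a gain-transfer lemma: if $\PmvA$ is a cluster in $(\sysS,\TxP)$, $\rho$ a renaming of $\PmvA$, $\PmvB\subseteq\PmvA$, $\TxY\subseteq\TxP$, and $\TxTS\in\mall{\PmvB}{\TxY}^*$ with $\sysS\xrightarrow{\TxTS}\sysSi$, then there exists $\TxTiS\in\mall{\rho(\PmvB)}{\TxY}^*$ with $\gain{\rho(\PmvB)}{\sysS\rho}{\TxTiS} = \gain{\PmvB}{\sysS}{\TxTS}$. I would obtain this by instantiating \Cref{def:cluster} with $\rho_0$ the identity renaming of $\PmvA$ and $\rho_1 = \rho$ — legal since the identity is a renaming of $\PmvA$, leaves $\sysS$ unchanged, and $\rho_0(\PmvB) = \PmvB$ — which produces $\sysS\rho\xrightarrow{\TxTiS}\sysRi$ with $\TxTiS\in\mall{\rho(\PmvB)}{\TxY}^*$ and $\wal{}{\sysRi} = \wal{}{\sysSi}\rho$. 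Applying the wallet lemma to $\sysS$ and to $\sysRi$ (using $\wal{}{\sysRi} = \wal{}{\sysSi}\rho$) gives $\wealth{\rho(\PmvB)}{\sysS\rho} = \wealth{\PmvB}{\sysS}$ and $\wealth{\rho(\PmvB)}{\sysRi} = \wealth{\PmvB}{\sysSi}$, so the two gains coincide by \Cref{def:gain}.

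Finally I would assemble the pieces. Applying the gain-transfer lemma with $\TxY = \emptyset$ (and, symmetrically, to the cluster $(\sysS\rho,\TxP)$ with $\rho^{-1}$) gives $\ugain{\rho(\PmvB)}{\sysS\rho} = \ugain{\PmvB}{\sysS}$, where $\wealth{}{}$-boundedness via \Cref{prop:ugain}\ref{prop:ugain:PmvA} ensures the defining maxima exist; hence $\netgain{}{}{}$ is preserved as well. To close the target inequality, use \Cref{prop:mev:defined} to choose $\TxTS\in\mall{\PmvB}{\TxP}^*$ attaining $\mev{\PmvB}{\sysS}{\TxP}$, apply the gain-transfer lemma with $\TxY = \TxP$ to obtain $\TxTiS\in\mall{\rho(\PmvB)}{\TxP}^*$ with the same gain, and compute
\[
\netgain{\rho(\PmvB)}{\sysS\rho}{\TxTiS} = \gain{\PmvB}{\sysS}{\TxTS} - \ugain{\PmvB}{\sysS} = \netgain{\PmvB}{\sysS}{\TxTS} = \mev{\PmvB}{\sysS}{\TxP};
\]
since $\TxTiS$ is a legal competitor in the maximum defining $\mev{\rho(\PmvB)}{\sysS\rho}{\TxP}$, this yields $\mev{\PmvB}{\sysS}{\TxP} \le \mev{\rho(\PmvB)}{\sysS\rho}{\TxP}$, and the symmetry step upgrades it to equality. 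I expect the main obstacle to be bookkeeping rather than depth: keeping straight the direction of the wallet permutation $\wal{}{\sysSi}\rho_0^{-1}\rho_1$, verifying that each instantiation of \Cref{def:cluster} is legitimate (the identity and $\rho^{-1}$ are renamings of $\PmvA$; $\emptyset,\TxP\subseteq\TxP$), and checking that every maximum invoked is actually attained.
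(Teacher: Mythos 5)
Your proposal is correct and follows essentially the same route as the paper's proof: reduce to one inequality by symmetry via \Cref{lem:cluster}\ref{lem:cluster:rho} and $\rho^{-1}$, then transfer an optimal sequence through \Cref{def:cluster} instantiated with the identity and $\rho$ (resp.\ $\rho^{-1}$ for the unrealized-gain side), using the wallet correspondence $\wal{\rho(\PmvB)}{\sysRi}=\wal{\PmvB}{\sysSi}$. The only differences are presentational (you isolate the wallet bookkeeping as an explicit lemma and state the unrealized-gain comparison as an equality, where the paper only needs one inequality per direction).
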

\begin{proof}
  Since $\PmvA$ is a cluster in $(\sysS,\TxT)$, then
  by~\Cref{lem:cluster}\ref{lem:cluster:rho},
  $\PmvA$ is also a cluster in $(\sysS\rho,\TxT)$.
  Hence, by symmetry it is enough to prove $\leq$ between MEVs,
  which is implied by:
  \begin{align}
    & \label{eq:mev:cluster-renaming:1}
    \max \gain{\PmvB}{\sysS}{\mall{\PmvB}{\TxT}^*}
    \leq
    \max \gain{\rho(\PmvB)}{\sysS\rho}{\mall{\rho(\PmvB)}{\TxT}^*}
  \end{align}
  To prove~\eqref{eq:mev:cluster-renaming:1},
  let $\sysS \xrightarrow{\TxTS} \sysSi$, where
  $\TxTS \in \mall{\PmvB}{\TxT}^*$ maximises the gain of $\PmvB$.
  Since $\PmvB \subseteq \PmvA$, by~\Cref{def:cluster}
  (choosing the identity function for the first renaming $\rho_0$ and $\rho$ for the second renaming $\rho_1$),
  % since $\TxT \subseteq \TxT$ then
  there exist $\TxTiS$ and $\sysRi$ such that
  $\sysS\rho \xrightarrow{\TxTiS} \sysRi$
  with $\TxTiS \in \mall{\rho(\PmvB)}{\TxT}^*$
  and $\wal{}{\sysSi} \rho = \wal{}{\sysRi}$.
  Then, $\wal{\rho(\PmvB)}{\sysRi} = \wal{\PmvB}{\sysSi}$.
  From this we obtain:
  \begin{align*}
    \max \gain{\PmvB}{\sysS}{\mall{\PmvB}{\TxT}^*}
    & = \gain{\PmvB}{\sysS}{\TxTS}
    \\
    & = \gain{\rho(\PmvB)}{\sysS\rho}{\TxTiS}
    \\
    & \leq \max \gain{\rho(\PmvB)}{\sysS\rho}{\mall{\rho(\PmvB)}{\TxT}^*}
    \tag*{\qedhere}
  \end{align*}
  % which proves \eqref{eq:mev:cluster-renaming:1} and concludes.
\end{proof}

% \bartnote{note that MEV is not preserved by transitions: it can increase, decrease, or stay constant}

%%%%%%%%%%%%%%%%%%%%%%%%%%%%%%%%%%%%%%%%%%%%%%%%%%%%%%%%%%%%%%%%%%%%%%%%%%%%%%%%
%
% Proofs for Section sec:mev-freedom
%
%%%%%%%%%%%%%%%%%%%%%%%%%%%%%%%%%%%%%%%%%%%%%%%%%%%%%%%%%%%%%%%%%%%%%%%%%%%%%%%%

The following~\namecref{lem:cluster-is-attacker} shows that
infinite clusters with a positive MEV are also MEV-attackers.
This is an important sanity check for our definition of MEV-attacker,
since actors in the same cluster can be renamed without affecting their MEV.
% \bartnote{the converse is not true in general}

\begin{prop}
  \label{lem:cluster-is-attacker}
  Let $\PmvA$ be an infinite cluster of $(\sysS,\TxT)$ such that
  $\mev{\PmvA}{\sysS}{\TxT} > 0$.
  Then, $\PmvA$ is a MEV-attacker in $(\sysS,\TxT)$.
\end{prop}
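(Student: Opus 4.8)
The plan is to unfold the definition of MEV-attacker (\Cref{def:adv} with $f=\mev{}{}{}$): I must exhibit, for \emph{every} infinite $\PmvB\subseteq\PmvA$, an $(\PmvA,\PmvB)$-wallet redistribution $\sysSi$ of $\sysS$ with $\mev{\PmvB}{\sysSi}{\TxP}>0$. The idea is to take $\sysSi$ of the special shape $\sysS\rho$ for a suitable renaming $\rho$ of $\PmvA$: this makes it easy both to check the wallet-redistribution conditions (a renaming only shuffles wallets inside $\PmvA$) and to compute $\mev{\PmvB}{\sysS\rho}{\TxP}$, since \Cref{th:mev:cluster-renaming} reduces it to a value of $\mev{}{}{}$ at a subset of $\PmvA$ in the original state.

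First I would fix a finite witness set. By \Cref{th:mev:finite-part} there is a finite $\PmvAfin\subseteqfin\PmvA$ with $\mev{\PmvAfin}{\sysS}{\TxP}=\mev{\PmvA}{\sysS}{\TxP}>0$, and such that $\mev{}{}{}$ takes this same value at every actor set lying between $\PmvAfin$ and $\PmvA$. Since only finitely many actors hold tokens in $\sysS$ (\Cref{prop:wal:finite-actors}), I may enlarge $\PmvAfin$ so that it additionally contains every token owner of $\sysS$ that lies in $\PmvA$; by the constancy just noted, this keeps $\mev{\PmvAfin}{\sysS}{\TxP}>0$ and keeps $\mev{}{}{}$ constant on $[\PmvAfin,\PmvA]$. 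Now let $\PmvB\subseteq\PmvA$ be infinite. Then $\PmvAfin\setminus\PmvB$ is finite while $\PmvB\setminus\PmvAfin$ is infinite, so I can let $\rho$ be a product of finitely many disjoint transpositions, each exchanging one actor of $\PmvAfin\setminus\PmvB$ with a distinct actor of $\PmvB\setminus\PmvAfin$; this $\rho$ is a renaming of $\PmvA$ with $\rho(\PmvAfin)\subseteq\PmvB$. Put $\PmvC=\rho^{-1}(\PmvB)$, so that $\PmvAfin\subseteq\PmvC\subseteq\PmvA$ and $\rho(\PmvC)=\PmvB$.

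It then remains to check two points. (i) \emph{$\sysS\rho$ is an $(\PmvA,\PmvB)$-wallet redistribution of $\sysS$.} Write $\wal{}{\sysS}=\WmvA[\PmvA]+\WmvA[\PmvC]$ for the restrictions of $\wal{}{\sysS}$ to $\PmvA$ and to $\PmvU\setminus\PmvA$; conditions \Cref{item:wallet-redistribution:A,item:wallet-redistribution:C} are then immediate (the latter using $\PmvB\subseteq\PmvA$). As $\rho$ fixes $\PmvU\setminus\PmvA$ pointwise, $\wal{}{\sysS\rho}=\WmvA[\PmvB]+\WmvA[\PmvC]$ where $\WmvA[\PmvB]$ is the restriction of $\wal{}{\sysS\rho}$ to $\PmvA$; an actor $\pmva\in\PmvA$ holds tokens in $\sysS\rho$ only if $\rho^{-1}(\pmva)$ holds tokens in $\sysS$, hence $\rho^{-1}(\pmva)\in\PmvAfin$ and thus $\pmva\in\rho(\PmvAfin)\subseteq\PmvB$, which is \Cref{item:wallet-redistribution:B}; condition \Cref{item:wallet-redistribution:tokens} holds since $\rho$ is a bijection on $\PmvA$, and $\sysS\rho$ shares the contract state of $\sysS$ by \Cref{def:renaming}. (ii) \emph{$\mev{\PmvB}{\sysS\rho}{\TxP}>0$.} Since $\PmvAfin\subseteq\PmvC\subseteq\PmvA$, constancy gives $\mev{\PmvC}{\sysS}{\TxP}=\mev{\PmvAfin}{\sysS}{\TxP}>0$; and since $\PmvA$ is a cluster in $(\sysS,\TxP)$ and $\rho$ a renaming of $\PmvA$, \Cref{th:mev:cluster-renaming} applied to $\PmvC\subseteq\PmvA$ yields $\mev{\PmvC}{\sysS}{\TxP}=\mev{\rho(\PmvC)}{\sysS\rho}{\TxP}=\mev{\PmvB}{\sysS\rho}{\TxP}$. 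Taking $\sysSi=\sysS\rho$ concludes, as $\PmvB$ was an arbitrary infinite subset of $\PmvA$.

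The genuinely delicate point is (ii). Because $\mev{}{}{}$ is \emph{not} monotone in the set of actors (\Cref{ex:mev:non-monotonic-on-PmvA}), knowing $\mev{\rho(\PmvAfin)}{\sysS\rho}{\TxP}>0$ together with $\rho(\PmvAfin)\subseteq\PmvB$ does not by itself entail $\mev{\PmvB}{\sysS\rho}{\TxP}>0$: the additional actors of $\PmvB$ could raise their unrealized gain and wash out the net gain. The fix, which is the crux of the proof, is to transport through $\rho$ not a single actor set but the whole interval $[\PmvAfin,\PmvA]$ on which \Cref{th:mev:finite-part} certifies $\mev{}{}{}$ constant, while arranging $\PmvB=\rho(\PmvC)$ for some $\PmvC$ inside that interval; the bookkeeping in (i) --- in particular forcing $\PmvAfin$ to absorb every token owner of $\sysS$ in $\PmvA$ --- serves only to guarantee that the renamed state $\sysS\rho$ is a legitimate redistribution.
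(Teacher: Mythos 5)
Your proof is correct and takes essentially the same route as the paper: the paper also combines \Cref{th:mev:finite-part} with \Cref{th:mev:cluster-renaming}, choosing a renaming $\rho$ of $\PmvA$ with $\PmvAfin \subseteq \rho(\PmvB)$ and $\PmvA \setminus \rho(\PmvB)$ token-free in $\sysS$, and taking $\sysS\rho^{-1}$ as the $(\PmvA,\PmvB)$-wallet redistribution. Your construction is the mirror image of this (you transport $\PmvAfin$ into $\PmvB$ via explicit transpositions instead of mapping $\PmvB$ onto a superset of $\PmvAfin$), and your enlargement of $\PmvAfin$ to absorb all token owners makes explicit the existence claim that the paper only asserts in passing.
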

\begin{proof}
  By~\Cref{th:mev:finite-part}, there exists some $\PmvAfin \subseteqfin \PmvA$
  such that $\mev{\PmvAfin}{\sysS}{\TxT} = \mev{\PmvB}{\sysS}{\TxT} = \mev{\PmvA}{\sysS}{\TxT} > 0$
  for all $\PmvAfin \subseteq \PmvB \subseteq \PmvA$.
  Let $\PmvB$ be an infinite subset of $\PmvA$.
  Take $\rho$ be any renaming of $\PmvA$ such that
  $\PmvAfin \subseteq \rho(\PmvB) \subseteq \PmvA$ and
  $\PmvA \setminus \rho(\PmvB)$ has no tokens in $\sysS$
  (note that such a renaming always exists, whether $\PmvB$ is cofinite or not).
  Then, $\mev{\rho(\PmvB)}{\sysS}{\TxT} > 0$.
  Since $\PmvA$ is a cluster and $\rho^{-1}$ is a renaming of $\PmvA$, 
  by~\Cref{th:mev:cluster-renaming} we have:
  \begin{align*}
    \mev{\PmvB}{\sysS \rho^{-1}}{\TxT}
    & = \mev{\rho^{-1}(\rho(\PmvB))}{\sysS \rho^{-1}}{\TxT}
    = \mev{\rho(\PmvB)}{\sysS}{\TxT} > 0
  \end{align*}
  To conclude, we note that
  $\sysS \rho^{-1}$ is a $(\PmvA,\PmvB)$-wallet redistribution of $\sysS$.
  Item 4 of~\Cref{def:wallet-redistribution} is trivial;
  for the other items, we must show that the tokens of $\PmvA$ in $\sysS$
  are tokens of $\PmvB$ in $\sysS \rho^{-1}$.
  In particular, it suffices to prove that
  $\PmvA \setminus \PmvB$ has no tokens in $\sysS \rho^{-1}$.
  By contradiction, let $\pmvA \in \PmvA \setminus \PmvB$ have tokens
  in $\sysS \rho^{-1}$.
  This implies that $\rho(\pmvA) \in \rho(\PmvA \setminus \PmvB) = \PmvA \setminus \rho(\PmvB)$
  must have tokens in $\sysS \rho^{-1} \rho = \sysS$.
  This contradicts the assumption that
  $\PmvA \setminus \rho(\PmvB)$ has no tokens in $\sysS$.
\end{proof}

\subsection{Universal MEV (\Cref{sec:adv-mev})}

\begin{proofof}{prop:adv-mev:monotonic-on-tx}
Consequence of~\Cref{prop:mev:monotonic-on-tx} and~\Cref{prop:badmev:monotonic-on-tx}.
  %% Let $\PmvBi$ and $\sysRi$ minimize
  %% $\mev{\PmvBi}{\sysRi}{\TxTi}$,
  %% where
  %% $\sysS,\PmvU \rhd \sysRi,\PmvBi$.
  %% Then, $\mev{}{\sysS}{\TxTi} = \mev{\PmvBi}{\sysRi}{\TxTi}$.
  %% Since $\TxT \subseteq \TxTi$, by monotonicity of MEV
  %% (\Cref{prop:mev:monotonic-on-tx}),
  %% $\mev{\PmvBi}{\sysRi}{\TxT} \leq \mev{\PmvBi}{\sysRi}{\TxTi}$.
  %% Since $\mev{\PmvBi}{\sysRi}{\TxT}$ is an element of the set
  %% \[
  %% \setcomp{\mev{\PmvB}{\sysSi}{\TxT}}{\waldistr{\PmvU}{\sysS}{\PmvB}{\sysSi}
  %%   \text{ and } \PmvB \text{ infinite}}
  %% \]
  %% it must be greater than or equal to the minimum,
  %% \ie, $\mev{}{\sysRi}{\TxT}$.
  %% Therefore, $\mev{}{\sysRi}{\TxT} \leq \mev{}{\sysRi}{\TxTi}$.
\end{proofof}

\begin{proofof}{prop:adv-mev:monotonic-on-wal}
  It is sufficient to prove that
  for all $\PmvA$ and for all $\waldistr{}{\sysS}{}{\sysSi}$,
  there exists some $\WmvAi[\Delta]$ and token redistribution
  $\waldistr{}{\sysS+\WmvA[\Delta]}{}{\sysSi+\WmvAi[\Delta]}$
  such that
  $\mev{\PmvA}{\sysSi}{\TxT} \leq \mev{\PmvA}{\sysSi+\WmvAi[\Delta]}{\TxT}$.
  It is easy to transform the first token redistribution into the second one:
  it suffices to arbitrarily reassign the tokens in $\WmvA[\Delta]$.
  % that yields $\mev{}{\sysS}{\TxT} = \mev{\PmvB}{\sysSi}{\TxT}$,
  By~\Cref{prop:adv-mev:monotonic-on-wal},
  $\mev{\PmvA}{\sysSi}{\TxT} \leq \mev{\PmvA}{\sysSi+\WmvAi[\Delta]}{\TxT}$.
\end{proofof}

}
{}

\end{document}